\documentclass[a4paper,USenglish,cleveref,thm-restate]{lipics-v2021}

\nolinenumbers

\let\tilde\widetilde

\usepackage{onlyamsmath}
\usepackage{csquotes}
\usepackage{amsmath,amsthm}
\usepackage{thmtools}
\usepackage{xspace}
\usepackage{microtype}
\usepackage{booktabs}
\usepackage{fbox}
\usepackage{mathtools}
\usepackage{bbm}
\usepackage{xcolor}
\usepackage{nicefrac}
\usepackage{tabularx}

\let\originalleft\left
\let\originalright\right
\renewcommand{\left}{\mathopen{}\mathclose\bgroup\originalleft}
\renewcommand{\right}{\aftergroup\egroup\originalright}

\makeatletter
\let\c@author\relax
\makeatother
\usepackage[
doi = false,
giveninits = true,
isbn = false,
sortcites = true,
style = numeric,
url = false,
maxnames = 99,
minnames = 99,
]{biblatex}

\DeclareFieldFormat{titlecase}{#1}

\AtEveryBibitem{
\clearfield{month}
\clearlist{location}
\ifentrytype{book}{%
}{
\clearlist{publisher}
}
\ifentrytype{inproceedings}{%
\clearfield{series}
}{
}
\clearname{editor}
}

\def\nobreakbefore{\relax
\ifvmode\else
\ifhmode
\ifdim\lastskip > 0pt\relax
\unskip\nobreakspace
\fi
\fi
\fi
}
\let\oldcite\cite
\renewcommand\cite{\nobreakbefore\oldcite}

\makeatletter
\def\?#1{}
\def\whp{w.h.p\@ifnextchar-{.}{\@ifnextchar.{.\?}{\@ifnextchar,{.}{\@ifnextchar){.}{\@ifnextchar:{.:\?}{.\ }}}}}}

\makeatother

\usepackage{listings}
\makeatletter
\lst@AddToHook{OnEmptyLine}{\addtocounter{lstnumber}{-1}\vspace{-0.5\baselineskip}}

\lst@Key{output}{}{\def\lst@output{#1}}

\let\orig@lst@MakeCaption=\lst@MakeCaption
\def\lst@MakeCaption#1{%
\orig@lst@MakeCaption#1%
\ifx b#1%
\ifx\lst@output\@empty\else
\noindent
\expandafter\lst@makeoutputbox\expandafter{\lst@output}%
\vskip\belowcaptionskip
\fi
\fi
}

\def\lst@makeoutputbox#1{%

\vspace{-1pt}\makebox[\linewidth]{\fbox[LBR]{\parbox{\linewidth}{\small output function: #1}}}%
}

\let\oldlstinputlisting\lstinputlisting
\renewcommand{\lstinputlisting}[2][]{%
\par\noindent
\begin{minipage}{\linewidth}
\oldlstinputlisting[#1]{#2}%
\end{minipage}%
\par
}

\lst@AddToHook{Init}{%
\nolinenumbers
}

\lst@AddToHook{DeInit}{%
\modulolinenumbers
}

\makeatother

\crefname{listing}{Protocol}{Protocols}
\crefname{observation}{Observation}{Observations}

\newcommand{\ProtName}[1]{{\normalfont\textsc{#1}}}
\newcommand{\FieldName}[2]{\operatorname{\normalfont\texttt{#1}}_{#2}}

\newcommand{\GenerateRandomBit}{\ProtName{GenerateRandomBit}}
\newcommand{\Rand}[1]{\FieldName{rand}{#1}}
\newcommand{\Color}[1]{\FieldName{color}{#1}}
\def\black{\textsc{Black}}
\def\white{\textsc{White}}
\newcommand{\Balance}{\ProtName{Balance}}

\newcommand{\logred}[1]{\FieldName{logRed}{#1}}
\newcommand{\blue}[1]{\FieldName{blue}{#1}}

\newcommand{\EstLogN}{\ProtName{Approximate}}

\newcommand{\level}[1]{\FieldName{level}{#1}}
\newcommand{\group}[1]{\FieldName{group}{#1}}
\newcommand{\bclevel}[1]{\FieldName{bcLevel}{#1}}
\newcommand{\bcgroup}[1]{\FieldName{bcGroup}{#1}}

\newcommand{\lex}{\mathrm{lex}}
\newcommand{\tupleset}{\mathcal{S}}
\newcommand{\succlex}{\mathrm{succ}_\lex}
\newcommand{\FMV}{\mathrm{FMV}}

\newcommand{\broadcasttime}{\mathbf{B}}
\newcommand{\loadbalancingtime}{\mathbf{L}}
\newcommand{\hittingtime}{\mathbf{H}}

\newcommand{\isleader}[1]{\FieldName{isLeader}{#1}}
\newcommand{\terminated}[1]{\FieldName{terminated}{#1}}

\newcommand{\CountingProt}{\ProtName{CountExact}}

\newcommand{\Baseline}{\ProtName{UniformCounting}}
\newcommand{\val}[1]{\FieldName{val}{#1}}
\newcommand{\isactive}[1]{\FieldName{active}{#1}}

\newcommand{\Ev}[1]{\mathcal{E}_{\mathrm{#1}}}

\newcommand{\N}{\mathbb{N}}
\newcommand{\UniformDistr}{\mathcal{U}}
\newcommand{\Geom}{\mathrm{Geom}}

\DeclareMathOperator{\disc}{disc}
\DeclareMathOperator{\polylog}{polylog}

\def\floor#1{\left\lfloor #1\right\rfloor}
\def\ceil#1{\left\lceil #1\right\rceil}

\newcommand{\1}{\mathbbm{1}}

\newcommand{\pr}[1]{\ensuremath{\mathbf{Pr}\left[#1\right]}}
\newcommand{\E}[1]{\ensuremath{\mathbf{E}\left[#1\right]}}

\newcommand{\poly}[1]{\ensuremath{\mathrm{poly}\left(#1\right)}}

\newcommand{\mP}{\mathbf{P}}

\usepackage{xcolor}

\DeclarePairedDelimiter\abs{\lvert}{\rvert}
\DeclarePairedDelimiter\norm{\lVert}{\rVert}
\DeclarePairedDelimiterX{\inp}[2]{\langle}{\rangle}{#1, #2}

\makeatletter
\let\oldabs\abs
\def\abs{\@ifstar{\oldabs}{\oldabs*}}
\makeatother

\makeatletter
\let\oldnorm\norm
\def\norm{\@ifstar{\oldnorm}{\oldnorm*}}
\makeatother

\makeatletter
\let\oldinp\inp
\def\inp{\@ifstar{\oldinp}{\oldinp*}}
\makeatother

\newcommand{\LBT}{\ensuremath{\loadbalancingtime(G)}}
\newcommand{\BN}{\ensuremath{n \cdot 2^{b(\bclevel{})}}}
\newcommand{\PoofBound}{\ensuremath{2 \cdot 2^{b(\bclevel{})} + 3}}
\newcommand{\RedToBlue}{2^{b(\bclevel{})}}

\title{Counting in Population Protocols on Graphs} 

\author{Petra Berenbrink}{University of Hamburg, Germany}{petra.berenbrink@uni-hamburg.de}{https://orcid.org/0000-0002-6930-3259}{}
\author{Robert Elsässer}{University of Salzburg, Austria}{robert.elsaesser@plus.ac.at}{https://orcid.org/0000-0002-5766-8103}{}
\author{Tom Friedetzky}{Durham University, U.K.}{tom.friedetzy@durham.ac.uk}{https://orcid.org/0000-0002-1299-5514}{}
\author{Thorsten Götte}{University of Hamburg, Germany}{thorsten.goette@uni-hamburg.de}{https://orcid.org/0000-0001-9798-6993}{}
\author{Lukas Hintze}{University of Hamburg, Germany}{lukas.rasmus.hintze@uni-hamburg.de}{https://orcid.org/0009-0006-8348-4638}{}
\author{Dominik Kaaser}{Hamburg University of Technology, Germany}{dominik.kaaser@tuhh.de}{https://orcid.org/0000-0002-2083-7145}{}

\authorrunning{P.\ Berenbrink, R.\ Elsässer, T.\ Friedetzky, T.\ Götte, L.\ Hintze, and D.\ Kaaser}

\hideLIPIcs

\makeatletter
\def\authorrunning#1{%
\gdef\@authorrunning{\markright{\ifx\authoranonymous\relax\textcolor{black}{Anonymous author(s)} \else\if!#1!\textcolor{red}{Author: Please fill in the \string\authorrunning\space macro}\else#1\fi\fi}}}
\makeatother

\makeatletter
\newcommand{\DeclareMathActive}[2]{%
\expandafter\edef\csname keep@#1@code\endcsname{\mathchar\the\mathcode`#1 }
\begingroup\lccode`~=`#1\relax
\lowercase{\endgroup\def~}{#2}%
\AtBeginDocument{\mathcode`#1="8000}%
}

\newcommand{\std}[1]{\csname keep@#1@code\endcsname}
\patchcmd{\newmcodes@}{\mathcode`\-\relax}{\std@minuscode\relax}{}{\ddt}
\AtBeginDocument{\edef\std@minuscode{\the\mathcode`-}}

\DeclareMathActive{O}{\operatorname{\std{O}}}
\DeclareMathActive{o}{\@ifnextchar{(}{\operatorname{\std{o}}}{\std{o}}}
\makeatother

\def\paragraph#1{\subparagraph*{#1.}}

\newcommand{\Sample}{\ProtName{SampleLevelAndGroup}}

\def\ttrue{\text{\textsc{True}}}
\def\tfalse{\text{\textsc{False}}}
\newcommand{\issampling}[1]{\FieldName{isSampling}{#1}}
\newcommand{\groupbitssampled}[1]{\FieldName{groupBitsSampled}{#1}}

\newcommand{\rand}{\Rand}

\newcommand{\leveldone}[1]{\FieldName{levelDone}{#1}}

\newcommand{\TokenPos}[2]{W_{#1}^{(#2)}}
\newcommand{\vTokenPos}[1]{\vec{W}^{(#1)}}

\newcommand{\matMix}[2]{\mathbf{M}^{[#1,#2]}}
\newcommand{\mMix}[4]{\matMix{#3}{#4}_{#1,#2}}

\newcommand{\filter}[1]{\mathcal{F}^{(#1)}}

\newcommand{\iprob}[2]{P_{#1}^{(#2)}}
\newcommand{\jprob}[2]{Q_{#1}^{(#2)}}

\def\dcmGobble{}
\usepackage{silence}
\begin{document}

{\def\addcontentsline#1#2#3{}
\maketitle

\begin{abstract}
We consider the problem of counting the number of agents in a population protocol where the agents are connected by an underlying graph $G=(V,E)$ with $|V|=n$ nodes.
In each step, a random scheduler selects an edge uniformly at random, and the incident nodes make a state transition.
As per standard assumptions, agents are identical and anonymous, that is, have no identifiers.
To break symmetry, in each interaction one of the agents is declared as the initiator uniformly at random.
Our size counting protocol uses $\tilde O(n)$ states and stabilizes in $O( \broadcasttime(G) \cdot \log^2(n) + \loadbalancingtime(G) \cdot \log(n))$ interactions with high probability, where $\broadcasttime(G)$ is the broadcast time and $\loadbalancingtime(G)$ is the load balancing time.
Our protocol is based on novel protocols for sampling independent random bits (given that the scheduler determines an initiator and responder) and approximating $\log n$ up to an additive error of $O(\log \log n)$ with high probability.
The latter uses $O(\polylog(n))$ states and $O(\broadcasttime(G)\cdot\log^2 n)$ interactions.
Both results may be of independent interest.
The main protocol for exact counting requires the presence of a unique leader, the other two do not.
None of the protocols requires any knowledge about the graph $G$.
We conclude with impossibility results for terminating uniform population protocols that compute graph-size properties (like counting nodes or determining parity) with and without a leader.
\end{abstract}

\newpage
\enlargethispage{1cm}
{\tableofcontents}

}
\pagebreak

\section{Introduction}

The population protocol model \cite{DBLP:journals/dc/AngluinADFP06} is one of the most basic and fundamental models of distributed computing.
\Textcite{DBLP:journals/dc/AngluinADFP06} provide a comprehensive collection of examples that illustrate applications.
Population protocols are closely related to chemical reaction networks \cite{DBLP:journals/nc/SoloveichikCWB08}, and they model processes occurring naturally in life sciences \cite{liggett1985interacting,chen2013,cardelli2012}.
In classical population protocols, each agent is in one out of a fixed number of states. A probabilistic scheduler selects in each step a pair of agents independently and uniformly at random. One of the agents is called initiator and the other one the responder. These agents perform a state transition that depends only on their states.
In our paper, we assume that interactions are restricted to edges of an underlying graph $G = (V, E)$, where $V$ is the set of agents. We assume that in each step the scheduler randomly chooses one edge and determines randomly which of the endpoints acts as the initiator and which one as the responder.
Agents are anonymous and have neither identifiers nor knowledge of~$G$.

A large body of recent work (e.g., \cite{DBLP:conf/soda/AlistarhAEGR17, DBLP:conf/soda/AlistarhAG18,DBLP:conf/wdag/BerenbrinkEFKKR18, DBLP:conf/focs/DotyEGSUS21,DBLP:conf/stoc/BerenbrinkGK20,doi:10.1137/1.9781611978971.24,DBLP:journals/dc/AlistarhRV25})
proposes so-called \emph{non-uniform} protocols where the transition rules depend on the number of agents.
In biologically-inspired applications, however, this knowledge may not be available, and even polynomial approximations may not be feasible. In this context, size counting protocols \cite{DBLP:conf/podc/BerenbrinkKR19,DBLP:conf/podc/DotyE19,DBLP:journals/tcs/DotyE21,DBLP:conf/sand/DotyE22,DBLP:conf/podc/KaaserL24} can be used to count the number of agents before running a non-uniform protocol.
However, all the protocols in these papers assume that the underlying interaction graph is the complete graph.

Our main result is a novel population protocol to count the number of agents of the underlying (arbitrary) graph.
Its time complexity depends on bounds for standard broadcasting and load balancing on the underlying graph.
Its state complexity is asymptotically nearly optimal at $\tilde{\Theta}(n)$ with high probability (\emph{w.h.p.}, with probability at least $1 - n^{-\Omega(1)}$).
The protocol is \emph{graph-uniform}, i.e., neither its state space nor its transition function depend on $G$ (and thus it does not depend on $n$ either).
Informally, this means that the protocol works as advertised (from the initial configuration) without agents having to know anything about the graph.
The only hard requirement is the presence of a leader.

The high-level strategy is to first approximate $\log n$ to within a factor of $1 \pm o(1)$ w.h.p.\ using polylogarithmically many states. This part does not yet require the leader, and we believe this subroutine to be of independent interest.
Then we distribute $\tilde{\Theta}(n^2)$ tokens across the network and repeatedly balance that load until the system naturally converges
to a configuration where the loads of different agents differ by at most an additive constant.
From this, agents can directly output the exact value of $n$.
This technique was already used for exact counting in complete graphs (see \cite{DBLP:conf/wdag/DotyEMST18,DBLP:conf/podc/BerenbrinkKR19}) and crucially requires $\Omega(n^2)$ tokens to be balanced.

Our main technical contribution lies in our space-efficient implementation of the balancing: a na\"ive protocol would start with $\Omega(n^2)$ tokens located at the leader, and in each step two interacting agents would balance their loads.
This approach requires $\Omega(n^2)$ states to store the number of tokens.
To save states, we instead employ two types of tokens: \emph{red} and \emph{blue}.
The leader starts with $\Theta(n)$ \emph{red} tokens that are balanced (in a certain way).
Whenever an agent holds exactly one red token, it converts it into $\Theta(n)$ \emph{blue} tokens.
These blue tokens are then balanced using the na\"ive approach and are used to compute $n$.
As we ensure that each red token is (eventually) converted, the total number of blue tokens is (eventually) $\Omega(n^2)$ and thus large enough.
The crux of our protocol is the space-efficient balancing of red tokens:
Our protocol ensures that the number of red tokens on each agent is always a \emph{power of two}, so this number can be encoded using only $O(\log n)$ states instead of $n$ states.
This, however, entails that two agents holding red tokens cannot necessarily balance them.
The main difficulty in the analysis is overcoming the subtle dependencies between red tokens caused by our balancing protocol and showing that they indeed balance quickly.

\medskip

The stabilization times of our protocols depend heavily on properties of the underlying graphs.
The ``native'' representation of our results is in terms of the broadcast time $\broadcasttime(G)$ and the load balancing time $\loadbalancingtime(G)$ of the underlying graph. The former is the expected time it takes to send a message from an arbitrary node to every other node, assuming that in each step the message can be sent over a randomly chosen edge of $G$.
The latter is the time it takes to balance a polynomial number of arbitrarily distributed discrete tokens until every node has (almost) the same load.
In detail, our contributions are as follows.
\begin{enumerate}
\item An auxiliary protocol that estimates $\log(n)$ up to an additive error of $\log \log(n)$ in $O(\broadcasttime(G)\cdot \log^2 n)$ interactions using $O(\log^4(n))$ states w.h.p. This protocol does not need any knowledge of the graph nor does it require a leader.

\item Another auxiliary protocol for generating fair and independent random bits for population protocols on graphs.
This protocol also does not need any knowledge of the graph nor does it require a leader.
Instead, the protocol relies on the random initiator and responder choices by the scheduler.

\item As our main result, we present a protocol that counts the \emph{exact} population size in $O(\broadcasttime(G)\cdot \log^2 n + \loadbalancingtime(G) \cdot \log(n))$ interactions using $\tilde O(n)$ states w.h.p.
By \emph{counting} we mean that the population stabilizes to a configuration where each agent can derive the exact number of agents $n$ from its state.
Our protocol requires a leader, but does not use any information on the underlying graph.
In particular, neither $\broadcasttime(G)$ nor $\loadbalancingtime(G)$ are known or used by the protocol.

\item We prove that certain graph properties (e.g., size or parity) cannot reliably be computed by terminating population protocols, regardless of whether a leader is present or not.
In particular, this means that the fact that our exact counting protocol only stabilizes (and does not detect termination) is inevitable.
\end{enumerate}

\subsection{Model}

We consider a fixed undirected graph $G = (V,E)$ with $|V| = n$ agents and $m$ edges.
We define $\Delta(G)$, $\delta(G)$ and $d(G)$ as $G$'s maximum, minimum, and average node degree, respectively.
The state space of the agents is called $Q$ and each agent is initially in state $q_0\in Q$. An \emph{output function} $\rho \colon Q \rightarrow \Sigma^*$ maps states to an output domain.
The computation proceeds in discrete steps $t = 1,2, \ldots$\, In each step $t$, a stochastic scheduler picks an edge $(u, v)$ independently and uniformly at random with probability $1/(|E|)$ and then decides which of the nodes is the initiator and which one the responder. Note that this can be regarded as picking a directed edge. $u$ and $v$ observe each other's states and update their states according to a transition function $\Pi\colon Q\times Q \to Q\times Q$.

We call a population protocol \emph{(graph-)uniform} if neither its transition function~$\Pi$ nor the output function $\rho$ depend on (any parameters of) the underlying graph~$G$.
The size of the total state space $Q$ of our protocols may be viewed as unbounded (and must be for counting). When we say that a protocol uses $O(f(n))$ states w.h.p., we mean that there are state subsets $Q_1 \subseteq Q_2 \subseteq \ldots \subseteq Q$ with $|Q_n| = O(f(n))$ so that when the protocol is run on a graph of size $n$, all states used are in $Q_n$ w.h.p.
Alternatively, one could define all this in terms of the agents being multi-tape Turing machines, see \cite{DBLP:conf/podc/DotyE19}.

A configuration of the population protocol assigns each node a state.
We call a configuration $C$ \emph{stable} if no sequence of interactions exists that can change the future output $\rho(C(u))$ of any agent $u$.
The time complexity is measured by the so-called \emph{stabilization time}, which is the total number of interactions needed to reach a stable configuration.

In our results, we quantify the stabilization time in terms of the \emph{broadcast time} and the \emph{load balancing} time $\loadbalancingtime(G)$.
Let $T(v)$ be a random variable for the number of steps a broadcast (also called one-way epidemics \cite{DBLP:journals/dc/AngluinAE08a}) started at node $v$ needs to inform the entire graph.
We use the same definition as \textcite{DBLP:journals/dc/AlistarhRV25} and
define $\broadcasttime(G) = \max_{v} \{ \E{T(v)} \}$ to be the \emph{worst-case (over $v$) expected broadcast time} on $G$.

\Textcite{DBLP:journals/dc/AlistarhRV25} prove that $\broadcasttime(G)$ is the minimum of $O(mD + m\log n)$ and $O(m/\beta \cdot \log n)$ for any graph with diameter $D$ and edge expansion $\beta$.
Since $m \leq n^2$ and $D,\beta^{-1} \leq n$
we have $\broadcasttime(G) \in O(n^3)$ for any connected graph $G$. However, $\broadcasttime(G)$ can be as small as $O(n\log n)$, e.g., on a regular expander.

To define the load balancing time we first need some additional definitions. Let $\mP = \mP(G)$ be the matrix given by $\mP_{u,v} =
{1}/({2\Delta(G)})$ if $\{u, v\} \in E(G)$,
$\mP_{u,v} = 1 - {\deg(u)}/({2\Delta(G)})$ if $u = v$, and
$\mP_{u,v} = 0$ otherwise.

Let $\lambda_1(\mP) \geq \lambda_2(\mP) \geq \cdots \geq \lambda_n(\mP)$ be the $n$ eigenvalues of $\mP$, and define $\lambda(\mP) = \max\{|\lambda_2(\mP)|, |\lambda_n(\mP)|\}$.
Then, by \cite{sodaPaper}, it holds that
\[
\loadbalancingtime(G) = O\left(n \log(n) \cdot \frac{d(G)}{\Delta(G)}\cdot\frac{1}{1-\lambda(\mP(G))}\right).\]

As the spectral gap $(1-\lambda(\mP(G)))$ is bounded by $\Omega(n^{-3})$ we have $\loadbalancingtime(G) \in O(n^4 \log n)$ for any connected graph $G$. However, $\loadbalancingtime(G)$ can be as small as $O(n\log n)$, e.g., on a regular expander.
Finally, we require the notion of a graph's \emph{hitting time}.
For a graph $G=(V,E)$ and nodes $u,v\in V$, the hitting time $\hittingtime(u,v)$ from $u$ to $v$ is the expected time it takes a simple random walk on $G$, started at $u$, to visit $v$ for the first time. The hitting time $\hittingtime(G)$ of $G$ is then the maximal hitting time over all pairs $u,v \in V$.
It is known that the hitting time of any simple connected graph $G$ is at most $O(n^3)$ {\cite{levinmarkov}.}
We give some known upper bounds for $\broadcasttime(G)$, $\loadbalancingtime(G)$, and $\hittingtime(G)$ for some important graph classes in \cref{tab:blh_overview}.

\begin{table}
\caption{$\broadcasttime(G)$, $\loadbalancingtime(G)$, and $\hittingtime(G)$ for certain graph classes. Note that $\broadcasttime(G)$ and $\loadbalancingtime(G)$ are properties of population protocols on graphs while $\hittingtime(G)$ only depends on the graph. Note that $(1-\lambda(\mP(G)))$ denotes the spectral gap.}
\label{tab:blh_overview}
\centering
\begin{tabularx}{\textwidth}{llllX}
\toprule
\bfseries Graph Class & \bfseries $\broadcasttime(G)$ & \bfseries $\loadbalancingtime(G)$ & \bfseries $\hittingtime(G)$ & \\
\midrule
Clique & $O(n\log n)$& $O(n\log n)$ & $O(n)$ \\
Regular graphs & $O((\frac{n}{1-\lambda(\mP(G))})\log n)$ & $O((\frac{n}{1-\lambda(\mP(G))})\log n)$ & $O(\frac{n}{\sqrt{1-\lambda(\mP(G))}})$ \\
Cycle & $O(n^2\log n)$ & $O(n^3 \log n)$ & $O(n^2)$\\
$d$-dim. Torus ($d=2$) & $O(n^{3/2}\log n)$ & $O(n^2\log n)$ & $O(n \log n)$\\
$d$-dim. Torus ($d>2$) & $O(n \cdot d \cdot n^{1/d} \cdot \log n)$ & $O(n \cdot d \cdot n^{2/d} \cdot \log n)$ & $O(n)$\\
Worst Case & $O(n^{3})$ & $O(n^{4}\log n)$ & $O(n^{3})$ & \\
\bottomrule
\end{tabularx}
\end{table}

\subsection{Background and Related Work}

\paragraph{General Results for Population Protocols on Graphs}
\Textcite{DBLP:conf/opodis/AlistarhGR21} show how to simulate an arbitrary population protocol on a graph in the standard model (using a complete graph as the underlying graph). The simulation overhead is polylogarithmic
in the number of nodes, and quadratic in the conductance of the graph. The authors assume that a bound on the conductance is known in advance.
\Textcite{DBLP:conf/podc/ChenC20} present two self-stabilizing leader election protocols for $k$-regular graphs. The state complexity is polynomial in $k$ and they do not provide any bounds on the convergence time.
\Textcite{DBLP:conf/opodis/BeauquierBB13} present a constant state leader election protocol for general graphs. With the help of an oracle, this protocol can be made self-stabilizing. They focus on necessary conditions for self-stabilization and do not bound the convergence time.

\Textcite{DBLP:journals/dc/AlistarhRV25} consider leader election in population protocols on graphs.
They present several upper and lower bounds for the problem. For a graph $G$ they express their results in terms of the broadcast time $\broadcasttime(G)$ as defined before.
They present a space-efficient protocol using $O(\log n \cdot h(G))$ states where $h(G)\in O(\log n)$ is a parameter depending on the broadcast time. The expected convergence time is $O(\broadcasttime(G)\cdot \log n)$.
In that paper they also present a phase clock using $h+1$ states that triggers a clock tick every $\Theta(2^h)$ interactions.
We give an overview of leader election protocols in \cref{tab:le_overview}.

\Textcite{doi:10.1137/1.9781611978971.24} give upper and lower bounds on the majority problem in population protocols on graphs.
The upper bounds are expressed by the inverse of the spectral gap of the Markov chain
induced by the random scheduler on $G$
and the degree imbalance (defined as maximum divided by the minimum degree). For expander graphs their result matches the best result for complete graphs
by \textcite{DBLP:conf/focs/DotyEGSUS21}. Their result depends on an efficient phase clock implementation for graphs. For their phase clock they need graph parameters like the size of the graph or its relaxation time.

\smallskip

There are many results for the generation of random bits, approximate counting and counting in the standard population model where the nodes are connected by a complete graph. In the following, we present results for this classical model.

\paragraph{Approximate Counting}
One well-known way to approximate $\log n$ is to consider $n$ geometrically distributed random variables $X_1,\ldots, X_n$ with parameter $1/2$.
Their maximum $\max\{X_1,\ldots, X_n\}$ then approximates $\log n$ to within a constant factor.
This approach is easily realized by a population protocol \cite{DBLP:conf/soda/AlistarhAEGR17}.
\Textcite{DBLP:conf/podc/DotyE19} improve this result by repeating the experiment $O(\log n)$ times. Taking the average of $O(\log n)$ maxima then results in an approximation of $\log(n) \pm 5.7$.
Their algorithm requires $O(n \log^2n)$ interactions and $O(\log^4(n))$ states.
However, to obtain a polylogarithmic state space size, their algorithm relies on phase clocks to synchronize the agents. This requires the interaction frequencies of different agents to be the same (up to a multiplicative constant), and therefore is not directly transferable to population protocols on graphs.

\begin{table}
\caption{An overview of leader election algorithms for population protocols on graphs.}
\label{tab:le_overview}
\centering
\begin{tabularx}{\textwidth}{llllX}
\toprule
\bfseries Result & \bfseries Runtime & \bfseries Runtime(WC) & \bfseries States & \bfseries Comment \\
\midrule
\cite[Theorem 3]{DBLP:journals/dc/AlistarhRV25} & $O(\hittingtime(G) \cdot n\log n)$ & $O(n^4 \log n)$ & $O(1)$ & uniform, based on \cite{DBLP:conf/opodis/BeauquierBB13}. \\
\cite[Theorem 4]{DBLP:journals/dc/AlistarhRV25} & $O(\broadcasttime(G) + n\log n)$& $O(n^3)$ & $O(n^4)$ & uniform. \\
\cite[Theorem 5]{DBLP:journals/dc/AlistarhRV25} & $O(2^h\log n)$ & $O(n^3 \log n)$& $O(h\cdot\log n)$ & $h$ known, $2^h \geq \broadcasttime(G)$ \\
\bottomrule
\end{tabularx}
\end{table}

\Textcite{DBLP:conf/podc/BerenbrinkKR19} present an approximate counting protocol that computes $\lfloor \log n \rfloor$ or $\lceil \log n \rceil$.
Their protocol first elects a leader. The leader starts with one token.
In alternating phases, the tokens are doubled and balanced using load balancing. The first time there exists a node with two tokens after a balancing phase, the number of doubling steps is used as an approximation of $\log n$.
In a similar setting, \textcite{DBLP:journals/jacm/KowalskiM20} use load balancing for counting in the dynamic networks model.
Note that the dynamic network model is inherently synchronized, while the protocol from \cite{DBLP:conf/podc/BerenbrinkKR19} relies heavily on synchronizing the agents via the phase clock from \cite{DBLP:journals/dc/AngluinAE08a}.

\Textcite{DBLP:conf/sand/DotyE22,DBLP:conf/podc/KaaserL24} use geometrically distributed random variables to approximate $\log n$ in a dynamic variant of population protocols where agents arrive and depart during the execution.
\Textcite{DBLP:conf/sand/DotyE22} use the \emph{first missing value} of geometrically distributed random variables to approximate $\log n$.
Given $n$ geometrically distributed random variables $X_{1}, \dots, X_{n}$, the first missing value is defined as the smallest integer $i \notin \{ X_{1}, \dots, X_{n}\}$ (cf.\ \cite{louchard2005number,FLAJOLET1985182}).
W.h.p., the first missing value is in $\Theta(\log n)$ \cite{DBLP:conf/sand/DotyE22}.
While we follow a similar approach in our work, the bounds on the first missing value are not tight enough for our purposes (see the proof of Lemma 5.5 in the full version of \cite{DBLP:conf/sand/DotyE22}). Instead, we resort to a more fine-grained partitioning of the agents before we detect the first missing value. Our approach gives us an approximation in $[\log n - 2 \log \log n, \log n - \log \log n ]$ (see \cref{sec:approximate-counting}).

\paragraph{Exact Counting}
On general graphs, counting can be achieved with mass-carrying coalescing random walks: Each agent starts with one token. Whenever two agents holding tokens interact, one of them sends all its tokens to the other.
Eventually, there will be one agent holding all $n$ tokens left.
This trivial solution to exact counting on the complete graph requires $O(n)$ states and converges in $O(n^2)$ interactions \cite{DBLP:conf/podc/BerenbrinkKR19}.
On an arbitrary graph the running time is equal to the coalescence time (cf.\ \cite{DBLP:journals/talg/KanadeMS23}), which is at most $O(\hittingtime(G) \cdot n \cdot \log n )$.
For completeness, we give the corresponding algorithm and a sketch of the analysis in \cref{sec:folklore}.

\Textcite{DBLP:conf/wdag/DotyEMST18} count the exact number of agents without any prior size estimate. Their protocol converges in $O(n \log n \log\log n)$ interactions w.h.p.\ and in expectation, using $n^{60}$ states per agent.
\Textcite{DBLP:conf/podc/BerenbrinkKR19} present a protocol that computes $n$ in optimal $O(\log n)$ time using $\tilde O(n)$ states.
As their protocol for approximating $n$, the exact size counting protocol first elects a leader and then synchronizes phases via phase clocks. Hence, it cannot be applied in population protocols on graphs.
For further details on (exact) size counting protocols, we refer to the survey of size counting in population protocols by \textcite{DBLP:journals/tcs/DotyE21}.

A natural question is whether known \emph{fast} protocols for the complete graph, perhaps after some mild modifications (cf.\ \cite{DBLP:conf/opodis/AlistarhGR21}), can be applied directly on graphs.
Unfortunately, all currently known fast protocols for complete graphs rely on synchronization via so-called \emph{phase clocks} \cite{DBLP:journals/dc/AngluinAE08a}, and these phase clocks require network parameters such as the number of edges and the maximum degree.
To the best of our knowledge, all known phase clock implementations on graphs \cite{DBLP:journals/dc/AlistarhRV25, doi:10.1137/1.9781611978971.24, DBLP:conf/opodis/AlistarhGR21} require prior knowledge of graph parameters:
The clock from \cite{DBLP:journals/dc/AlistarhRV25} is driven by high-degree nodes of the graph, and its parameters must be adjusted based on the number of edges $m = |E|$ and the maximum degree $\Delta$.
The clock from \cite{doi:10.1137/1.9781611978971.24} requires prior knowledge of the spectral gap induced by the scheduler $(1-\lambda)$ and the minimum and maximum degrees.
The clock from \cite{DBLP:conf/opodis/AlistarhGR21} is defined for $d$-regular graphs only, and both the degree $d$ and the expansion $\beta$ must be known.
In contrast, our counting protocol can be applied without \emph{any} knowledge of graph parameters: without resorting to synchronization, our protocol converges to the correct output in a time that depends on spectral properties of the graph.

\paragraph{Random Bits}
The problem of generating random bits has received significant attention in both classical population protocols and population protocols on graphs.
On the complete graph, generating independent random bits is seemingly simple.
Each node is equipped with a so-called \emph{flip bit} which is flipped in each interaction.
The initiator then uses the flip bit of the responder as a random bit \cite{DBLP:conf/soda/AlistarhAEGR17, DBLP:conf/soda/BerenbrinkKKO18}.
This provides w.h.p.\ nearly random bits after some initialization time $n/2\pm O(\sqrt{\log(n)/n})$.
\Textcite{DBLP:conf/podc/SudoOIKM19} provide a simple protocol that ensures that the number of agents with flip bit $1$ equals exactly the number of agents with flip bit $0$.
Note that this approach applied on graphs requires that roughly half of each agent's neighbors have a \emph{flip bit} $0$ and the remaining ones hold $1$, which is not the case for low-degree graphs.

In population protocols on graphs the required randomness is, to our best knowledge, always derived from the initiator-responder pattern (see \cite{DBLP:journals/dc/AlistarhRV25}).
That is, in an interaction, the initiator samples bit $1$ and the responder samples bit $0$.
Since a node is the initiator or responder with equal probability, nodes always sample fair coins.
However, the coins produced by this approach are highly correlated as the coins of interacting nodes are always different.
Thus, the bits created by this protocol are fair but not independent.

\Textcite{kanaya_et_al:LIPIcs.OPODIS.2024.37} present a protocol to sample bits based on a static coloring of the graph. They begin by computing a coloring (for this they need an upper bound on $n$ or on the maximal degree). In each subsequent interaction \emph{only} the agent with the higher-ranked color obtains a random bit depending on its role (initiator/responder).
The other agent does not sample a bit.
This ensures independence of the sampled bits but means that there will be agents that cannot sample any bits as their color is a local minimum.

\section{Results}

In this section, we give an overview of our results, including descriptions of algorithms and sketches of the associated proofs.
We begin with our two main results for approximate and exact counting in \cref{sec:approxcountingshort} and \cref{sec:exactcountingshort}.
In \cref{sec:impossibleshort} we show that there can be no terminating protocol for counting that is correct with probability $1$.
Finally, in \cref{sec:random-bits-overview} we present an auxiliary protocol to sample independent, unbiased bits.
The full proofs are given in subsequent sections.

\subsection{Approximate Counting}\label{sec:approximate-counting}
\label{sec:approxcountingshort}
\Textcite{DBLP:conf/sand/DotyE22} use the \emph{first missing value} of $n$ geometrically distributed random variables with parameter $1/2$ to approximate $\log n$.
To improve their estimate of $\log n$ we consider $n$ tuples of random variables $(L_i, G_i)$ as follows. The random variables $L_i$ have a geometric distribution with parameter $1/2$.
The random variables $G_i$ have a uniform distribution over $1,2,\ldots,2^{\lceil \log L_i \rceil}$.
We say that the random variable $L_i$ determines a level, and the random variable $G_i$ determines a group within that level.
Now we order our tuples lexicographically and calculate the first missing tuple, which we call $(M_L,M_G)$ in the following.
We show that $M_L$ is a tight approximation of $\log n$; it only has an additive error of $\log\log n+O(1)$ w.h.p.\ (see \cref{lem:fmv_tuple_tailbound} for the detailed statement).

The idea of this approach is as follows. The random variable $L_i$ is geometrically distributed, hence the number of variables with a value of $k$ is roughly $n/2^k$. For all random variables $L_i$ with $L_i=L$ (belonging to level $L$) we draw a coupon from the range $1,2,\ldots 2^{\lceil \log L \rceil}$. We output the smallest level $L$ for which we could not collect all $2^{\lceil \log L \rceil}$ coupons of that level. The protocol for sampling $L_i$ and $G_i$ is straightforward and can be found in \cref{approx_alg}.
Note that we use the subprotocol \textsc{GenerateRandomBit} to generate the random bits for both the geometric and the uniform distribution. We explain the details of this protocol in \cref{sec:random-bits-overview}.

\medskip

Our protocol \EstLogN\ uses the idea presented above. In our protocol node $u$ computes $\level u$ with a geometric distribution (role of $L_i$) and $\group u$
with a uniform distribution (role of $G_i$).
Before we describe protocol $\EstLogN$ to compute the first missing tuple we need one more definition:
$\succlex(\bclevel{} , \bcgroup{} )$ returns the successor of the tuple $(\bclevel{} , \bcgroup{} )$ in lexicographical order.
The protocol is applied by agent $u$ as soon as $\level u$ and $\group u$ are calculated. In the protocol $\EstLogN$, node $u$ has two more fields $\bclevel u$ and $\bcgroup u$ (\emph{bc} stands for broadcast) which are initialized to $1$ and $0$, respectively.
Whenever node $u$ has $(\level u, \group u)=(\bclevel{u}, \bcgroup{u})$, node $u$ knows for sure that the tuple $(\bclevel{u}, \bcgroup{u})$ is \emph{not} missing and thus sets $(\bclevel{u}, \bcgroup{u})$ to the successor according to the lexicographical order.
This process continues as long as all tuples are present.
Crucially, for the first missing tuple, no further increase is possible, and the protocol stabilizes.

\begin{lstlisting}[caption={\EstLogN(u, v)}, label={alg:EstLogN}]
w.l.o.g. let $(\bclevel u, \bcgroup u) \geq_{\text{lex}} (\bclevel v, \bcgroup v)$ /*otherwise, exchange $u$ and $v$*/

if $\exists i \in \{u,v\} \colon (\level i, \group i) = (\bclevel u, \bcgroup u)$ then
    $(\bclevel u, \bcgroup u) \gets \succlex(\bclevel u, \bcgroup u)$ /*start looking for the next value*/

$(\bclevel v, \bcgroup v) \gets (\bclevel u, \bcgroup u)$ /*broadcast the maximum */
\end{lstlisting}

\begin{theorem}[restate=restateThmApproximateCounting,label={thm:approximate-counting}]
Protocol $\EstLogN$ uses $O(\log^4(n))$ states w.h.p.,
and stabilizes after $O(\log^2(n) \cdot \broadcasttime(G))$ interactions w.h.p.
When it stabilizes we have
$\bclevel{v}=\hat{\ell}$ for all agents $v$.
For $\hat{\ell}$ we have with probability $1-4/n^c$ that
\quad \[\displaystyle \hat{\ell} \in [\log n - 2 \log\log n - \log{c}, \quad \log n - \log\log n +4 + 2c ].\]
\end{theorem}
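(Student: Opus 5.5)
Our plan splits the statement into three parts: the probabilistic bound on the first missing tuple $(M_L,M_G)$, the claim that the protocol stabilizes with $\bclevel{v}=M_L$ at every agent, and the state and time bounds.

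\emph{Step 1: the first missing tuple.} We take the tuples $(\level u,\group u)$ as i.i.d.\ with the intended distribution. This needs the output of \GenerateRandomBit\ (\cref{sec:random-bits-overview}) to be fair and independent, which we assume from that section. Write $g(\ell)=2^{\lceil\log \ell\rceil}\in[\ell,2\ell)$ for the number of groups on level $\ell$. A fixed tuple $(\ell,j)$ occurs at a given agent with probability $p_\ell=2^{-\ell}/g(\ell)$, so it is absent from all $n$ agents with probability $(1-p_\ell)^n\approx \exp(-n/(2^\ell g(\ell)))$.

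\emph{Lower bound.} Take $\ell\le \log n-2\log\log n-\log c$. Then $2^\ell g(\ell)\le 2^{\ell+1}\ell \le 2n/(c\log n)\cdot(\text{const})$, so each tuple on such a level is missing with probability at most $n^{-\Omega(c)}$. There are at most $\sum_{\ell}g(\ell)=O(\log^2 n)$ tuples on these levels. A union bound over them gives failure probability at most $n^{-c}$; the constants are tuned so that this matches the $\log c$ term.

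\emph{Upper bound.} Take $\ell^*=\lceil\log n-\log\log n+4+2c\rceil$. On level $\ell^*$ the expected number of hits of a fixed group $j$ is $n p_{\ell^*}\le 2^{-4-2c}\log n/\ell^*\ll 1$, so group $1$ alone is missing with probability at least $1-2^{-4-2c}$. This constant bound is too weak for the stated $1-4/n^c$. To strengthen it, we use that every level $\ell\in[\ell^*-O(1),\ell^*]$ is independently-ish missing some group. The better route is to use all $g(\ell)\approx\log n$ groups of level $\ell^*$. The number of agents on level $\ell^*$ is at most $2^{-3-2c}\log n$ w.h.p.\ by a Chernoff bound on a binomial with mean $n2^{-\ell^*}$. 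The number of groups is $g(\ell^*)\ge\ell^*\ge \frac12\log n$, which exceeds that count. Hence some group on level $\ell^*$ is missing with probability at least $1-n^{-c}$, and we get $M_L\le \ell^*$. Adding the two failure events yields the $4/n^c$ budget.

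\emph{Step 2: correctness of the broadcast.} We prove two invariants by induction on interactions. First, $(\bclevel{},\bcgroup{})$ values are monotone non-decreasing at every agent. Second, no agent's value ever exceeds $(M_L,M_G)$, since a successor is only taken past a tuple that is actually held by some agent. Every time the global maximum $(x)<(M_L,M_G)$ is held somewhere, it spreads to the agent $w$ holding tuple $x$ within a broadcast. That agent then increments to $\succlex(x)$. Hence after at most $O(\log^2 n)$ increments (the number of tuples up to level $O(\log n)$) all agents hold $(M_L,M_G)$, and they cannot change afterwards. This is the stable configuration, with $\hat\ell=M_L$.

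\emph{Step 3: time and states.} Each increment phase takes one broadcast, so the first approach is to bound the total time by $O(\log^2 n)$ sequential broadcasts. Expected time $\broadcasttime(G)$ per broadcast converts to a w.h.p.\ bound by Markov's inequality on independent segments of length $2\broadcasttime(G)$. Chaining gives total time $O(\log^2 n\cdot\broadcasttime(G)+\log n\cdot\broadcasttime(G))$ w.h.p., where the sampling time of levels and groups is also $O(\log n)$ broadcast-lengths per agent. Summing the $O(\log^2 n)$ phases is the main obstacle: naively each phase costs $O(\log n\cdot\broadcasttime(G))$ w.h.p., giving $\log^3 n$. To avoid this, we treat the sum of $O(\log^2 n)$ phase durations, each dominated by independent geometric numbers of $2\broadcasttime(G)$-blocks, and concentrate it to $O(\log^2 n\cdot \broadcasttime(G))$ via a Chernoff bound for sums of geometrics.

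For the states, w.h.p.\ all levels are at most $O(\log n)$ and groups at most $O(\log n)$. The bc-tuple likewise takes $O(\log^2 n)$ values and the own tuple $O(\log^2 n)$ values, giving $O(\log^4 n)$ states. The sampling-counter states are dominated by this.
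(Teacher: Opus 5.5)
Your architecture is the paper's: tail bounds on the first missing tuple (a union bound over all tuples up to the threshold level for the lower end, and ``fewer agents on one level than groups on that level'' for the upper end), the monotone-broadcast/witness-chain argument for stabilization at $(M_L,M_G)$, and Markov plus Chernoff over $O(\log^2 n)$ consecutive broadcasts of length $2\broadcasttime(G)$. The stabilization and state-count parts are fine.

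\textbf{Gap: the upper-bound concentration step.} You claim that the number $N_{\ell^*}$ of agents on level $\ell^*$ is at most $2^{-3-2c}\log n$ with probability $1-n^{-c}$ by a Chernoff bound. But its mean is $\mu=n2^{-\ell^*}=\Theta(2^{-2c}\log n)$, and exceeding a fixed constant multiple of such a mean has probability $e^{-\Theta(\mu)}=n^{-\Theta(2^{-2c})}$. For $c\ge 1$ this is far larger than $n^{-c}$, so the $1-4/n^c$ budget does not follow. The threshold has to sit at the scale of the number of groups, not at twice the mean. You only need $N_{\ell^*}<g(\ell^*)$, and $g(\ell^*)\ge\ell^*\ge\frac12\log n$, which is of order $2^{2c}$ times $\mu$. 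A large-deviation bound such as $\Pr[X\ge R]\le(e\mu/R)^R$ with $R=\frac12\log n$ then does the job. Alternatively use the paper's form $\Pr[X>R]\le 2^{-xR}$ for $R\ge(2^xe-1)\mu$, with $x=2c$. Either way, with $\mu\le 2^{-3-2c}\log n$ you get $(e\,2^{-2-2c})^R\le 2^{-2cR}=n^{-c}$. This is exactly where the offset $+2c$ in the upper end is used; your argument never uses it in a way that produces $n^{-c}$.

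\textbf{Two smaller points.} First, with the ceiling, $\ell^*$ can be as large as $\log n-\log\log n+5+2c$, which lies outside the claimed interval. Take the floor instead; then $\mu\le 2^{-3-2c}\log n$ and the corrected bound above still works. Second, for the lower end, the remark that ``the constants are tuned'' does not hold up. Since $g(\ell)$ can be close to $2\ell$ and logarithms are base $2$, at $\ell=\log n-2\log\log n-\log c$ you only get $n/(2^\ell g(\ell))>\frac c2\log n$. That is a per-tuple bound of $n^{-c/(2\ln 2)}$, and a union bound over $\Theta(\log n)$ such tuples does not reach $n^{-c}$. The paper's calculation is loose at the same spot: it replaces $2^{\lceil\log l\rceil}$ by $l$ inside $(1-\cdot)^n$, which goes the wrong direction. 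So this costs only an additive constant at the lower end, but you should state the threshold you can actually prove rather than assert that it matches.
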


\paragraph{Proof Sketch for \cref{thm:approximate-counting}}

The subprotocol \textsc{GenerateRandomBit} which is used to sample from the geometric and uniform distribution uses a constant number of states.
Further, the four key
fields of \EstLogN, $\level{}$, $\group{}$, $\bclevel{}$, and
$\bcgroup{}$ are each $O(\log n)$ w.h.p., since the
maximum of $n$ independent $\mathrm{Geom}(1/2)$ random variables is $O(\log n)$
w.h.p., and group values are bounded by level values. The broadcast fields $\bclevel{}$ and $\bcgroup{}$
track at most one beyond the maximum sampled values, so they are $O(\log n)$ as well.
This gives us a state complexity of $O(\log^4 n)$ w.h.p.

\smallskip

The stabilization time is $O(\log^2 n \cdot \broadcasttime(G))$ w.h.p.
Each agent needs $O(\log n)$ random bits w.h.p.\
to determine its $\level{}$ and $\group{}$ (the geometric variable
for $\level{}$ dominates). \Cref{thm:bitsampling} shows that agents collect the required $O(\log n)$ random bits in time $O(\tfrac{|E(G)|}{\delta(G)}\log n)$. Note that $\tfrac{|E(G)|}{\delta(G)}$ is precisely the expected time that an agent with minimal degree $\delta(G)$ is activated once. This is clearly a lower bound for $\broadcasttime(G)$ as broadcast requires each agent to be active (at least) once. Thus, the time to sample all required random bits is $O(\broadcasttime(G) \log n)$
Once sampling is complete, the
$\bclevel{}$ and $\bcgroup{}$ fields converge to the first
missing value $(M_L, M_G)$. The proof constructs a chain of \emph{witness
agents} $v_1, \ldots, v_k$ whose $(\level{}, \group{})$-tuples
cover every predecessor of $(M_L, M_G)$ in lexicographical order. Convergence
proceeds as a sequence of broadcasts.
\begin{enumerate}
\item Agent~$v_1$'s value propagates through the graph until it reaches agent~$v_2$.
\item Contact with~$v_2$ triggers the next lexicographic increment and a new epidemic.
\item This cascades through $v_3, \ldots, v_k$ and finally propagates $(M_L, M_G)$ to all agents.
\end{enumerate}
Since $k = O(\log^2 n)$ w.h.p.\ (there are at most $2\ell$ groups per level~$\ell$, and $M_L = O(\log n)$), the total time is bounded by $O(\log^2 n)$ consecutive broadcast times, yielding $O(\log^2 n \cdot \broadcasttime(G))$.

\subsection{Counting the Exact Population Size}
\label{sec:exactcountingshort}
\enlargethispage{2\baselineskip}
\long\def\descriptionCounting{
To count the exact number of nodes we follow the general idea by \textcite{DBLP:conf/podc/BerenbrinkKR19}.
Initially, the leader starts with $M$ discrete tokens.
These tokens are balanced using a classical load balancing algorithm \cite{sodaPaper}.
Let $L_u(t)$ denote the load of node $u$ at time~$t$.
Once the load balancing has converged, all nodes have a load of $M / n \pm 2$.
Each agent outputs at time $t$ the value
$\operatorname{round}(M / L_u(t))$ where $\operatorname{round}(\cdot)$ is a function rounding to the nearest integer. Note that $L_u(t)$ is not necessarily exactly the same for all of them (which it cannot be when $n$ does not divide $M$).
To make sure that $\operatorname{round}(M/L_u(t))$ still returns the same value for all agents $M$ has to be by a factor $n$ larger than the number of agents. We ensure this by choosing an $M = \Omega(n^2)$.

Fortunately, to choose $M$ appropriately we do not have to know $n$ exactly.
Our protocol first computes an approximation of $n$ (called $\bclevel{}$) using \EstLogN\ and then computes $M$ as a function of that estimate.
That way, once $\EstLogN$ has converged, all agents can calculate a correct value for $M$, as required for correctness.
This allows us to store just the (integer-valued) binary logarithm of the number of red tokens in a field called $\logred{}$, requiring only logarithmically many states. The field will store the value $\bot$ if an agent does not store any red tokens.

In our protocol the leader initially creates $2^{\ell}$ red tokens ($\ell$ has to be chosen suitably and we calculate it later). Each red token is worth $M/2^{\ell}$ blue tokens. Our protocol first roughly balances the number of red tokens as follows:
When an agent with red tokens interacts with an agent without any red tokens, the former sends half of its tokens to the latter.
If both agents have red tokens, then the agents swap their tokens with probability 1/2.
This ensures two things. Firstly, the number of red tokens stored at any agent is always a power of two. Secondly, when viewed individually, each red token performs a random walk.

Whenever an agent stores only one red token (and not too many blue tokens already) the agent converts the red token into $M/2^{\ell}$ blue tokens. Blue tokens are balanced in the standard way using the algorithm from \cite{sodaPaper}: when two agents interact, they balance their load as much as possible, with any round-off remaining at the initiator. These blue tokens then balance up to a discrepancy of at most three.

\medskip

The combination of our red tokens representing powers of two and the maximum number of blue tokens allowed per agent creates significant probabilistic dependencies between the movement of red and blue tokens. The main problem in our analysis is that agents with red tokens are only allowed to balance with agents having no red tokens. Since agents are only allowed to balance the load with their neighbors it is clear that the number of red tokens on nodes with a small distance in the graph is highly correlated. For this reason we swap the tokens in interactions where both agents have already red tokens. Note that these dependencies complicate the analysis of our protocol considerably.

\subsection{Formal Description of our Protocol}

Our counting protocol consists of two parts, the outer protocol \CountingProt(u,v)\ and the subroutine \Balance(u,v).
We assume that there is a unique agent $l$ for which the field $\isleader{l} = \ttrue$, and $\isleader{v} = \tfalse$ for all other agents $v \neq l$.
In addition to the fields of $\EstLogN$,
agent $v$ has fields $\logred v$ and $\blue v$ storing the logarithm of the number of red tokens and the number of blue tokens, respectively.

Initially, the leader starts with $2^{r(\bclevel{})}$ red tokens,
and red tokens are converted into $2^{b(\bclevel{})}$ blue tokens,
where $r$ and $b$ are given by
\begin{align} \hidetag\label{eq:b}
r(x) = \max\{0, \lfloor x - 12 + \log(x - 8) \rfloor\} \quad \text{ and } \quad
b(x) = \lceil 2(x + 2 \log(x) + 7)\rceil + 3 - r(x).
\end{align}
As initially we have $\bclevel{}=1$ for all agents,
the leader initializes $\logred{l}$ to $r(1)$ (which is zero) and all other agents initialize it to $\bot$.
The field $\blue{}$ is initialized to $0$ for all agents.

Now until $\EstLogN$ has converged, $\bclevel{}$ may change---although it is monotonically non-decreasing (see \cref{{obs:estlogn_props}}).
Whenever $\bclevel{}$ is increased, all agents reset their loads and start all over again.
More precisely, all nodes discard their red and blue tokens, and the leader $l$ restarts with $\logred{l}$ set to $r(\bclevel{l})$.
This ensures that agents only balance their load with other agents having the same $\bclevel{}$ value, preserving the total load in the system.
All this is done in the outer protocol $\CountingProt$.

\Balance(u,v) first balances the red tokens, converts red tokens into blue tokens and balances the blue tokens. If a node $u$ holds red tokens, it can only convert a red token into blue tokens if it has at most
$ 2 \cdot 2^{b(\bclevel u)} + 3$ blue tokens.
One red token converts into $2^{b(\bclevel u)}$ blue tokens, where $b$ is the function defined above.
For our exact counting protocol we define the output function for a node $u$ as $\rho(\bclevel{u},\blue{u}) = \operatorname{round}\left(2^{r(\bclevel{u})}\cdot2^{b(\bclevel{u})} / \blue{u}\right)$. This expression is the total load in the system divided by the load of node $u$ (both in terms of blue tokens).

}

\def\hidetag{}
{
\let\subsection\paragraph
\descriptionCounting

}

\begin{lstlisting}[caption={\CountingProt(u, v)}, output={$\rho(\bclevel{u},\blue{u}) = \operatorname{round}\left(2^{r(\bclevel{u})}\cdot2^{b(\bclevel{u})} / \blue{u}\right)$}, label={alg:counting}]
$\EstLogN(u, v)$

for $i \in \{u, v\}$ do
    if $\bclevel i$ has increased then
        if $\isleader i$ then
            $(\logred i,\, \blue i) \gets (r(\bclevel i),\, 0)\label{alg:count-exact:5}$
        else
            $(\logred i,\, \blue i) \gets (\bot,\, 0)\label{alg:count-exact:7}$

$\Balance(u, v)\label{alg:count-exact:8}$
\end{lstlisting}

\enlargethispage{2\baselineskip}

{\let\pleasefill\hfill\def\hfill{}
\begin{lstlisting}[caption={\Balance(u,v)},label={alg:balance}]
/*balance red tokens\hfill\mbox{}*/
if $\logred u \geq 1$ and $\logred v = \bot$ then $\label{alg:balance:2}$
    $\logred u,\, \logred v \gets \logred u - 1$
    
else if $\logred u = \bot$ and $\logred v \geq 1$ then
    $\logred u,\, \logred v \gets \logred v - 1$ $\label{alg:balance:5}$
    
else if $\logred u \leq \logred v$
    $(\logred u,\, \logred v) \gets (\logred v,\, \logred u\label{alg:balance:7}$)

/*balance blue tokens\hfill\mbox{}*/

$\displaystyle (\blue u, \blue v) \gets \smash{\left(\left\lceil \frac{\blue u + \blue v}2 \right\rceil, \left\lfloor \frac{\blue u + \blue v}2 \right\rfloor\right)}\label{ln:balanceblue}$

/*load conversion\pleasefill (See \cref{eq:b} for the definition of $b(x)$.)*/
for $i \in \{u, v\}$ do
    if $\logred i = 0$ and $\blue i \leq 2 \cdot 2^{b(\bclevel i)} + 3$ then $\label{ln:explosion-condition}$
        $\left(\logred i, \blue i\right) \gets \left(\bot, \blue i + 2^{b(\bclevel i)}\right)\label{ln:explosion}$
\end{lstlisting}
}

\begin{theorem}[restate=restateThmMainCounting,label={thm:main_counting}]
Assume $\CountingProt$ is started on a graph $G =(V, E)$ with $n = |V|$ nodes and exactly one of the nodes is marked as leader.
The protocol uses $\tilde{O}(n)$ states, w.h.p.
$\CountingProt$ stabilizes w.h.p.\ in \[O( \broadcasttime(G) \cdot \log^2(n) + \loadbalancingtime(G) \cdot \log(n))\] interactions, and at that time each agent outputs the correct value of $n$ w.h.p.
\end{theorem}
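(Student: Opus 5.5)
The proof splits along the two phases of $\CountingProt$. First I will invoke \cref{thm:approximate-counting}. W.h.p., after $O(\broadcasttime(G)\log^2 n)$ interactions every agent holds the final value $\hat\ell=\bclevel{}$, and $\hat\ell\in[\log n-2\log\log n-O(1),\ \log n-\log\log n+O(1)]$. Since $\bclevel{}$ is monotone, every agent resets exactly once more, to $\hat\ell$. I then look at the moment $t_0$ when the last agent adopts $\hat\ell$. From $t_0$ on, the leader has restarted with $2^{r(\hat\ell)}$ red tokens. Every balancing interaction is between agents with equal $\bclevel{}$, because a broadcast of a larger value resets the receiver before \Balance\ runs. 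So the total load $2^{r(\hat\ell)}\cdot 2^{b(\hat\ell)}$, counting red tokens at their blue value, is conserved from $t_0$ on.

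The state bound follows by plugging in $\hat\ell$. We have $2^{r(\hat\ell)}=\Theta(2^{\hat\ell}\hat\ell)=\Theta(n/\log n)$ up to $\polylog$ factors, so $\logred{}$ takes $O(\log n)$ values. The blue load per agent is at most about $3\cdot 2^{b(\hat\ell)}+O(1)$: a conversion only happens below the threshold, and averaging never raises the maximum above it. Because $2^{r+b}=\Theta(2^{2\hat\ell}\hat\ell^4)$, we get $2^{b}=\tilde O(n)$. Combined with the $O(\log^4 n)$ states of $\EstLogN$, this gives $\tilde O(n)$ states. The choice of $r$ and $b$ also gives $M:=2^{r+b}\ge c\,n^2$ for a suitable constant $c$ (using the lower bound on $\hat\ell$), and $2^{r}\le n/8$ or so (using the upper bound on $\hat\ell$). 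The second inequality is what makes the red phase work: there are fewer red tokens than agents, each power-of-two pile halves when it meets an empty agent, and an agent holding one red token can convert it.

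The main obstacle is showing that all red tokens are converted within $O(\loadbalancingtime(G)\log n)$ interactions after $t_0$. My plan is as follows. First, the swap rule together with the halving rule means that each individual red token performs a lazy random walk whose transition matrix is dominated by $\mP(G)$, although tokens are correlated. I will argue about the largest pile instead. For a pile of size $2^k$ with $k\ge1$, the walk of that pile reaches an empty agent within $O(\loadbalancingtime(G))$ steps with constant probability, because fewer than $n/8$ agents are occupied. Conditioning on the position of the pile and applying a mixing bound via $1/(1-\lambda(\mP))$, the pile spends a constant fraction of its time next to empty agents. Halving then reduces $\max\logred{}$ by one every $O(\loadbalancingtime(G))$ steps with constant probability. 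Over $O(\log n)$ levels with Chernoff amplification this takes $O(\loadbalancingtime(G)\log n)$ steps w.h.p.

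Conversion also requires that the blue load at the agent is at most $2\cdot2^{b}+3$. Since the total blue load is at most $M\le 2^{b}n$, the average blue load is at most $2^{b}$. Whenever some agent is above the threshold, blue balancing drives it toward the average. I will handle this by noting that at most $n/2$ agents can exceed $2\cdot 2^b$, so a red token's walk hits an eligible agent quickly. Once every red token has become blue, the known result of \cite{sodaPaper} gives discrepancy at most $3$ after a further $O(\loadbalancingtime(G))$ interactions.

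For the output: $\blue{u}=M/n\pm3$ and $M\ge cn^2$ give $M/\blue{u}=n\pm O(n^2/M)<n\pm 1/2$, so $\operatorname{round}(M/\blue{u})=n$ for every agent. This configuration is stable because $\bclevel{}$ never changes again, there are no red tokens left, and further averaging keeps the discrepancy at most $3$. Adding the two time bounds completes the proof.
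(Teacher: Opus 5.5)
Your outer structure is essentially the paper's: stabilization of \EstLogN, conservation of the total load once all agents hold $\hat\ell$, the state count, $M=\Omega(n^2)$, and the rounding and stability argument. The red-token phase, however, is the technical core of the theorem, and there it has a genuine gap in two places.

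First, tracking the largest pile does not give $O(\log n)$ epochs. $\max\logred{}$ drops only once \emph{every} pile at the top level has halved, and there are typically many such piles (for example, up to $R/2$ piles of size $2$ near the end). A constant halving probability per pile and epoch gives no constant lower bound on the event that all of them halve in the same epoch. Waiting level by level for all top piles costs up to $\Theta(\log n)$ epochs per level and yields only $O(\loadbalancingtime(G)\log^2 n)$. The paper instead fixes a single token $i$. It shows that in every epoch, conditioned on the configuration at its start, $i$ takes part in a halving with probability $\Omega(1)$ unless it is already alone. By Chernoff, $i$ then collects $\lfloor\log n\rfloor$ halvings within $O(\log n)$ epochs with probability $1-n^{-2}$, and a union bound over the at most $n$ tokens finishes (\cref{lem:red-balance-time}).

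Second, and more seriously, you assert the constant per-epoch halving probability rather than prove it. The heuristic you give (fewer than $n/8$ occupied agents, so the pile is often next to empty agents) is exactly the one the paper points out fails. A global density bound says nothing about the pile's neighbourhood. The other red tokens' positions are correlated with your pile's position through shared ancestry, exclusion, and halvings at arbitrary times. So ``conditioning on the position of the pile and mixing'' does not show that its partner is empty with constant probability, let alone a constant fraction of the time. Two ingredients are missing:
\begin{enumerate}
\item Look only at the first interaction $T$ of $i$'s node after a mixing window of length $\loadbalancingtime(G)$, and split into the events $\Ev{\bot}$, $\Ev{I}$, $\Ev{O}$, $\Ev{empty}$. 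All of these except $\Ev{O}$ force a halving (\cref{lem:bound_halving}).
\item Use the pairwise negative dependence $\Pr[\{\TokenPos{i}{t},\TokenPos{j}{t}\}\subseteq\{u,v\}]\le\Pr[\TokenPos{i}{t}\in\{u,v\}]\cdot\Pr[\TokenPos{j}{t}\in\{u,v\}]$ for tokens starting on different nodes (\cref{lem:tokens_negative_dependence}). Combined with the mixing bound, this gives $\Pr[\TokenPos{j}{T-1}=v\mid\TokenPos{i}{t_1}=u]<8/n$ for $v\neq u$, hence fewer than $1/2$ foreign tokens at the partner in expectation.
\end{enumerate}
The remaining case, where $i$'s walk loses much mass to conversion, is covered by $\Pr[\Ev{\bot}]\ge 1/40-n^{-6}$.

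Your treatment of blocked conversions has the same defect. Blue load is high precisely near recent conversions, so it is correlated with where red tokens remain. Moreover, a partner with at most $2\cdot 2^{b}$ blue tokens does not guarantee conversion, because the test is applied to the averaged load. Conversion is only guaranteed when one of the two agents has about $2^{b}$ or fewer blue tokens, using the global cap $3\cdot 2^{b}+3$. The paper avoids walk arguments here entirely. It counts old versus new blue tokens over epochs of length $2\loadbalancingtime(G)$ and shows the number of red nodes drops by a constant factor every two epochs (\cref{lem:eliminating-red-tokens}).
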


Note that the worst-case stabilization time across all graph classes is $O(n^4 \log^2 n)$.
This time is achieved, e.g., on the lollipop graph that consists of a path connected to a clique, each containing $n/2$ agents.
However, for many graph classes the stabilization time is much lower.
In \cref{tab:overview_main_counting} we present the performance of our protocol for various important graph classes like regular graphs and $d$-dimensional tori.
Therein, we also compare the stabilization time of our protocol with the stabilization time of the baseline protocol (see \cref{sec:folklore}) as a frame of reference.
In particular, we note that for regular expanders the stabilization time is almost linear, which is optimal within logarithmic factors, and improves upon the baseline protocol almost by a linear factor.

\begin{table}
\caption{Comparison between the stabilization times of our new protocol and the baseline protocol on several graph classes. $\lambda$ denotes the second-largest eigenvalue of the adjacency matrix of $G$. Note that for the Clique, \cite{DBLP:conf/podc/BerenbrinkKR19} is faster than both. Note that $(1-\lambda(\mP(G)))$ denotes the spectral gap.}
\label{tab:overview_main_counting}

\centering
\begin{tabularx}{\textwidth}{llllX}
\toprule
\bfseries Graph Class & \bfseries $O(\broadcasttime(G)\log^2 n + \loadbalancingtime(G)\log n)$ & \bfseries $O(\hittingtime(G)\cdot n \cdot \log n)$ & \bfseries Comment\\
\midrule
Clique & $O(n\log^3 n)$ & $O(n^2\log n)$ \\
Regular graphs & $O(\frac{n}{1-\lambda(\mP(G))}\log^3 n)$ & $O(\frac{n^2}{\sqrt{1-\lambda(\mP(G))}}\log n)$ & $\frac{1}{1-\lambda(\mP(G))} \in O(n^2)$ \\
Cycle & $O(n^3 \log^2 n)$ & $O(n^3 \log n)$\\
$d$-dim. Torus ($d=2$) & $O(n^2\log^2 n)$ & $O(n^{2}\log^2 n)$\\
$d$-dim. Torus ($d>2$) & $O(n \cdot n^{2/d} \cdot \log^2 n)$ & $O(n^{2}\log n)$\\
Worst Case & $O(n^{4}\log^2 n)$ & $O(n^{4}\log n)$ & Tight for Lollipop\\
\bottomrule
\end{tabularx}
\end{table}

\paragraph{Proof Sketch for \cref{thm:main_counting}} The bound on the number of states is straightforward. In our proof we assume that protocol $\EstLogN$ stabilizes such that all nodes have the same value of $\bclevel{}$. We call this time $T_0$.
The rest of the proof is split into three parts.

\begin{itemize}

\item \emph{Part 1.} Let $T_1$ be the first time when all agents have at most one red token such that $\logred{u} \leq 0$ for any node $u$.
\Cref{lem:red-balance-time} shows that $T_1 - T_0 = O(\log(n) \cdot \loadbalancingtime(G))$ w.h.p.

\item \emph{Part 2.} Let $T_2$ be the first time when all red tokens have been converted into blue tokens such that $\logred{u} = \bot\ \forall u$.
\Cref{lem:eliminating-red-tokens} shows that $T_2 - T_1 = O\left(\log(n) \cdot \loadbalancingtime(G)\right)$ w.h.p.

\item \emph{Part 3.} Let $T_3$ be the first time when all blue tokens are balanced up to discrepancy $3$ such that $\max_u\{\blue u\} - \min_v \{ \blue v \} \leq 3$. Then
\Cref{lem:blue-balance-time} shows that \mbox{that $T_3 - T_2 = O\left(\log(n) \cdot \loadbalancingtime(G)\right)$ w.h.p.}
\end{itemize}
The bound on the running time follows by taking the union bound over these parts. Our algorithm outputs the correct size of the population since the number of blue tokens is at least $\Omega(n^2)$. Hence, we can apply the same analysis as in \cite{DBLP:conf/podc/BerenbrinkKR19}. It remains to give some more details about the three parts of our proof. Since the proof of the third part follows immediately from a known load balancing result \cite{sodaPaper} we discuss here only the first two parts.

\paragraph{Part 1} This is the most challenging of our proofs.
The main idea of the proof is to track the positions of individual red tokens and to bound the time until the token is the sole red token on its host node. In our proof we show that every red token, when considered by itself, performs a lazy random walk on the graph. Note that initially all red tokens are on the same node. Tokens that are on the same node travel together for a while until they separate due to a halving step. Recall that a halving step occurs exactly when a node with at least two red tokens interacts with a node without red tokens (see lines \ref{alg:balance:2} to \ref{alg:balance:5} in \cref{alg:balance} $\Balance$). Until then their movements are highly correlated and, in fact, identical.

To prove the theorem we fix a red token $i$ and consider disjoint epochs of length $O(\loadbalancingtime(G))$.
Now we consider, for each epoch $j$, the last interaction of token $i$ in epoch $j$. Our goal is to show that there is a constant probability of $i$ being involved in a halving step in exactly that interaction (unless, of course, it has already halved earlier in epoch $j$).
Since in the beginning (as long as $\EstLogN{}$ stabilized on a correct value)
there are at most $n/c$ red tokens (for a constant $c$),
$i$ must be alone after $\log_2 n$ halving steps.

It remains to show that for token $i$ the probability of a halving step is constant per epoch.
A halving step occurs exactly when the node holding token $i$
interacts with a node not having any red tokens.
As there is only a constant fraction of $n/c$ red tokens, there must be many such nodes without red tokens. If halving steps only occurred at the end of an epoch, the stacks of red tokens would be distributed (more or less) uniformly among the nodes of the graph by the end of the epoch. Hence, it would be straightforward to show that interacting with an empty node would happen with constant probability, and we would be done.
The usual method to deal with such situations
is to simply assume that, since there are at most $n/c$ red tokens, the worst-case probability to interact during the last step of an epoch with an empty node is still $1-1/c$. However, stacks of red tokens can be halved at any time during an epoch, and we cannot focus on the last step due to significant dependencies between red tokens which started on the same node at the beginning of the epoch.

\smallskip

We address the problem as follows.
Let $I$ be the set of red tokens which are, at the start of epoch~$j$, in the same stack of red tokens (on the same node) as token $i$.
Tokens in $I$ stay together until they undergo a halving step. Tokens in $I$ and tokens not in $I$ do not move completely independently at random since they can never be on the same node, but they move independently otherwise (we make this precise in \cref{lem:tokens_negative_dependence}).
Now let $v$ be the node
interacting with the node holding $i$ in the final interaction of the epoch.
There are three disjoint cases:
$v$ has no tokens at all, $v$ has tokens from $I$,
or $v$ has tokens, but not from $I$.
The first case is straightforward:
there will be a halving step then and there (unless, of course, $i$ is already alone, in which case there must have been a halving step before).
In the second case it is easy to see that there has been a halving step before,
since the tokens in $I$ are no longer all on the same node.
Hence, only the third case is problematic.
In this case we bound the probability of a halving step by using the fact that tokens starting on different nodes behave almost independently (again \cref{lem:tokens_negative_dependence}).
This allows us to bound the expected number of tokens on $v$ not from $I$ by a constant less than $1$.

\paragraph{Part 2}
At time $T_1$ each node holds at most one red token. We call the nodes holding a red token \emph{red nodes}.
On a high level, the proof proceeds by dividing the execution into non-overlapping epochs of length $2 \cdot \loadbalancingtime(G)$ and showing that in every other epoch the number of red nodes decreases by a constant factor with constant probability.
Recall that a red node will be able to convert its red token into blue tokens as soon as it interacts with a node holding a small enough number of blue tokens.
Fortunately, the number of nodes that can be used for the conversion is sufficiently large as agents constantly balance their blue tokens.

More precisely, we partition time into epochs of length~$2\cdot \loadbalancingtime(G)$.
For $i\ge 0$ let $t_i$ denote the first step of epoch~$i$, and let $R(i)$ denote the number of red nodes at time~$t_i$.
For our proof it will be necessary to distinguish between the blue
tokens a node has at the beginning of an epoch and the blue tokens it receives due to conversions in that epoch.
Hence, we refer to the tokens which are on a node at the beginning of an epoch as \emph{old blue tokens}.
All blue token added in epoch $i$ are \emph{new} blue tokens, respectively.
In the following we denote the set of old blue tokens at the beginning of epoch $i$ as $B'(i)$.
Note that we can assume w.l.o.g.\ that our protocol balances the old blue tokens and the \emph{new} blue tokens independently.
By \cite[Theorem 2.3]{sodaPaper}, the old blue tokens from epoch $i$ are balanced at the beginning of epoch $i+1$ with at least constant probability,
and the discrepancy is at most~$3$.
Since the total number of blue tokens generated by \cref{alg:balance} ($\Balance$) is bounded by $\BN$
there are at most $\BN$ old blue tokens.
Hence, if the tokens are successfully balanced, each node holds at most $\RedToBlue+3$ old blue tokens at the end of each epoch.
Let now $B(i+1)$ be the total number of blue tokens and $\partial B(i+1) = B(i+1) - B'(i)$ be the number of \emph{new} blue tokens created in epoch $i$. We now distinguish between two cases:
\begin{enumerate}
\item If $\partial B(i+1) \leq \frac{2}{3} R(i) \RedToBlue$ then
$\tfrac{1}{2} R(i)$ red nodes have fewer than $2\cdot \RedToBlue+3$ blue tokens, and their red token will be converted into blue tokens in their next interaction (line \ref{ln:explosion-condition} of protocol $\Balance$). W.h.p., this interaction happens in epoch $i+1$.
\item Otherwise, $\partial B(i+1) > \frac{2}{3} R(i) \RedToBlue$. In this case $\frac{2}{3} R(i)$ red nodes must have already converted their token in epoch $i$.
\end{enumerate}
Therefore, given that the old blue tokens are balanced, the number of red nodes decreases by a constant factor, either in epoch $i$ (Case 2) or epoch $i+1$ (Case 1).

\subsection{Impossibility Result}
\label{sec:impossibleshort}

\long\def\ImpossibilityIntuition{

Our protocol converges to the correct estimate of $n$ w.h.p.
However, the agents themselves cannot detect when this stabilization occurs.
In particular, even if an agent currently stores the correct value of $n$, it cannot determine whether this value is final or may still change in future interactions.
One may therefore ask whether it is possible to make the protocol \emph{detect} that it has terminated.
We follow the definition by \textcite{DBLP:conf/podc/DotyE19} and augment each node $u$ with a Boolean variable $\terminated u$ that indicates termination. Ideally, this variable would be set to $\ttrue$ once the agent has computed the correct value of $n$, and remain $\ttrue$ thereafter, at least with some non-trivial probability.
As it turns out, however, this is impossible in our model. More precisely, if the protocol has no a priori knowledge about the underlying graph, then no agent can reliably detect that the correct value of $n$ has been computed, even in the presence of a unique leader.

}

\ImpossibilityIntuition

\begin{theorem}[restate=restateImpossibleThm,label={thm:impossible}]
Let $P$ be a graph-uniform population protocol that computes the size of an interaction graph.
Suppose there is a graph $H$ on which $P$ terminates correctly with positive probability in bounded time~$t_H$ for some initial configuration.
\begin{enumerate}
\item If the protocol starts with a unique leader then there is an infinite sequence of graphs $(G_i)_{i \in \N}$ with $|V(G_i)| < |V(G_{i+1})|$ for all $i$
so that the probability of $P$ terminating incorrectly on $G_i$ is constant.
\item Without a unique leader there is an infinite sequence of graphs $(G'_i)_{i \in \N}$ with $|V(G'_i)| < |V(G'_{i+1})|$ for all $i$
so that the probability of $P$ terminating incorrectly on $G_i'$ is $1-\exp(-\Theta(|V(G_i')|))$.
\end{enumerate}
\end{theorem}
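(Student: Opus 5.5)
We follow the standard indistinguishability argument for terminating protocols, adapted to graphs. Fix the graph $H$ with $h=|V(H)|$ nodes and an execution prefix $\sigma=(e_1,\dots,e_{t_H})$ of directed-edge choices on $H$, from the given initial configuration, after which some agent has $\terminated{}=\ttrue$ and outputs $h$. Since $t_H$ is bounded and $\sigma$ has positive probability, $p_H:=\Pr[\sigma]=(2|E(H)|)^{-t_H}>0$ is a constant depending only on $H$. The key observation is that the outcome of $P$ on a component of a graph is determined only by the local sequence of interactions inside it. This is because $P$ is graph-uniform: the transition function never sees $n$ or $|E|$. Hence any larger graph that contains a copy of $H$ and can replay $\sigma$ on that copy will reproduce the terminated state with output $h$, which is then incorrect.

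\emph{Part 1 (with leader).} Let $G_i$ be $H$, whose leader copy is placed first, joined by a single bridge edge to a long path (or another gadget) of $N_i$ non-leader agents, where $N_i\to\infty$. The obstacle is that in $G_i$ the copy of $H$ is no longer isolated, and the scheduler picks edges of $H$ only with probability $|E(H)|/|E(G_i)|$ per step. We therefore condition on a different event than a literal prefix: the first $t_H$ interactions that touch $H$'s copy (including the bridge) all lie inside $H$ and follow $\sigma$ in order. Interactions elsewhere are among non-leader agents in the initial state $q_0$ and do not affect $H$. One subtlety is that these interactions could change the state of the gadget's endpoint; this is harmless because the bridge is never used before the termination event. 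Each interaction touching $H\cup\{\text{bridge}\}$ is uniform over those $|E(H)|+1$ edges with random orientation. So the conditioned event has probability $\bigl(2(|E(H)|+1)\bigr)^{-t_H}$, a constant independent of $i$. On this event the leader-containing copy of $H$ evolves exactly as in $\sigma$ on $H$, since every agent's state depends only on its own interaction history, and some agent terminates with output $h\neq |V(G_i)|$. To make this rigorous, we need an explicit coupling lemma: the states of $V(H)$ after these interactions equal those in $H$ after $\sigma$, proved by induction over interactions.

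\emph{Part 2 (no leader).} Without a leader, all agents start in $q_0$, so we can place many disjoint copies of $H$ into one graph. Let $G'_i$ consist of $k_i$ copies of $H$ connected in a chain by bridge edges between designated vertices, with $k_i\to\infty$, so that $|V(G'_i)|=\Theta(k_i)$. For each copy $j$ define the event $A_j$ that the first $t_H$ interactions touching copy $j$ or its incident bridges all lie inside copy $j$ and replay $\sigma$. If any $A_j$ occurs, that copy terminates with the wrong output $h$. The hard step is independence: the events share bridges and are driven by one global edge sequence. We handle this by taking every other copy, which gives $\ge k_i/2$ copies with pairwise disjoint edge neighbourhoods. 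The subsequence of interactions restricted to disjoint edge sets consists of i.i.d.\ uniform choices within each set, because conditioning the global uniform sequence on which set each step hits leaves the within-set choices independent. Hence the $A_j$ for these copies are mutually independent, each with probability at least $q=\bigl(2(|E(H)|+2)\bigr)^{-t_H}$. Therefore $\Pr[\text{no incorrect termination}]\le(1-q)^{k_i/2}=\exp(-\Theta(|V(G_i')|))$, which proves Part 2. The matching $\Theta$ on the other side is not needed beyond stating the bound.
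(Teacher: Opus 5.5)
Your proof is correct and follows essentially the paper's argument: you embed $H$ behind a single cut edge and condition on the first $t_H$ interactions touching the copy of $H$. You condition on those interactions replaying the witnessing sequence $\sigma$, whereas the paper conditions on them avoiding the bridge and then invokes $p_H$. For the leaderless case you, like the paper, use many copies of $H$ whose touching edge sets are disjoint, so that the replay events are independent. The only real deviation is that you chain the copies and keep every other one, while the paper hangs each copy off a central clique by one edge, which makes all $k_i$ copies independent directly; your explicit justification of why subsequences restricted to disjoint edge sets are independent is more careful than the paper's one-line assertion.
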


\paragraph{Proof Sketch for \cref{thm:impossible}}
We begin with the first statement. The key idea is to embed the graph $H$ inside a much larger graph $G_i$. The construction connects $H$ as an induced subgraph to the rest of $G_i$ by just a single edge. During the first $t_H$ interactions which involve nodes in $H$, that connecting edge is never used with constant probability.
When this happens, $H$'s nodes cannot tell they are part of a larger graph; they terminate thinking that the graph size is $|H|$.
But the actual size is $|G_i| > |H|$, so they output the wrong answer.
The second statement uses more or less the same setup except all agents start in an identical state. Now the probability of correct termination becomes exponentially small in the graph size. Instead of embedding $H$ once, we embed many disjoint copies of $H$.
We create roughly $n_i/|H|$ copies of $H$, each connected to a central clique by one edge, where $n_i$ denotes the size of the graph with all the copies of $H$.
Each copy independently has constant probability of getting isolated and terminating incorrectly. For the protocol to be correct on $G_i$, \emph{all} copies must avoid this trap.
The probability of success is then $\exp(-\Theta(n_i))$.

\subsection{Generating Independent Random Bits}\label{sec:random-bits-overview}

\long\def\descriptionGeneratingRandomBits{
Our final result is a uniform protocol for generating independent, fair, unbiased random bits.
These bits will be used as a basic primitive in our counting protocol.
While it is not surprising that such a protocol exists, we are not aware of a rigorous analysis; thus, we provide one here.

Just as all other protocols (we are aware of) for sampling bits in graphical population protocols \cite{DBLP:journals/dc/AlistarhRV25, DBLP:conf/opodis/AlistarhGR21}, our protocol derives randomness from the orientation of the edge selected by the scheduler in each step.
Instead of using static coloring as in \cite{kanaya_et_al:LIPIcs.OPODIS.2024.37} to decide which of the two interaction partners may use the random bit implied by the direction, our protocol uses dynamic $2$-coloring to produce a non-monochromatic edge with constant probability.

In a nutshell, the protocol works as follows.
Each node $u$ maintains a color, either black or white, stored in the field $\Color u$.
Whenever nodes $u$ and $v$ interact and have different colors, the black node generates a random bit.
Suppose that $u$ is black.
Then the bit is stored in the field $\Rand u$, taking value $1$ if $u$ is the initiator, i.e., if the scheduler activated edge $(u,v)$, and value $0$ otherwise.
The white node $v$ sets $\Rand v$ to $\bot$, meaning that it created no random bit in this interaction.
If both endpoints have the same color, the corresponding values $\Rand u$ and $\Rand v$ are both set to $\bot$.
At the end of the step, the colors of $u$ and $v$ are updated to black and white, respectively.}
\descriptionGeneratingRandomBits

\begin{lstlisting}[caption={\GenerateRandomBit($u$, $v$)}, label={alg:random-bits}]
if $(\Color u,\Color v) = (\black,\white)$ then
    $(\Rand u, \Rand v) \gets (1, \bot)$ /*The black node $u$ is initiator, assign bit $1$ */
else if $(\Color u,\Color v) = (\white,\black)$ then
    $(\Rand u, \Rand v) \gets (\bot, 0)$  /*The black node $v$ is responder, assign bit $0$ */
else 
    $(\Rand u, \Rand v) \gets (\bot, \bot)$
$ ( \Color u, \Color v ) \gets (\black, \white) \label{alg:random-bits-ln-7}$
\end{lstlisting}

\begin{theorem}[restate=restateBitSampling,label={thm:bitsampling}]
Let $T$ be an arbitrary step and assume that node $u$ interacts in steps $T_u \subseteq [T]$.
For $t_j \in T_u$, let $R_u(t_j)$ be the value of $\Rand u \in \{0, 1, \bot\}$ at the end of the interaction in step $t_j$.
For node $u$, define $R'_u(T) = \{ R_u(t_i) \mid R_u(t_i) \in \{0,1\}, t_i \in T_u \}$. 
\begin{enumerate}
\item For all $R(t_i) \in {R'_u}(T)$ we have $\pr{{R}'_u(t_i) = 1} = \pr{R'_u(t_i) = 0} = \tfrac{1}{2}$.
\item All variables in $\bigcup_{v\in V} R'_{v}(T)$ are mutually independent.
\item For any step $t \geq 1$, each node $u$ generates a random bit in steps $[t, t + {4|E(G)|}/{\delta(G)}]$ with probability at least ${1}/{96}$. This probability is independent for non-overlapping time intervals.
\end{enumerate}
\end{theorem}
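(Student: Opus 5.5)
The plan is to view the scheduler's randomness in two layers: the sequence of undirected edges $e_1, e_2, \dots$, and the orientation bits $o_1, o_2, \dots$. Here $o_t = 1$ iff the lower-indexed (say) endpoint of $e_t$ is the initiator. These are all mutually independent, and each $o_t$ is a fair coin independent of the edge sequence.

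The central observation concerns \cref{alg:random-bits}. Its line~\ref{alg:random-bits-ln-7} makes the color of every node after step $t$ a function only of the orientation in the \emph{last} step in which that node interacted. In step $t$ with $e_t=\{u,v\}$, whether a bit is produced depends on $\Color u$ and $\Color v$ before step $t$. These colors are determined by $o_{s_u}$ and $o_{s_v}$, where $s_u, s_v < t$ are the previous interaction times of $u$ and $v$. If a bit is produced, its value is a bijective function of $o_t$ (the black endpoint gets $1$ iff it is the initiator). We also record that $o_t$ is used as a random bit at most once, since exactly one endpoint (the black one) gets it.

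For parts~1 and~2, I would condition on the whole edge sequence and reveal the orientations one step at a time. Let $\mathcal F_{t-1}$ be generated by the edges and by $o_1,\dots,o_{t-1}$. The event that a bit is produced at step $t$ is $\mathcal F_{t-1}$-measurable, and which endpoint is black is also $\mathcal F_{t-1}$-measurable. The value is a fixed bijection of $o_t$, and $o_t$ is a fair coin independent of $\mathcal F_{t-1}$. The generated bits are therefore a sequence $Z_1, Z_2, \dots$ indexed by the (random) times at which bits are produced. Each $Z_k$ is uniform conditionally on all previous ones and on the selection process. A standard induction, multiplying the conditional probabilities $\pr{Z_k = z_k \mid Z_1,\dots,Z_{k-1}, \text{selection}} = 1/2$, gives that the $Z_k$ are i.i.d.\ fair bits. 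This holds even though the selection times depend on earlier orientations, much like a fair coin sampled at a stopping time. Interpreting $R'_u(T)$ as the sequence of bits of $u$, its elements are fair, and the union over all nodes is mutually independent. The subtle point to state carefully is that the statement refers to the bits themselves, not to the number or timing of bits. This is essential, since the latter does depend on earlier bits.

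For part~3, fix $u$ and the window $[t, t+4|E|/\delta]$. Each step selects an edge at $u$ with probability $\deg(u)/|E| \ge \delta/|E|$. So in the first half of the window $u$ interacts at least once with probability at least $1-(1-\delta/|E|)^{2|E|/\delta} \ge 1-e^{-2}$; call the first such interaction time $s$ and its partner $w$. After step $s$, the color of $u$ is a fresh fair coin $o_s$. Now I would consider the next interaction of $u$ within the second half, which occurs with probability at least $1-e^{-2}$, at some step $s'$ with partner $x$. A bit for $u$ requires $\Color u=\black$ (probability $1/2$ from $o_s$) and $\Color x=\white$.

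The main obstacle is the correlation when $x$ is the same node $w$, or when $x$ has not interacted since an earlier step. In the case $x=w$ with no intervening interaction of $w$, the colors are always opposite, so a bit is generated with certainty. In general, $\Color x$ is either determined by an orientation $o_{s_x}$ with $s_x \neq s$, or coincides via $o_s$ in the opposite way. In every case the pair of colors is non-monochromatic with probability at least $1/2$ given the edge sequence, and then $u$ is black with probability at least $1/2$. If $x$ has never interacted, its initial color is fixed, and $\Color u$ alone still gives probability $1/2$. Combining the two interaction events with these color events gives a bound like $(1-e^{-2})^2\cdot\tfrac14 \ge 1/96$ (with slack for cruder estimates). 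Independence for disjoint windows follows because each bound holds conditionally on everything before the window: the interaction events use only that window's edge choices, and the color events use only orientations of interactions inside the window.
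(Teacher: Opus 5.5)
Your parts 1 and 2 follow the paper's route. The bit produced at step $t$ is a fixed bijection of a fair orientation that is independent of everything revealed before step $t$, and this is followed by a product/martingale induction over prescribed bitstrings, as in Lemmas~\ref{lemma:unbiasedbit} and~\ref{lemma:independentbits}. Run that induction directly on the per-node family $(R_u(\ell))_{u,\ell}$, using the filtration up to the current step, because both the timing and the owner of later bits depend on earlier orientations.

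\textbf{The gap is in part 3, in how you treat the partner $x$ at step $s'$.} For independence across windows, and for the conditional form that Corollary~\ref{cor:bitsamplingRepeat} uses, you condition on everything before the window, orientations included. Under that conditioning, suppose $x$ has not interacted inside the window before $s'$. Then its color is a deterministic function of the past and may be black, for instance if every neighbor of $u$ was the initiator in its last pre-window interaction. In that case $u$ gets no bit at $s'$ whatever its own color is: the pair is non-monochromatic only when $u$ is white, and then the bit goes to $x$.

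So the sentence ``its initial color is fixed, and $\Color u$ alone still gives probability $1/2$'' is false whenever that fixed color is black. Your claim that the color events use only in-window orientations is exactly what has not been established. In the case where $x$'s last interaction precedes the window, your factor $\tfrac14$ treats that orientation as a fresh coin, which contradicts conditioning on the pre-window history. Hence $(1-e^{-2})^2\cdot\tfrac14$ is not justified as a conditional bound.

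\textbf{The missing ingredient is the paper's event $\mathfrak{T}$ from Lemma~\ref{lemma:bw}: force the partner's color to be refreshed inside the window as well.}
\begin{enumerate}
\item Take $s'$ to be the first interaction of $u$ in the second half, not merely the next one after $s$, which could come immediately and leave $x$ no time to interact.
\item Take $u$'s color from its last interaction before $s'$, which is not necessarily $s$, so it need not come from $o_s$.
\item Since $s'$ and its edge depend only on second-half edges, $x$ is a uniform neighbor of $u$ independent of the first half. Using $\deg(x)\ge\delta$, $u$ and $x$ both interact in the first half with probability at least $1-2e^{-2}$.
\item On that event, both colors just before $s'$ are functions of in-window orientations. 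If their last interactions coincide, the colors are opposite and $u$ is black with probability $\tfrac12$. Otherwise they are independent fair coins, and $u$ black with $x$ white has probability $\tfrac14$.
\end{enumerate}
This gives at least $(1-e^{-2})(1-2e^{-2})/4>1/96$ for every pre-window history.

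Once repaired, your route is a valid alternative to the paper's. The paper shows $\pr{B_v(t)W_x(t)=1}\ge 1/8$ for each fixed pair $(x,t)$ with $t\ge 2\mathcal{T}$, sums this into a first-moment bound, bounds the second moment by the number of interactions of $v$, and applies Paley--Zygmund. Your single-interaction argument avoids the second-moment step and gives a much better constant.
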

The first two statements show that our protocol indeed generates uniformly random and independent bits. The third result states that each node $u$ creates a random bit roughly every $|E(G)|/\delta(G)$ steps where $\delta(G)$ is the minimal degree of $G$. Note that the number of random bits a node generates is proportional to the number of times it interacts.

We prove the theorem in \cref{sec:randombits}.
Moreover, we derive a user-friendly statement about the number of bits that are sampled within a certain time frame.
Using well-known tail estimates, the following corollary directly follows from the third property in the above theorem.
\begin{corollary}[restate=restatebitsamplingRepeat,label={cor:bitsamplingRepeat}]
Let $\ell \geq 1$ be an integer.
\begin{enumerate}
\item In $O\left(\tfrac{|E(G)|}{\delta(G)} \cdot \ell\right)$ steps, a node $u$ draws at least $\ell$ random bits with probability at least $1-2^{-\ell}$.
\item For $\ell \geq (c+1)\log n$, in $O\left(\tfrac{|E(G)|}{\delta(G)} \cdot \ell\right)$ steps, \emph{each} node draws at least $\ell$ random bits with probability at least $1-n^{-c}$.
\end{enumerate}
\end{corollary}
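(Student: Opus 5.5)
The corollary follows from the third statement of \cref{thm:bitsampling} by a Chernoff-type tail bound over independent windows, plus a union bound for the second part. Set $\tau = 4|E(G)|/\delta(G)$. Starting from the current step, partition the time horizon into consecutive non-overlapping intervals $I_1, I_2, \ldots, I_k$ of length $\tau$ each, with $k = C\ell$ for a suitable constant $C$. The total length is then $C\ell\tau = O\bigl(\tfrac{|E(G)|}{\delta(G)}\cdot\ell\bigr)$.

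For the first part, let $X_j$ be the indicator that node $u$ generates at least one random bit in $I_j$. By the third statement of \cref{thm:bitsampling}, each $X_j$ has success probability at least $1/96$, and these events are independent across non-overlapping intervals. Hence $X = \sum_j X_j$ stochastically dominates a $\mathrm{Bin}(C\ell, 1/96)$ variable. The number of bits drawn is at least $X$, so it suffices to bound $\pr{X < \ell}$.

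The mean of the dominating binomial is $C\ell/96$. Choosing $C$ large, say $C = 96\cdot 8$ so the mean is $8\ell$, the multiplicative Chernoff bound $\pr{X < (1-\delta)\mu} \le \exp(-\delta^2\mu/2)$ with $\delta = 7/8$ gives
\[
\pr{X < \ell} \le \exp\bigl(-(49/64)\cdot 4\ell\bigr) \le 2^{-\ell}.
\]
If one prefers to avoid the exact constant, take $C$ large enough that $\exp(-\Omega(C)\ell) \le 2^{-\ell}$.

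The one point needing care is the dependence structure. The theorem only asserts that the probability bound holds independently for non-overlapping intervals, which I read as: conditioned on the history before $I_j$, the success probability in $I_j$ is still at least $1/96$. That conditional form is exactly what is needed to dominate the binomial, via a standard coupling or martingale Chernoff bound. So I would state the Chernoff step in its martingale form, to be safe.

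For the second part, apply the first part with $\ell \ge (c+1)\log n$. Each fixed node fails with probability at most $2^{-\ell} \le n^{-(c+1)}$, taking $\log$ base $2$. A union bound over the $n$ nodes gives a total failure probability of at most $n\cdot n^{-(c+1)} = n^{-c}$. The same time bound $O(\ell\cdot|E(G)|/\delta(G))$ applies, since the interval partition does not depend on the node.

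No step is genuinely hard. The only thing I would double-check is the conditional-independence reading of \cref{thm:bitsampling} described above.
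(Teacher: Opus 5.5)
Your proposal is correct and follows essentially the same route as the paper. Both split time into phases of length $4|E(G)|/\delta(G)$, use the conditional (history-independent) success probability of at least $1/96$ per phase to dominate a binomial, apply the multiplicative Chernoff bound, and finish the second part with a union bound over the $n$ nodes. The only differences are in the constants: the paper takes $96C\ell$ phases with $C \ge 4$ and $\alpha = 1-1/C$, which gives $e^{-\ell}$, whereas you take mean $8\ell$ and $\delta = 7/8$.
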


\paragraph{Proof Sketch for \cref{thm:bitsampling}} For this protocol, it is relatively easy to show that each black node receives the bit $0$ and $1$ with probability $1/2$ each.
This holds since the random edge orientation is assigned to only one of the two activated nodes in each step. Showing that each node receives random bits regularly is harder.
To show this we partition time into phases of length $\Theta\left({|E(G)|}/{\delta(G)}\right)$.
Fix an arbitrary phase $i$ and fix all random choices up to this phase.
After this conditioning, all remaining randomness stems solely from the scheduler's random choices within phase~$i$.
Recall that a random bit is created by node $u$ whenever node $u$ is black and interacts with a white node $v$. In turn the
colors of $u$ and $v$ are determined by the outcome of their previous interactions. Since we have an arbitrary coloring of $G$ at the beginning of the phase, $u$ possibly starts with the same color as all its neighbors, e.g., white.

Now the length of the phase comes into play.
Using elementary calculations, we show that with constant probability the following happens in phase $i$:
node $u$ interacts with a node $w$, and both $u$ and $w$ had another interaction (not necessarily with each other) earlier in phase $i$.
Thus, in this case, the color of $u$ and $w$ is only determined by the random choices during these interactions, thus it is independent of the graph coloring at the beginning of the phase:
\begin{itemize}
\item The most recent interactions of $u$ and $w$ were with each other. In this case they have different colors, and $u$ is black with probability ${1}/{2}$ (see Line \ref{alg:random-bits-ln-7} of \cref{alg:random-bits}).
\item The most recent interactions of $u$ and $w$ occurred in different interactions.
In this case the color of $u$ and $w$ is determined exclusively by the orientation of the edge in these interactions (again, Line \ref{alg:random-bits-ln-7} in \cref{alg:random-bits}). Hence, the probability that $u$ is black and $v$ is white is ${1}/{4}$.
\end{itemize}
It follows that, considering the second interaction of node $u$, the colors of nodes $u$ and $w$ are either deterministically different or they are independent random variables.
Note that the interaction pattern only depends on the pairs of nodes that interact, while the coloring depends on the orientation of the edges.
As these are chosen independently of each other, the probability of having the interaction pattern and the appropriate coloring is constant. We give the detailed proof in \cref{sec:randombits}.

\subsection{Summary and Future Work}

We present a fast population protocol on graphs to count the exact number of nodes.
Our protocol requires no knowledge of the underlying graph.
Its performance depends only on spectral properties of the random scheduler, the degree distribution of $G$, and the broadcast time $\broadcasttime(G)$.

Our approach separates the task into three independent components: generation of independent random bits on graphs, a tight approximation of $\log n$ via a refined first-missing-value construction, and load-balancing-based exact counting driven by a unique leader.
We believe that the first two components are of independent interest.
The resulting protocol requires $\tilde O(n)$ states and stabilizes w.h.p.\ in a time proportional to the broadcast time $\broadcasttime(G)$ and the load balancing $\loadbalancingtime(G)$ time of the graph.
Within logarithmic factors, our approach is always as fast as the baseline protocol based on coalescing random walks.
For important graph classes like regular expanders, i.e., graphs with constant edge expansion, or $d$-dimensional torii (with $d > 3$), our new protocol is faster.
We conjecture that the protocol can be turned into an \emph{always-correct} protocol which is guaranteed to eventually output the correct value of $n$ with probability $1$, using the same convergence time and state complexity.
Finally, we present an impossibility result for terminating uniform protocols that compute strongly increasing size properties, both with and without a leader.

Further, our protocol is state-optimal within logarithmic factors.
To encode the exact number of nodes, every node needs at least $\Omega(n)$ states.
Our protocol, on the other hand, requires $\tilde{O}(n) = n \cdot \polylog n$ states.
For the sake of presentation, we did not attempt to optimize the degree of the polylogarithmic term in the expression for the number of states.

Finally, we discuss the assumption that our protocol is initialized with a unique leader. Although this is a genuine restriction, unique leaders are a well-established resource in the population-protocol literature. For example, Angluin, Aspnes, and Eisenstat \cite{DBLP:journals/dc/AngluinAE08a} study fast computation with a given leader, while the approximate- and exact-counting protocols of Berenbrink, Kaaser, and Radzik \cite{DBLP:conf/podc/BerenbrinkKR19} elect and subsequently rely on one. 
On the other hand, there exists simple graph-uniform protocol that can count exactly without a leader using $O(n)$ states; however, its stabilization time is governed by the maximal hitting time of a random walk on the underlying graph and can, for many graph families, be far larger than $\widetilde{O}(\broadcasttime(G)+\loadbalancingtime(G))$. 
Thus, the leader does not trivialize the problem but enables a substantially faster protocol whose construction and analysis remain technically involved.

Nevertheless, eliminating the leader assumption while keeping a near-linear state complexity and a low running time is an important direction for future work. The closest known result is the graph-uniform leader-election protocol of Alistarh, Rybicki, and Voitovych \cite{DBLP:journals/dc/AlistarhRV25}, which requires no prior knowledge of the graph and stabilizes in $O(\broadcasttime(G)+n\log n)$ interactions, but uses $O(n^4)$ states. This leaves open the concrete problem of designing a stabilizing leader-election protocol for arbitrary graphs that is graph-uniform, uses $\widetilde{O}(n)$ states, and stabilizes in $\widetilde{O}(\broadcasttime(G)+\loadbalancingtime(G))$ interactions.

\section{Approximate Counting}

\label{sec:approxcouinglong}

In this section we present all technical details of our algorithm for approximate counting.
First each agent $v$ samples a value $\level v$ having geometric distribution $\Geom(1/2)$.
We aim to determine the smallest level value where less than roughly $\log(n)$ agents have said value.
In lieu of a space-efficient and fast procedure to do this directly, we instead let agents additionally sample a $\group v$ ranging from $0$ to $2^{\lceil \log(\level v) \rceil} - 1$ (i.e., a non-negative integer whose binary representation has at most $\lceil \log(n) \rceil$ bits) chosen u.a.r.
If for a given level $L$ all groups have at least one agent, then we presume that this level has still too much support. This is reasonable to assume from a coupon collection argument: For each level $L$ there are between $L$ and $2L$ groups belonging to that level.
Each agent which sampled level $L$ (there are expectedly $n/2^{L}$ of these agents) samples one group u.a.r.
If all groups are chosen by at least one agent we conclude that
the number agents on that level was sufficiently large to sample the full set.
It is well known that in expectation $\Theta(L \log(L))$ agents are needed to collect all groups.
Since we are looking for a level of order $\log n$,
this works, as we show below. If we now assume that level/group tuples are ordered lexicographically, we look for the first tuple which was not sampled and output the level of that group.
Hence, we are looking for the first missing value of the tuples.

\subsection{Formal Description}\label{approx_alg}

The algorithm consists of two parts. First, agent $v$ use the random bits sampled by
\cref{alg:random-bits} to determine its random variable $\level v$.
After that the agent samples its random variable $\group v$ (whose range is a function of $\level v$), again using \cref{alg:random-bits}.
Finally that agents are looking for the first missing value in the lexicographic order.
We first describe the sampling of $\level u$ and $\group u$ and then the computation of the first missing value.

\paragraph{Sampling Level and Group}

$\Sample$ samples the $\level u$ and $\group u$ values.
To control its behavior, each agent has two Boolean fields:
$\issampling{}$, initialized to $\ttrue$, indicates whether sampling is ongoing,
and $\leveldone{}$, initialized to $\tfalse$, indicates whether sampling the $\level{}$ value has finished (so that the protocol may proceed sampling the $\group{}$ value).
$\Sample$ first generates a random bit using \cref{alg:random-bits} from the previous section.
If an agent has a bit available (meaning that $\rand{} \neq \bot$),
it uses it to build up $\level{}$ and $\group{}$ values as follows.
$\level{}$ is initialized to $1$
and then incremented for each coin flip showing $1$;
if the coin shows $0$, sampling the $\level{}$ is done.
The final value of $\level{}$ is therefore the number of fair coin flips until the first zero is observed. It has distribution $\Geom(1/2)$ as desired.

The possible values for $\group{w}$, from which we want to sample u.a.r., are exactly those integers whose binary representation have at most $\lceil \log \level{w} \rceil$ bits.
Their binary representation has the distribution
of a sequence of $\level{w}$ bits sampled independently and uniformly at random.
The protocol samples $\group{w}$ in exactly this way.

\begin{lstlisting}[caption={\Sample(u, v)}, label={alg:SampleLevels}]
$\GenerateRandomBit(u, v)$

for $w \in \{u, v\}$ with $\rand w \neq \bot$ and $\issampling w = \ttrue$ do
    if $\neg \leveldone w$ then
        if $\rand w = 1$ then $\level w \gets \level w + 1$
        else $\leveldone w \gets \ttrue$
    else 
        $\group w \gets \group w + 2^{\groupbitssampled w} \cdot \Rand w$
        $\groupbitssampled w \gets \groupbitssampled w + 1$
        if $\groupbitssampled w = \lceil \log \level w \rceil$ then
            $\issampling w \gets \tfalse$
\end{lstlisting}

\paragraph{Coupon Collection}

We consider the lexicographic order for set $\tupleset = \{(l,g) \in \N_+ \times \N_0 \mid g < 2^{\lceil \log(l)\rceil}\}$ of valid $(\level v, \group v)$ tuples.
In this order, we have for two tuples $(l_1, g_1), (l_2, g_2) \in \tupleset$ that $(l_1, g_1) \leq_\lex (l_2, g_2) \Longleftrightarrow (l_1 < l_2) \vee (l_1 = l_2 \wedge g_1 \leq g_2)$.
This is a total order on $\tupleset$, so that we can define $\max_\lex$ and $\min_\lex$ in the usual way.
Furthermore, let $\succlex((l, g)) = \min_\lex \{(l', g') \in \tupleset \mid (l', g') >_\lex (l, g)\}$ be the successor function for this order, which is well-defined and computable as
\[\succlex((l, g)) = \begin{cases}
(l+1, 0),\quad\textup{If $g = 2^{\lceil \log(l)\rceil} - 1$,} \\
(l, g+1),\quad\textup{otherwise.}
\end{cases}\]

We determine the \emph{first missing value} among agents' tuples under this order:
The first missing value of a set $S \subseteq \mathcal{S}$ is
\[\FMV_\lex(S) = \min_\lex\{(l, g) \in \mathcal{S} \mid (l, g) \not\in S\},\]
so that the value we compute is $\FMV(\{(\level i, \group i) \mid i \in [n]\})$.
The first missing value among tuples will thus have as its first entry the lowest level where a group is missing, and as its second entry the numerically smallest missing group.

The protocol $\EstLogN$ computes this first missing value via broadcasts:
In the protocol $\EstLogN$, agents have fields $\bclevel v$ and $\bcgroup v$ which are both initialized to $1$ and $0$
(so that the initial value of $(\bclevel v, \bcgroup v)$ is $\min_\lex(\tupleset) = (1, 0)$).
Whenever agent $v$ has $(\level v, \group v)=(\bclevel{}, \bcgroup{})$, the agent knows for sure that the tuple $(\bclevel{}, \bcgroup{})$ is \emph{not} missing and thus sets it to the successor tuple, which, as far as it knows, might be missing.
This continues as long as successive tuples are not missing.
Crucially, for the first tuple that is missing, no further increase is possible, so that the protocol will converge.

\begin{lstlisting}[caption={\Cref{alg:EstLogN}.\hspace{.5em}\EstLogN(u, v), extended version}]
if $\issampling u$ or $\issampling v$ then
    $\Sample(u, v)$
    
else
    w.l.o.g. let $(\bclevel u, \bcgroup u) \geq_{\text{lex}} (\bclevel v, \bcgroup v)$ /*otherwise, exchange $u$ and $v$*/
    
    if $\exists i \in \{u,v\} \colon (\level i, \group i) = (\bclevel u, \bcgroup u)$ then
        $(\bclevel u, \bcgroup u) \gets \succlex(\bclevel u, \bcgroup u)$ /*start looking for the next value*/
    
    $(\bclevel v, \bcgroup v) \gets (\bclevel u, \bcgroup u)$ /*broadcast the maximum */
\end{lstlisting}

\subsection[Proof of the Theorem]{Proof of \Cref{thm:approximate-counting}}
The analysis in this section is split into two parts. First, we analyze the running time and the space requirements of our protocols. Then we determine the approximation quality of our estimation. We show the following theorem.

\restateThmApproximateCounting*

For die-hard 
readers we present tighter bounds in \cref{lem:fmv_tuple_tailbound}.
The proof of the theorem directly follows from \cref{{lem:estlogn_states_time}} and \cref{obs:estlogn_props}.

\begin{lemma}\label{lem:estlogn_states_time}
Protocol $\EstLogN$ uses $O(\log^4(n))$ states w.h.p.,
and stabilizes after $O(\log^2(n) \cdot \mathbf{B}(G))$ interactions w.h.p.\
\end{lemma}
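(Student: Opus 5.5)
We prove the two claims separately: the state bound by bounding each field, and the stabilization time by a sampling phase followed by a chain of broadcasts.

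\emph{States.} Each agent stores the constant-size state of \GenerateRandomBit, the Booleans $\issampling{}$ and $\leveldone{}$, and the four integer fields $\level{}$, $\group{}$, $\bclevel{}$, $\bcgroup{}$, plus the counter $\groupbitssampled{}$ of at most $\lceil\log\level{}\rceil$ values. So it suffices to show that every integer field is $O(\log n)$ w.h.p.
\begin{itemize}
\item $\level{v}$ is $\Geom(1/2)$, so $\pr{\level v > (c+1)\log n}\le n^{-(c+1)}$. A union bound over all agents gives $\max_v\level v = O(\log n)$ w.h.p.
\item $\group v < 2^{\lceil\log\level v\rceil}\le 2\level v$, so it is also $O(\log n)$.
\item For the broadcast pair, no agent ever moves $(\bclevel{},\bcgroup{})$ beyond $\succlex$ of a tuple actually held by some agent. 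The increment requires an agent with $(\level i,\group i)$ equal to the current value, and copying never creates new values. Hence $(\bclevel{},\bcgroup{})\le_\lex \succlex(\max_\lex$ of the sampled tuples$)$, and both entries are $O(\log n)$.
\end{itemize}
Multiplying the four $O(\log n)$ ranges gives $O(\log^4 n)$ states w.h.p. The $\groupbitssampled{}$ counter is only needed during sampling and can share storage with $\bcgroup{}$, or costs just an extra $\log\log n$ factor that is absorbed.

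\emph{Time, sampling phase.} Each agent needs $\level v + \lceil\log\level v\rceil = O(\log n)$ bits w.h.p. By \cref{cor:bitsamplingRepeat} with $\ell=(c+1)\log n$, all agents finish sampling within $O(\tfrac{|E|}{\delta}\log n)$ steps w.h.p.

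We have $|E|/\delta(G)=O(\broadcasttime(G))$: a broadcast must activate a minimum-degree node $u$, and $u$ is activated with probability $\deg(u)/|E|$ per step. Thus $\E{T(v)}\ge |E|/\delta$ for any source $v\ne u$.

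One subtlety: an agent that still samples forces the extended \EstLogN\ to run \Sample\ instead of the broadcast rule, so broadcasting may stall during this phase. After this phase no agent samples any more, and broadcasting proceeds normally from then on.

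\emph{Time, broadcast chain.} Let $(M_L,M_G)$ be the first missing value and enumerate its predecessors in $\tupleset$ as $\tau_1<_\lex\dots<_\lex\tau_k$. We have $k\le\sum_{l\le M_L}2l=O(\log^2 n)$ w.h.p., since $M_L=O(\log n)$ by the level bound. Each $\tau_j$ is held by some witness agent $v_j$.

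We claim that once the maximal broadcast value over all agents equals $\tau_j$, within one broadcast time the value reaches $v_j$. Indeed, the maximum spreads as a one-way epidemic (any interaction copies the lex-max), and it can only grow. When $v_j$ interacts while holding $\tau_j$, the value is advanced to $\tau_{j+1}$ or beyond. After at most $k$ such stages the maximum is $(M_L,M_G)$, which never increases, and one final broadcast spreads it to all agents.

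To get w.h.p. bounds rather than expected ones, we use Markov's inequality: a broadcast completes in $2\broadcasttime(G)$ steps with probability at least $1/2$. Each stage therefore takes a number of trials dominated by a geometric random variable. A Chernoff bound over $k+1=\Theta(\log^2 n)$ independent-restart stages then gives total time $O(\log^2 n\cdot\broadcasttime(G))$ w.h.p. This needs $k=\Omega(\log n)$, or else padding the stage count to $\Theta(\log n)$.

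\emph{Main obstacle.} The main obstacle is making the stage argument rigorous. The epidemic of the maximum may start from several agents, and intermediate increments can happen before the maximum reaches $v_j$. Both only help, so I would couple the process with the pure single-source epidemic from whichever agent holds the current maximum, using monotonicity of the lex-max under copying.
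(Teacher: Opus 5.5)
Your proof is correct and follows the paper's argument essentially step for step. You bound every field by $O(\log n)$ for the state count, use the bit-sampling guarantee together with $|E|/\delta(G)=O(\broadcasttime(G))$ for the sampling phase, and then use a chain of $O(\log^2 n)$ witness-triggered broadcasts upgraded to a w.h.p.\ bound via Markov and Chernoff. Your bound $M_L\le \max_v \level{v}+1$ is a slightly more elementary substitute for the paper's appeal to \cref{lem:fmv_tuple_tailbound}.

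One small slip: an extra $\log\log n$ factor is not absorbed into $O(\log^4 n)$, so rely instead on your other remark. While an agent samples, its $(\bclevel{},\bcgroup{})$ stays frozen at $(1,0)$, and afterwards $\groupbitssampled{}$ is determined by $\level{}$; this keeps the number of reachable states at $O(\log^4 n)$.
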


\begin{proof}
$\GenerateRandomBit$ uses a constant number of states (as all fields used by it have a fixed set of values).
$\EstLogN$ additionally uses some Boolean fields (which have $2$ states each)
and the four fields $\level{}$, $\group{}$, $\bclevel{}$ and $\bcgroup{}$.
W.h.p., $\max_{i \in [n]}\{\level{i}\} = O(\log(n))$,
and hence also $\max_{i \in [n]}\{\group i\} = O(\log(n))$ (since $\group i < 2\level i$).
Since the $\bclevel{}$ and $\bcgroup{}$ values are at most one larger than the maximum values of $\level{}$ and $\group{}$ among all agents, they are in $O(\log(n))$ as well,
leading to the claimed stat count.


To bound the stabilization time we have to bound the times until agents are done with the sampling of $\level{}$ and $\group{}$ fields,
as well as the time it takes for $\bclevel{}$ and $\bcgroup{}$ to converge.

\paragraph{Sampling} W.h.p.\ the maximum value of $\level u$ is $O(\log(n))$, agent $u$ needs at most $O(\log(n))$ random bits to sample level, w.h.p.
For sampling $\group u$ agent $u$ needs $\lceil \log \level{} \rceil$ random bits. It is easy to see that
\[ \lceil \log \level{} \rceil\in O(\log \log n) = o(\log (n)). \]
From \cref{thm:bitsampling} it follows that
within $O(|E|/\delta(G) \cdot \log(n))$ interactions, agent $v$ will have sampled sufficiently many random bits, with probability at least $1 - n^{-2}$.
Taking a union bound over all agents we see that with a probability at least $1 - n^{-1}$ this is true for all agents within $O(|E|/\delta(G)\cdot \log n)$ interactions, where $\delta(G)$ is the minimum degree from $G$. Now note that the expected time until a node $u$ with degree $\deg(u)$ participates in an interaction is $|E|/deg(u)$. Hence $|E|/\delta(G)$ is a lower bound on the expected broadcast time and $|E|/\delta(G) = O(\broadcasttime(G))$, recalling $\broadcasttime(G)$ is defined as the expected (worst-case) broadcast time of $G$. From this follows that with a probability at least
$1 - n^{-1}$ the running time of $\Sample$ is $O(|E|/|N_v| \cdot \log(n))$.

\paragraph{First Missing Value} Now we bound the time for the $\bclevel u$ and $\bcgroup u$ values to stabilize.
To that end, define $T$ as the time where all agents $u$ have finished sampling $\level u$ and $\group u$ and consider \EstLogN(u, v) from $T$ onward.
By construction, for all $t \geq T$ we have \[(\level u (t), \group u (t)) = (\level u (T), \group u (T)).\]
For ease of presentation we drop $T$ in $\level{}$ and $\group{}$ in the remainder of the proof.
Let \[(M_L, M_G) = \FMV_\lex(\{(\level u, \group u) \mid u \in V\}).\]
By definition
there must be agents $u_1, u_2, \ldots, u_k \in V$
such that \[(\level{u_1},\group {u_1}) = (1, 0),\]
\[ (M_L, M_S) = \succlex((\level{u_k}, \group{u_k})),\]
and for all $j \in [k-1]$ we have \[(\level{u_{j+1}}, \group{u_{j+1}}) = \succlex((\level{u_j}, \group{u_j})).\]

Whenever an agent $u_j$ first interacts with an agent $v$ having \[(\bclevel{v}, \bcgroup{v}) = (\level{u_j}, \group{u_j})\]
the agents $u_j$ and $v$ both set their $(\bclevel{}, \bcgroup{})$ to $\succlex((\level{u_j}, \group{u_j}))$.

This increased value of $(\bclevel{}, \bcgroup{})$ is now propagated via broadcast through the graph. The propagation might be
preempted if there appears a $(\bclevel{}, \bcgroup{})$-tuples with even larger values.
From this it follows that the time to reach a configuration where all agents have $(\bclevel{}, \bcgroup{}) = (M_L, M_G)$ is bounded by $k$ times the broadcast time.

It remains to bound the value of $k$.
Since $M_L = O(\log(n))$ w.h.p.\ (\cref{lem:fmv_tuple_tailbound}) and there are at most $2L$ groups at level $L$ we have $k= O(\log^2(n))$ w.h.p.
Hence the time in question is bounded by the time of $O(\log^2(n))$ consecutive broadcasts resulting in an expected running time of
$O(\log^2 n \cdot \broadcasttime(G)). $
From simple Markov bounds it follows that with a probability of $1/2$ each of these broadcasts is finished after
time $2 \broadcasttime(G)$, Applying a standard Chernoff bound one can argue that w.h.p.\ the running time of \EstLogN\ is bounded by $O(\log^2 n \cdot \broadcasttime(G))$.
\end{proof}

Next we present the following observation describing the values sampled by our algorithm.

\begin{observation}\label{obs:estlogn_props}
Consider the variables determined by \cref{alg:SampleLevels} and \cref{alg:EstLogN}.
\begin{itemize}
\item For all agents, the value of $\bclevel{v}$ is non-decreasing over time.
\item After executing $\EstLogN(u, v)$, we have $\bclevel{u} = \bclevel{v}$.
\item Assume at time $t$ all agents $v$ have $\issampling{u} = \tfalse$
and
\[(\bclevel{v}, \bcgroup{v}) = \FMV_{\lex}(\{(\level u, \group u) \mid u \in V\}),\]
then $\bclevel{v}$ does not change anymore.
\item For each node $v$ we define $t_v$ as the time when agent $v$ finishes the sampling via the protocol $\Sample$.
Then we have
\[\level v(t_v) \sim \Geom(1/2), \quad \text{ and } \quad \group v(t_v) \sim \UniformDistr\{1, \ldots, 2^{\lceil \log \level v(T_1) \rceil}\}.\]
\end{itemize}
\end{observation}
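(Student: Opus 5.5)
The plan is to verify the four items of \cref{obs:estlogn_props} one at a time. Each follows from a direct inspection of the transition rules in \cref{alg:SampleLevels} and \cref{alg:EstLogN}, so the proof is a case analysis over what a single interaction can change.

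\emph{Monotonicity.} The field $\bclevel{}$ is written only in the else-branch of \EstLogN. Relabel so that $(\bclevel u,\bcgroup u)\geq_\lex(\bclevel v,\bcgroup v)$. Agent $u$ either keeps its tuple or replaces it by $\succlex$ of it, which is lexicographically larger. Agent $v$ then receives $u$'s new tuple, which is at least $v$'s old tuple. Lexicographic non-decrease of $(\bclevel{},\bcgroup{})$ implies that the first coordinate is non-decreasing, because a lexicographic increase never lowers the first entry.

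\emph{Equality after an interaction.} In the broadcast branch the last line copies $u$'s tuple to $v$, so afterwards $\bclevel u=\bclevel v$. One subtlety: when $\issampling u$ or $\issampling v$ holds, only \Sample\ is executed. The claim as stated therefore has to be read for interactions that run the broadcast branch, or, equivalently, after sampling has finished. I will state this caveat explicitly.

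\emph{Stability at the first missing value.} Suppose sampling is finished, so $\level{}$ and $\group{}$ are frozen, and every agent holds $F=\FMV_\lex(\{(\level u,\group u)\})$. In any interaction both agents hold $F$. The increment fires only if some participant $i$ has $(\level i,\group i)=F$, and by the definition of the first missing value no agent has this tuple. So no increment happens, the copy step leaves $F$ in place, and by induction over interactions the configuration never changes.

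\emph{Distributions.} By \cref{thm:bitsampling}, the bits an agent consumes are fair and mutually independent. The level starts at $1$ and is incremented on each $1$, stopping at the first $0$, so its final value is the number of fair flips up to and including the first zero, which is $\Geom(1/2)$. Conditioned on the level value $\ell$, the agent then reads $\lceil\log \ell\rceil$ further independent fair bits and uses them as the binary digits of $\group{}$. This gives a uniform value on $\{0,\dots,2^{\lceil\log \ell\rceil}-1\}$, which matches the stated range up to the paper's index shift.

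The only delicate point is an independence issue. The stopping time of the level sampling depends on the bits already seen, so I need the bits to be i.i.d.\ fair and independent of the interaction pattern that decides when each bit arrives. \Cref{thm:bitsampling} supplies exactly this: the orientation randomness is independent of the edge choices, and the bits are mutually independent. Given that, the argument is the standard one for sequential sampling from an i.i.d.\ bit stream.
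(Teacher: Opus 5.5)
Your proposal is correct and uses the same argument the paper implicitly relies on: inspecting the transition rules, since the paper states this observation without proof. Your caveats are warranted, namely that the second item holds only for interactions that execute the broadcast branch rather than \Sample, and that the range of $\group{v}$ is index-shifted.

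Two small points you gloss over. First, \Cref{thm:bitsampling} is proved for \GenerateRandomBit\ running in every interaction, whereas inside \EstLogN\ it runs only when some participant is still sampling; this is harmless, because that gating is determined by the history and the current orientation remains a fair coin. Second, as written, \Sample\ tests $\groupbitssampled{w}=\lceil\log\level{w}\rceil$ only after incrementing the counter, so an agent with $\level{w}=1$ would never clear $\issampling{w}$; your reading (zero group bits when $\lceil\log\ell\rceil=0$) is therefore the intended behavior rather than the literal one.
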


\paragraph{Approximation Quality}

In this section we estimate the approximation quality of our regime.
For the ease of presentation we use in the lemma $L_i$ instead of $\level i$ and $G_i$ instead of $\group i$.

We will need the following Chernoff bound.
\begin{theorem}[Theorem 7 in \cite{DBLP:journals/ipl/DillencourtGM25}]\label{thm:chernoff_parameterized}
Let $X_1, X_2, \ldots, X_n$ be independent random variables taking values in $\{0, 1\}$.
Let $X = \sum_{i=1}^n X_i$ and let $\mu = \E{X}$ denote $X$'s expected value.
Then \[\Pr[X > R] \leq 2^{-xR}\textup{ for $x > 0$ and $R \geq (2^x e - 1)\mu$.}\]
\end{theorem}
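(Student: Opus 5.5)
The plan is the standard exponential-moment (Chernoff) method, with the free parameter chosen so that the hypothesis $R \geq (2^x e - 1)\mu$ absorbs the linear term exactly. Fix $t > 0$. Since $z \mapsto e^{tz}$ is increasing, Markov's inequality gives
\[
\Pr[X > R] \leq \Pr\left[e^{tX} \geq e^{tR}\right] \leq e^{-tR}\, \E{e^{tX}} .
\]

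Next I bound the moment generating function. Write $p_i = \Pr[X_i = 1]$. By independence,
\[
\E{e^{tX}} = \prod_{i=1}^n \E{e^{tX_i}} = \prod_{i=1}^n \left(1 + p_i(e^t - 1)\right).
\]
Applying $1 + y \leq e^y$ to each factor gives
\[
\E{e^{tX}} \leq \exp\left(\sum_i p_i (e^t - 1)\right) = \exp\left(\mu(e^t - 1)\right).
\]
Combining the two displays,
\[
\Pr[X > R] \leq \exp\left(\mu(e^t - 1) - tR\right).
\]

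The key step is the choice of $t$. I take $e^t = 2^x e$, that is, $t = x\ln 2 + 1$, which is positive because $x > 0$. With this choice $\mu(e^t - 1) = (2^x e - 1)\mu \leq R$ by hypothesis. The exponent is therefore at most
\[
R - (x\ln 2 + 1)R = -xR\ln 2 ,
\]
and hence $\Pr[X > R] \leq e^{-xR\ln 2} = 2^{-xR}$.

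There is no real obstacle here; the only thing to get right is choosing $t$ so that the $+R$ from the moment bound cancels the $-R$ from the $+1$ in $t$. The edge cases need no separate treatment. If $\mu = 0$ or $R \le 0$, the derivation still holds verbatim: when $\mu=0$ the hypothesis only requires $R\ge 0$, and the bound $\exp(-tR)$ still gives $2^{-xR}$ after the same computation.
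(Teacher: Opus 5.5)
Your argument is correct and complete. Markov's inequality on $e^{tX}$, the bound $\mathbf{E}[e^{tX}]\le \exp(\mu(e^t-1))$, and the fixed choice $e^t=2^x e$ let the hypothesis $R\ge(2^x e-1)\mu$ cancel the linear term, leaving exponent $-xR\ln 2$. This in fact gives the slightly stronger $\Pr[X\ge R]\le 2^{-xR}$, and the hypothesis already forces $R\ge 0$, so your edge-case discussion is unnecessary.

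The paper gives no proof of its own. It imports the statement as Theorem~7 of \cite{DBLP:journals/ipl/DillencourtGM25}, which is a standard moment-generating-function Chernoff bound, so your derivation simply supplies a self-contained proof in place of the citation.
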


\begin{lemma}\label{lem:fmv_tuple_tailbound}
Let $(L_i)_{i \in [n]}$ be i.i.d.\ $\Geom(1/2)$ r.v.s,
and for all $i \in [n]$ let $G_i \sim \UniformDistr(\{0, \ldots, 2^{\lceil \log L_i \rceil}-1\})$ chosen independently from all other values (except of course $L_i$).
Let $(M_L, M_G) = \FMV_\lex(\{(L_i, G_i) \mid i \in [n]\})$.

\begin{enumerate}
\item \(\Pr[M_L \leq \log(n) - \log \log n + 2 + \log(e) + 2c] \geq 1 - 2 n^{-c}\) for all $c > 0$ and $n \geq \max\{16, 2^{2/c}\}$.
\item \(\Pr[M_L \geq \log(n) - \log \log n - \log(c\ln(n) + 2\ln \log(n))] \geq 1 - 2n^{-c}\) for all $c \geq 1$ and $n \geq 3$.
\end{enumerate}

\end{lemma}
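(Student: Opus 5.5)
Both bounds reduce to one count. For a level $\ell$ and a group $g$ at that level, let $N_{\ell,g}=|\{i : (L_i,G_i)=(\ell,g)\}|$. Since $\pr{L_i=\ell}=2^{-\ell}$ and there are $s_\ell = 2^{\lceil\log \ell\rceil}\in[\ell,2\ell)$ groups at level $\ell$, each index hits the tuple $(\ell,g)$ independently with probability $p_\ell = 2^{-\ell}/s_\ell$. The event $M_L\le \ell$ means that some tuple with level at most $\ell$ is missing. The event $M_L\ge \ell$ means that every tuple with level $<\ell$ is present. So the upper bound (part 1) is a statement that some tuple at a single level $\ell^+$ is missing with high probability. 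The lower bound (part 2) is a union bound over all tuples below a level $\ell^-$, each of which must be present with high probability.

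\textbf{Part 1.} Set $\ell^+ = \lfloor \log n - \log\log n + 2 + \log e + 2c\rfloor$; we may assume it is at least $1$. It suffices to show that the fixed tuple $(\ell^+,0)$ is missing. Its count $X=N_{\ell^+,0}$ is a sum of $n$ independent indicators with mean
\[
\mu = n p_{\ell^+} \le \frac{n}{2^{\ell^+}\ell^+}.
\]
Using $2^{\ell^+}\ge 2^{\log n-\log\log n + 1+\log e+2c}$ gives $\mu \le \log n/(2e\,2^{2c}\ell^+)$. For $n\ge\max\{16,2^{2/c}\}$ we have $\ell^+\ge \tfrac12\log n$, so $\mu$ is at most about $1/(e\,4^{c})$. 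Now apply \cref{thm:chernoff_parameterized} with $R$ just below $1$ and a suitable $x$. Because $\pr{X>R}=\pr{X\ge1}$, we need $(2^xe-1)\mu\le R$. This holds when $2^x\approx 4^c$, that is $x\approx 2c$, and then $\pr{X\ge 1}\le 2^{-x}$. Read this way it gives only a constant bound.

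So a single tuple is not enough, and I will take all $s_{\ell^+}\ge\ell^+$ groups at level $\ell^+$ instead. The tuples at level $\ell^+$ are all present only if the level count $N_{\ell^+}=\sum_g N_{\ell^+,g}$ is at least $s_{\ell^+}$. The count $N_{\ell^+}$ has mean $n2^{-\ell^+}\le \log n/(2e\,4^c)$. Apply \cref{thm:chernoff_parameterized} with $R = s_{\ell^+}-1\approx \tfrac12\log n$ and $x = 2c$. The condition $R\ge(2^xe-1)\mu$ reads $\tfrac12\log n \gtrsim 4^c e\cdot \log n/(2e\,4^c)$, which holds; this is exactly where the additive $2+\log e+2c$ comes from. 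The conclusion is
\[
\pr{N_{\ell^+}\ge s_{\ell^+}}\le 2^{-2c\cdot\Theta(\log n)}\le n^{-c}.
\]
The extra factor $2$ in the failure probability, and the condition $n\ge 2^{2/c}$, cover the rounding in $\ell^+$ and the estimate $\ell^+\ge\tfrac12\log n$.

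\textbf{Part 2.} Set $\ell^- = \lceil\log n-\log\log n-\log(c\ln n+2\ln\log n)\rceil$. For every $\ell<\ell^-$ and every $g$, the probability that the tuple is missing is
\[
\pr{N_{\ell,g}=0}=(1-p_\ell)^n\le \exp(-np_\ell)\le \exp\!\left(-\frac{n}{2^{\ell+1}\ell}\right).
\]
This bound is largest at $\ell=\ell^--1$. There,
\[
2^{\ell}\ell\le \frac{n}{\log n\,(c\ln n+2\ln\log n)}\cdot\frac{\log n}{2}\cdot(\text{slack}),
\]
so that $np_\ell \ge c\ln n + 2\ln\log n$. The number of tuples below $\ell^-$ is at most $\sum_{\ell<\ell^-}2\ell\le (\ell^-)^2\le \log^2 n$. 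The union bound then gives failure probability at most
\[
\log^2 n\cdot n^{-c}\,(\log n)^{-2} = n^{-c},
\]
which fits within $2n^{-c}$. The step that needs care is the factor-$2$ loss from $s_\ell\le 2\ell$ and from the ceiling. I expect to absorb it into the ``$\log(\cdot)$'' term, using $c\ge1$ and $n\ge3$, possibly after checking small $n$ separately.

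\textbf{Main obstacle.} The main obstacle is Part 1: matching the exact additive constant $2+\log e+2c$ through the parameterized Chernoff bound. In particular I need to handle $s_\ell$ lying anywhere between $\ell$ and $2\ell$, and $\ell^+$ versus $\log n$. I would first try $R=\ell^+-1$ and $x=2c$ and track the constants, falling back on monotonicity in $\ell$ if the rounding causes trouble.
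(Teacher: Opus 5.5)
Your Part 1 is sound and follows the paper: apply the parameterized Chernoff bound with $x=2c$ to the level count $N_{\ell^+}$, and use that a level whose count is below its number of groups has a missing tuple. Taking the floor and comparing with $s_{\ell^+}$ rather than $\ell^+$ is, if anything, more careful than the paper. Since $R=s_{\ell^+}-1\ge \ell^+-1\ge \frac12\log n$ and $(4^c e-1)\mu\le \frac12\log n$, you get failure probability $n^{-c}$ directly, and the single-tuple detour can be dropped.

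The gap is in Part 2, at exactly the step you defer. The factor $2$ from $s_\ell<2\ell$ multiplies the \emph{leading} term of the exponent, so it cannot be absorbed into the lower-order $\log(c\ln n+2\ln\log n)$ term.

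\begin{itemize}
\item \textbf{Where your bound breaks.} Write $\hat\ell=\log n-\log\log n-\log(c\ln n+2\ln\log n)$. At the worst level $\ell=\lceil\hat\ell\rceil-1$ you only get $np_\ell> n/(2^{\hat\ell+1}\log n)=\frac12(c\ln n+2\ln\log n)$. That is a per-tuple bound of order $n^{-c/2}$, not $n^{-c}\log^{-2}n$. Your display hides this in the factor $\frac{\log n}{2}$, which would need $\ell\le\frac12\log n$, whereas $\ell=\log n-O(\log\log n)$.
\item \textbf{Part 2 as stated is false for infinitely many $n$.} Pick $n$ with $\hat\ell$ just above an integer $k=2^j+1$. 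Then $s_k=2(k-1)$, and $M_L\ge\hat\ell$ forces the tuple $(k,0)$ to be present. It is missing with probability $(1-p_k)^n=n^{-c/2-o(1)}\gg 2n^{-c}$. Concretely, for $c=1$, $\log n\approx 43.7$ (so $\hat\ell\approx 33.009$) and $k=33$, one has $np_{33}=2^{4.7}\approx 26$. The failure probability is thus about $e^{-26}$, versus the claimed $2n^{-1}\approx e^{-29.6}$.
\item \textbf{How the paper reaches the bound.} The paper's proof gets the stated constant only through the step $(1-2^{-l}2^{-\lceil\log l\rceil})^n\le(1-2^{-l}/l)^n$. That inequality is reversed, since $2^{\lceil\log l\rceil}\ge l$.
\item \textbf{What your union bound does prove.} It proves the statement shifted down by one level. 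For every $\ell<\hat\ell-1$ we have $2^{\ell+1}\ell<2^{\hat\ell}\log n=n/(c\ln n+2\ln\log n)$. So each such tuple is missing with probability less than $n^{-c}\log^{-2}n$. There are fewer than $\log^2 n$ such tuples, hence $\Pr[M_L\ge\hat\ell-1]\ge 1-n^{-c}$.
\end{itemize}

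The shifted bound costs only an additive constant at the lower end of the approximation guarantee, which is harmless for the counting application.
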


\begin{proof}
\emph{(1)} Let $N_l$ be the number of $i$s for which $L_i = l$, i.e., $N_l = \abs{\{i \in [n] \mid L_i = l\}}$.
If $N_l < l$ (and hence $N_l \leq 2^{\lceil \log(l)\rceil}$) for some level $l$,
there must be a $g \in \{0, \ldots, 2^{\lceil \log L_i \rceil} - 1\}$
such that there is no $i$ with $(L_i, G_i) = (l, g)$,
so that $M_l \leq l$.
Now consider some $c > 0$ and $l = \log(n) - \log\log(n) + 2 + \log(e) + 2c$.
Then $\Pr[L_i=l]=\frac{\log(n)}{n} \cdot \frac{2^{-2c}}{4e}$ and $\E{N_l} = \frac{2^{-2c}}{4e} \log(n)$.
To use the tail bound of \cref{thm:chernoff_parameterized},
see that we hence have $(2^{2c} e - 1) \E{N_l} \leq \frac{1}{4} \log(n) \leq \frac{1}{2}\log(n) - \frac{1}{2c}$ (for $n \geq 2^{2/c}$),
so that
\[\Pr[N_l > \frac{1}{2}\log(n) - \frac{1}{2c}] \leq 2^{-2c(\frac{1}{2}\log(n)-1/(2c))} = 2 n^{-c}.\]
And so as long as $\log\log(n) \leq \frac{1}{2} \log(n)$ (which holds for $n \geq 16$)
we have \[\Pr[M_L > l] \leq \Pr[N_l \geq l] \leq \Pr[N_l > \frac{1}{2}\log n - \frac{1}{2c}] \leq 2 n^{-c}.\]

\emph{(2)}
Let $A_{l,g}$ be $1$ if and only if there is no $i$ with $(L_i, G_i) = (l, g)$, and $0$ otherwise.
Then for all $l$ and $g \in \{0, \ldots, 2^{\lceil \log(l) \rceil} - 1\}$,
\[\E{A_{l,g}} = \Pr[A_{l,g}=1] = \left(1 - \frac{1}{2^l 2^{\lceil \log(l) \rceil}}\right)^n \leq \left(1 - \frac{1}{2^l l}\right)^n \leq \exp\left(-\frac{n}{2^l l}\right).\]
Now for $\hat{l} = \log n - \log \log n - \log (c \ln n + 2\ln \log n)$ (which is at most $\log n$ for $c \geq 1$ as long as $n \geq 3$),
we have
\[\exp\left(-\frac{n}{2^{\hat{l}} \hat{l}}\right) \leq \exp\left(\frac{-n}{\frac{n}{(\log n)(c \ln n + 2 \ln \log n)} \cdot \log n}\right) = \exp\left(-c\ln(n)-2\ln\log n \right) = n^{-c} \log^{-2}(n).\]
Now $M_L \leq l$ if and only if $\sum_{l'=1}^l \sum_{g=0}^{2^{\lceil \log(l) \rceil} - 1} A_{l',g} \geq 1$ (i.e., some level-group-tuple is missing up to that value of $l$).
So by Markov's inequality and linearity of expectation,
\begin{align*}
\Pr[M_L \leq \hat{l}]
&\leq \Pr[\sum_{l'=1}^l \sum_{g=0}^{2^{\lceil \log l' \rceil}-1} A_{l',g} \geq 1]
\leq \E{\sum_{l'=1}^{\hat{l}} \sum_{g=0}^{2^{\lceil \log l' \rceil}-1} A_{l',g}}
\leq \sum_{l'=1}^{\hat{l}} 2^{\lceil \log l' \rceil} \E{A_{l',g}}
\\ &\leq \sum_{l'=1}^{\hat{l}} 2l' \E{A_{\hat{l},g}}
\leq 2 \hat{l}^2 n^{-c} \log^{-2}
\leq 2 \log^2 n \cdot n^{-c} \log^{-2}(n)
= 2n^{-c}. \qedhere
\end{align*}
\end{proof}

\section{Counting the Exact Population Size}

{
\def\hidetag#1#2{\tag{\ref{eq:b}}}
\descriptionCounting
}

{
\def\dcmGobble#1#2{}
\begin{lstlisting}[title={\Cref{alg:counting}.\hspace{.5em}\CountingProt(u, v)},output={$\rho(\bclevel{u},\blue{u}) = \operatorname{round}\left(2^{r(\bclevel{u})}\cdot2^{b(\bclevel{u})} / \blue{u}\right)$}]
$\EstLogN(u, v)$

for $i \in \{u, v\}$ do
    if $\bclevel i$ has increased then
        if $\isleader i$ then
            $(\logred i,\, \blue i) \gets (r(\bclevel i),\, 0)\label{alg:count-exact:5}$
        else
            $(\logred i,\, \blue i) \gets (\bot,\, 0)\label{alg:count-exact:7}$

$\Balance(u, v)\label{alg:count-exact:8}$
\end{lstlisting}
}

{\def\dcmGobble#1#2{}\let\pleasefill\hfill\def\hfill{}
\begin{lstlisting}[title={\Cref{alg:balance}.\hspace{.5em}\Balance(u,v)}]
/*balance red tokens\hfill\mbox{}*/
if $\logred u \geq 1$ and $\logred v = \bot$ then $\label{alg:balance:2}$
    $\logred u,\, \logred v \gets \logred u - 1$
    
else if $\logred u = \bot$ and $\logred v \geq 1$ then
    $\logred u,\, \logred v \gets \logred v - 1$ $\label{alg:balance:5}$
    
else if $\logred u \leq \logred v$
    $(\logred u,\, \logred v) \gets (\logred v,\, \logred u\label{alg:balance:7}$)

/*balance blue tokens\hfill\mbox{}*/

$\displaystyle (\blue u, \blue v) \gets \smash{\left(\left\lceil \frac{\blue u + \blue v}2 \right\rceil, \left\lfloor \frac{\blue u + \blue v}2 \right\rfloor\right)}\label{ln:balanceblue}$

/*load conversion\pleasefill (See \cref{eq:b} for the definition of $b(x)$.)*/
for $i \in \{u, v\}$ do
    if $\logred i = 0$ and $\blue i \leq 2 \cdot 2^{b(\bclevel i)} + 3$ then $\label{ln:explosion-condition}$
        $\left(\logred i, \blue i\right) \gets \left(\bot, \blue i + 2^{b(\bclevel i)}\right)\label{ln:explosion}$
\end{lstlisting}
}

\subsection[Proof of the Theorem]{Proof of \Cref{thm:main_counting}}
For the following analysis assume that $\EstLogN$ has stabilized on the correct value.
Hence, we condition on the high probability event that the statements from \cref{thm:approximate-counting} holds
and prove the following theorem.

\restateThmMainCounting*
\begin{proof}We first show the bound on the running time and the correctness, and then the bound on the number of states.
\paragraph{Running Time and Correctness}
The proof of the running time follows from \cref{thm:approximate-counting,lem:red-balance-time,lem:blue-balance-time,lem:eliminating-red-tokens}.

\begin{itemize}
\item Let $T_0$ be the first time when $\EstLogN$ stabilizes such that all nodes have the same value of $\bclevel{}$.
\Cref{thm:approximate-counting} shows that $T_0 = O(\log^2(n) \cdot \broadcasttime(G))$ w.h.p.

\item Let $T_1$ be the first time when all agents have at most one red token such that $\logred{u} \leq 0$ for any node $u$.
\Cref{lem:red-balance-time} shows that $T_1 - T_0 = O(\log(n) \cdot \loadbalancingtime(G))$ w.h.p.

\item Let $T_2$ be the first time when all red tokens have been converted into blue tokens such that $\logred{u} = \bot$ for any node $u$.
\Cref{lem:eliminating-red-tokens} shows that $T_2 - T_1 = O\left(\log(n) \cdot \loadbalancingtime(G)\right)$ w.h.p.

\item Let $T_3$ be the first time when all blue tokens are balanced up to discrepancy $3$ such that $\max_u\{\blue u\} - \min_v \{ \blue v \} \leq 3$.
\Cref{lem:blue-balance-time} shows that $T_3 - T_2 = O\left(\log(n \cdot \loadbalancingtime(G)\right)$ w.h.p.

\end{itemize}
Taking the union bound over the individual statements, it follows w.h.p.\ that after O$( \broadcasttime(G) \cdot \log^2(n) + \loadbalancingtime(G) \cdot \log(n))$ all nodes $u$ have the same value of $\bclevel{u}$, have $\logred{u} = \bot$, and have $\blue u = M / n \pm 2$, where $M = R \cdot B = 2^{r(\bclevel{u})}\cdot2^{b(\bclevel{u})}$ is the total number of blue tokens in the system.

Observe that the output of node $u$ is $\rho(\bclevel u, \blue u) = \operatorname{round}(M / \blue u)$.
From \cref{obs:number-of-tokens} it follows that $M = \Omega(n^2)$ (where the constant can be made arbitrarily large by adjusting \cref{eq:b}).
This allows us to apply the same analysis as in \cite{DBLP:conf/podc/BerenbrinkKR19}.
From the same calculations as in the proof of Lemma~4.2 from \cite{DBLP:conf/podc/BerenbrinkKR19} it follows that rounding all values in $ \left[\floor{{M}/({\blue u + 2})},~ \ceil{ {M}/({\blue u - 2})}\right]$ gives always the same integer. This integer is precisely the value of $n$.

\paragraph{State Space Complexity} It remains to show that $\CountingProt$ uses $O(n \polylog(n))$ states w.h.p.
To this end, recall that \cref{alg:EstLogN} ($\EstLogN$) uses $O(\log^4(n))$ states w.h.p.\ (\cref{lem:estlogn_states_time}).
$\logred{}$ is either $\bot$ or a non-negative integer with maximum value $r(\bclevel{})$, where $\bclevel{} = O(\log(n))$ w.h.p.\ (see \cref{thm:approximate-counting}).
$\blue{}$ has maximum value $2 \cdot 2^{b(\bclevel{})} + 3 + 2^{b(\bclevel{})} = O(2^{b(\bclevel{})})$ (this follows form our protocol \Balance {})).
To bound this value,
we bound $b(\bclevel{})$ by first considering $r(x)$ for sufficiently large $x$. We have $r(x) = \lfloor x - 12 + \log(x - 8) \rfloor$,
and thus,
\begin{align*}
b(x)
&= \lceil 2(x + 2 \log(x) + 7)\rceil + 3 - \lfloor x - 12 + \log(x - 8) \rfloor
\\ &\leq 2(x + 2\log(x) + 7) + 4 - x + 12 - \log(x - 8)
\\ &= x + 4 \log(x) - \log(x-8) + 30
\\ &\leq x + 3 \log(x) + 32
\end{align*}
with the last inequality holding for $x \geq 32/3$.

At the time where $\EstLogN$ has stabilized (which is the maximum value of $\bclevel{}$ ever encountered as it is non-decreasing by \cref{obs:estlogn_props}),
we know from \cref{lem:fmv_tuple_tailbound} that w.h.p.\ $\bclevel{} \geq \log(n) - O(\log\log(n))$. Hence we can apply the above estimation
with $x=\bclevel{}$.

Since $\bclevel{} \leq \log(n) - \log \log (n) + O(1)$ w.h.p.\ by \cref{lem:fmv_tuple_tailbound}
we obtain that, w.h.p., \[ b(\bclevel{})=x + 3 \log (x) + 32 \leq \log(n) + 2 \log \log n + O(1).\]
So, for sufficiently large $n$,
w.h.p., \[b(\bclevel{}) \leq \log n + 2\log \log n + O(1),\]
so that \[2^{b(\bclevel{})} = n \log^2 n \cdot 2^{O(1)} = O(n \log^2 n).\]
The claim follows by union bound and multiplying the state counts of all fields together.
\end{proof}

In the following observation we bound the number of red and blue tokens coming from a correct approximation by protocol $\EstLogN$.

\begin{observation} \label{obs:number-of-tokens}
Assume \cref{alg:EstLogN} ($\EstLogN$) has stabilized with a correct approximation of $n$.
Let $R=2^{r(\bclevel{})}$ denote the number of red tokens generated by \CountingProt\ and let $B=2^{b(\bclevel{})}$ denote the number of blue tokens generated by \cref{alg:balance} (\Balance)
for each red token. Then we have, for $n$ large enough,
\begin{enumerate}\itemsep1ex
\item $\displaystyle \frac{n}{2^{14}(c\log n + 2 \log\log n)}\le R\le \frac{4^c}{128} n$ with probability $1-O(1/n^{c})$.

\item $\displaystyle \frac{2^{26} n (\log n)^2}{c\log n + 2 \log\log n} \le B\le\frac{2^{36} 4^c n (\log n + 2c+4)^3}{\log n}$ with probability $1-O(1/n^{c})$.

\item The total number of blue tokens $M$ generated by \cref{alg:balance} (\Balance) is at least
\[
M = R\cdot B \ge
\frac{2^{12} n^2 (\log^2 n)}{(c\log n + 2 \log\log n)^2} \ge
\frac{2^{12} n^2 (\log^2 n)}{(2c\log n)^2} =
\frac{2^{10}}{c^2}\cdot n^2, \ \mbox{w.h.p.\ (for $c \geq 2$)}
\]
\end{enumerate}
\end{observation}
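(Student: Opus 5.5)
The plan is to plug the two-sided bounds on $\hat\ell=\bclevel{}$ from \cref{lem:fmv_tuple_tailbound} into the definitions of $r$ and $b$ in \cref{eq:b}, and then multiply. First fix $c$ and condition on the event of \cref{lem:fmv_tuple_tailbound} (both parts, taking the union bound), which has probability $1-4n^{-c}$. On this event
\[
\log n-\log\log n-\log(c\ln n+2\ln\log n)\le x\le \log n-\log\log n+2+\log e+2c ,
\]
where $x=\bclevel{}$. In particular $x=\log n-\Theta(\log\log n)\ge 32/3$ for $n$ large, so $r(x)=\lfloor x-12+\log(x-8)\rfloor$ (the maximum with $0$ is inactive). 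We also have $\log(x-8)=\log\log n\pm o(1)$, since $x=(1-o(1))\log n$.

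\emph{Bounds on $R$.} For the lower bound, use $r(x)\ge x-13+\log(x-8)$ and insert the lower bound on $x$. This gives $2^{r}\ge 2^{-13}\cdot\frac{n}{\log n\,(c\ln n+2\ln\log n)}\cdot(x-8)$. Since $x-8\ge \tfrac12\log n$ for large $n$, and $c\ln n+2\ln\log n\le c\log n+2\log\log n$, we get $R\ge n/(2^{14}(c\log n+2\log\log n))$. For the upper bound, use $r(x)\le x-12+\log(x-8)\le x-12+\log\log n$ together with the upper bound on $x$. This gives $R\le 2^{-12}\cdot 2^{2+\log e}\cdot 4^c\, n$. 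Since $e\cdot 2^{-10}\le 1/128$, this yields $R\le 4^c n/128$.

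\emph{Bounds on $B$.} Write $b(x)=\lceil 2(x+2\log x+7)\rceil+3-r(x)$. For the lower bound, use $b\ge 2x+4\log x+17-(x-12+\log(x-8))\ge x+3\log x+29$ (using $\log(x-8)\le\log x$). Then $2^{b}\ge 2^{29}x^3\,2^{x}$. Insert the lower bound on $2^x$ from the first step, and use $x\ge \tfrac12\log n$ so that $x^3\ge \tfrac18\log^3 n$. This leaves $2^{26} n\log^2 n/(c\log n+2\log\log n)$, as claimed, after absorbing the $\ln$-versus-$\log$ factor, which goes in our favour. For the upper bound, reuse the computation from the state-space part of the proof of \cref{thm:main_counting}: $b\le x+4\log x-\log(x-8)+30\le x+3\log x+31+o(1)$. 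Hence $2^b\le 2^{32}x^3 2^x$. Plug in $2^x\le 4^c\cdot 4e\cdot n/\log n$ and $x\le\log n+2c+4$. The constant comes out at most $2^{32}\cdot 4e\le 2^{36}$.

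\emph{Bound on $M$.} Multiply the two lower bounds: $M=RB\ge 2^{12}n^2\log^2 n/(c\log n+2\log\log n)^2$. Then use $2\log\log n\le c\log n$ for $c\ge2$ and large $n$, which gives $M\ge 2^{10}n^2/c^2$.

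The main obstacle is the bookkeeping of constants. This means handling the floors and ceilings and the $\log(x-8)$ versus $\log\log n$ terms, and it requires choosing ``$n$ large enough'' so that all the $o(1)$ slacks fit under the stated powers of two. I would handle this by first proving crude inequalities $\tfrac12\log n\le x-8\le x\le 2\log n$, valid on the conditioned event for large $n$, and then tracking each factor of $2$ separately.
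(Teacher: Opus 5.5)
Your proposal is correct and takes the same route as the paper. The paper's proof only remarks that the bounds follow from substituting the tails of \cref{lem:fmv_tuple_tailbound} into the definitions of $r$ and $b$, and your constant bookkeeping ($2^{-13}$, $e\cdot 2^{-10}\le 2^{-7}$, $2^{29}/8=2^{26}$, $2^{34}e\le 2^{36}$, $2^{26}/2^{14}=2^{12}$) checks out.

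One small slip: the bound $b\le x+4\log x-\log(x-8)+30$, which you import from the state-space computation, drops the $+1$ coming from $-\lfloor z\rfloor\le -z+1$. For $x=20$ one gets $b=64>63.7$, so the constant should be $31$. Your next step $b\le x+3\log x+31+o(1)$, and hence $2^b\le 2^{32}x^3 2^x$, still holds with the corrected constant, so nothing downstream changes.
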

\begin{proof}
These are tedious but otherwise straightforward calculations based on \cref{eq:b} and the detailed bounds of \cref{lem:fmv_tuple_tailbound}.
\end{proof}

\subsection{Balancing Blue Tokens}

\begin{lemma}\label{lem:blue-balance-time}
Let $T_0$ be the first time when the $\EstLogN$ has stabilized on the correct value and
let $T_2\geq T_0$ be the first time when all red tokens are converted into blue tokens (i.e., \mbox{$\logred v=\bot$}).
We define $T_3$ to be the first time when $\max_u\{\blue u\} - \min_v \{ \blue v \} \leq 3$.
Then w.h.p.\ \[ T_3 - T_2 = O\left(\log(n) \cdot \loadbalancingtime(G)\right).\]
\end{lemma}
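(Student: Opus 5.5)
After time $T_2$ the dynamics of the blue tokens are exactly those of the classical discrete load balancing process of \cite{sodaPaper}, and the lemma follows from that process's convergence theorem. The plan is to (i) check that the process from $T_2$ onward is precisely that process, (ii) bound the initial load vector at $T_2$ polynomially, and (iii) apply the theorem from \cite{sodaPaper} with a high-probability amplification.

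\emph{Step 1 (reduction).} Since $T_2 \geq T_0$, \EstLogN{} has stabilized, so by \cref{obs:estlogn_props} $\bclevel{}$ never changes after $T_0$. Hence the reset lines \ref{alg:count-exact:5} and \ref{alg:count-exact:7} of \CountingProt{} are never executed again. At $T_2$ we have $\logred v=\bot$ for all $v$. The red-token branches of \Balance{} never create red tokens from $\bot$: the halving case requires one side to be $\geq 1$, and a swap of two $\bot$ values does nothing. Therefore $\logred{}=\bot$ stays true forever, and the conversion in line~\ref{ln:explosion} never fires. So from $T_2$ on, each interaction performs only line~\ref{ln:balanceblue}: the initiator receives $\lceil (\blue u+\blue v)/2\rceil$ and the responder the floor, on a uniformly random edge. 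This is exactly the discrete balancing process of \cite{sodaPaper}, with a fixed total load $M=R\cdot B$.

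\emph{Step 2 (initial load).} We bound the load vector at $T_2$ using \cref{obs:number-of-tokens}: $M\le R\cdot B=\poly{n}$ with high probability. Hence the initial discrepancy is at most $\poly{n}$, which is the regime in which $\loadbalancingtime(G)$ is defined (a polynomial number of arbitrarily distributed tokens). The process is Markov given the configuration, and $T_2$ is a stopping time. We therefore condition on the configuration at $T_2$ and treat it as an arbitrary initial distribution satisfying the polynomial bound.

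\emph{Step 3 (amplification).} The main obstacle is matching the exact probability guarantee of \cite[Theorem 2.3]{sodaPaper}. If it gives discrepancy at most $3$ after $O(\loadbalancingtime(G))$ steps only with constant probability (as used in Part~2), I will amplify. Split time after $T_2$ into $\Theta(\log n)$ consecutive epochs of length $c'\cdot\loadbalancingtime(G)$. Each epoch starts from some configuration with total load $M=\poly{n}$, which is again a valid worst-case input, so each epoch independently succeeds with constant probability $p$. I also need monotonicity: once $\max_u\blue u-\min_v\blue v\le 3$ holds, we stop, since $T_3$ is the first such time; alternatively, the maximum never increases and the minimum never decreases under averaging. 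The probability that all epochs fail is then $(1-p)^{\Theta(\log n)}=n^{-\Omega(1)}$, giving $T_3-T_2=O(\log(n)\cdot\loadbalancingtime(G))$ with high probability. If instead the theorem already holds with high probability in $O(\loadbalancingtime(G))$ steps for polynomial loads, then Step~3 is immediate and the $\log n$ factor is slack. A union bound with the high-probability events from \cref{thm:approximate-counting} and \cref{obs:number-of-tokens} completes the argument.
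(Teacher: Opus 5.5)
Your proposal is correct and matches the paper's proof: reduce the post-$T_2$ dynamics to the randomized-rounding asynchronous balancing process of \cite{sodaPaper}, note that the initial discrepancy is polynomial, apply Theorem~2.3 there, and amplify to high probability by $\Theta(\log n)$ repetitions. The differences are minor. You bound the discrepancy through the total load $M=R\cdot B$ instead of the paper's per-node cap $3\cdot 2^{b(\bclevel{})}+3$. You also spell out the no-reset/no-conversion check and the Markov-plus-monotone-discrepancy justification for the repetitions, which the paper leaves implicit. The fact that the initiator is uniformly random is exactly what gives the random assignment of the excess token that the cited theorem requires; you should state this explicitly, as the paper does.
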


\begin{proof}
Let $B_u(t) = \blue{u}(t)$ be the number of blue tokens of node $u$ after step $t$ and define $\vec{B}(t) = (B_u(t))_{u \in V}$.
Observe that the total number of blue tokens $\lVert \vec{B}(t) \rVert_{1}$ can only change if a red token converts into multiple blue tokens. Therefore, after time $T_1$ the total number of blue tokens remains fixed with $\lVert \vec{B}(t) \rVert_{1} = \lVert \vec{B}(T_1) \rVert_{1}$ for any $t \geq T_2$.

The \emph{discrepancy} of a load vector $\vec{B}(t)$ is defined as $\max_u\{ B_u(t) \} - \min_v\{ B_v(t) \}$.
A red token can only convert into blue tokens if $\blue u \leq 2 \cdot 2^{b(\bclevel u)} + 3$ (see Line \ref{ln:explosion-condition} in \cref{alg:counting}). Hence, $B_u(T_1) \leq 3 \cdot 2^{b(\bclevel u)} + 3 $ for any node $u$ and thus $\disc(\vec{B}(T_1)) = O(\poly n)$ w.h.p.

Our goal is to apply a variant of Theorem 2.3 from \cite{sodaPaper} for the following \emph{asynchronous} load balancing process with randomized rounding.
For a given load vector $ \vec{B}(t)$ at time $t$, the load vector $\vec{B}(t+1)$ is obtained from $\vec{B}(t)$ by choosing an edge $\{u, v\} \in E(G)$ uniformly at random and balancing the loads of nodes $u$ and $v$ over this edge as evenly as possible.
So-called \emph{excess tokens} which arise due to load being discrete are assigned randomly to either $u$ or $v$.
Formally,
\[(B_u(t+1), B_v(t+1)) = \begin{cases}
\left(\left\lceil \frac{B_u(t)+B_v(t)}{2} \right\rceil, \left\lfloor \frac{B_u(t)+B_v(t)}{2} \right\rfloor\right),\quad\text{with probability $1/2$,} \\[1ex]
\left(\left\lfloor \frac{B_u(t)+B_v(t)}{2} \right\rfloor, \left\lceil \frac{B_u(t)+B_v(t)}{2} \right\rceil\right),\quad\text{with probability $1/2$,}
\end{cases}\]
while $B_w(t+1) = B_w(t)$ for all $w \not\in \{u, v\}$.

Observe that this load balancing process is equivalent to our population protocol for balancing the blue tokens. Indeed, in our protocol we choose an \emph{oriented} edge $(u, v)$ uniformly at random
and assign the rounded-up value to the initiator $u$ and the rounded-down value to the responder $v$, with both edge orientations $(u,v)$ and $(v,u)$ chosen with the same probability $1/2$.
Observe furthermore that $\disc{\vec{B}(T_1)} = O(\poly{n})$. We apply Theorem 2.3 from \cite{sodaPaper} (and the remarks that follow the theorem) to the load balancing process with initial load vector $\vec{B}(T_1)$.
This gives us for a $\tau = O(n \log(n) \cdot \frac{d(G)}{\Delta(G)} \cdot (1-\lambda(\mP(G))))$ and any constant $c > 0$ that
\[\Pr[\disc(\vec{B}(\tau))\leq 3] \geq 1 - \exp(-\log^{1-c}(n)).\]
The lemma now follows from the definition $\loadbalancingtime(G) = n \log(n) \frac{d(G)}{\Delta(G)}(1-\lambda(\mP(G)))^{-1}$, where the high probability is achieved by repeating the load balancing process up to $\log n$ times.
\end{proof}

\newcommand{\tauepoch}{\tau_{\mathrm{epoch}}}

\subsection{Eliminating Red Tokens}

\begin{lemma}\label{lem:eliminating-red-tokens}
Let $T_0$ be the first time when the $\EstLogN$ has stabilized on the correct value and
let $T_1 \geq T_0$ be the first time in which all
agents have at most one red token (i.e., $\logred{v} \leq 0$). We define $T_2$ to be the first time at which all
red tokens are converted into blue tokens (i.e., $\logred{v} =\bot$).
Then w.h.p. \[ T_2 - T_1 = O(\LBT \cdot \log n).\]
\end{lemma}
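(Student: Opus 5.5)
We follow the epoch strategy outlined in Part 2 of the proof sketch in \cref{sec:exactcountingshort}. After $T_1$ every node carries at most one red token, and red tokens never merge again: the swap rule in \Balance\ only exchanges $\logred{}$ values, and a value $0$ is never halved. We call a node \emph{red} if it has $\logred{}=0$. We split time after $T_1$ into consecutive epochs of length $\tauepoch = 2\LBT$. Let $t_i$ be the start of epoch $i$ and $R(i)$ the number of red nodes at $t_i$. The aim is to show that, with constant probability $p>0$ over each pair of consecutive epochs $(i,i+1)$, we get $R(i+2) \le (1-\gamma) R(i)$ for a constant $\gamma>0$. Since $R(i)$ is non-increasing and $R(0)\le R \le n$, a Chernoff bound over $O(\log n)$ disjoint pairs of epochs (the conditional success probability is $\ge p$ given any history) gives $R=0$ after $O(\log n)$ epochs w.h.p. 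This is exactly $T_2-T_1 = O(\LBT\log n)$.

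For a single epoch $i$ we use the old/new decomposition. We split the blue load at every node into the \emph{old} tokens $B'(i)$ present at $t_i$ and the \emph{new} tokens created by conversions during the epoch. Line~\ref{ln:balanceblue} balances a sum, and the rounding can be attributed consistently, so we couple the process so that the old tokens alone evolve as the asynchronous randomized-rounding balancing process from the proof of \cref{lem:blue-balance-time}. New tokens are only ever added on top. The total number of old tokens is at most $M = R\cdot B \le n B$, where $B=2^{b(\bclevel{})}$. Every node starts with load at most $3B+3$, so the initial discrepancy is polynomial. Theorem~2.3 of \cite{sodaPaper}, applied exactly as in \cref{lem:blue-balance-time}, then shows that with constant probability the old tokens have discrepancy at most $3$ by the end of epoch $i$. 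In that case every node holds at most $B+3$ old tokens.

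On this event we split into two cases according to $\partial B(i+1)$, the number of new tokens created in epoch $i$. Each conversion creates exactly $B$ tokens, so $\partial B(i+1)/B$ equals the number of red tokens converted in epoch $i$.
\begin{itemize}
\item If $\partial B(i+1) > \tfrac23 R(i) B$, then more than $\tfrac23 R(i)$ red nodes converted during epoch $i$, and we are done.
\item Otherwise, by averaging (Markov), at most $\tfrac12 R(i)$ nodes can hold more than $\tfrac43 B$ new tokens. So at least $\tfrac12 R(i)$ of the nodes that are still red hold at most $B+3+\tfrac43 B \le 2B+3$ tokens at $t_{i+1}$. The swap rule moves red tokens around, so we apply this averaging to the nodes that host red tokens at $t_{i+1}$, using that the new-token mass is at most $\tfrac23 R(i)B$.
\end{itemize}
In the second case, such a red node converts on its next interaction, provided the condition of line~\ref{ln:explosion-condition} still holds there after balancing. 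This is the subtle point: balancing with a heavily loaded neighbour could push the node above $2B+3$. We handle it by observing that a neighbour's load is itself at most $3B+3$, so the average stays in range, and by shrinking the constants if needed. Each node interacts within $\LBT \ge |E|/\delta(G)$ steps with constant probability; the paper notes that $|E|/\delta(G)$ lower-bounds $\broadcasttime(G)$, and we use the analogous bound for $\loadbalancingtime(G)$. Hence a constant fraction of those red nodes convert during epoch $i+1$ in expectation. Markov's inequality on the number of survivors then gives a constant-factor decrease with constant probability.

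The main obstacle is the dependency between the old-token balancing and the conversions. New tokens injected mid-epoch change the rounding decisions at the initiator, and red tokens move through swaps, so it is not immediate that the old tokens form a faithful copy of the \cite{sodaPaper} process. Nor is it immediate that the "light" red nodes remain light until their next interaction. We would try first to define the coupling explicitly by tracking old tokens as a labelled subset, with the round-off token assigned by the edge orientation. The constants in \cref{eq:b} are chosen with slack (the threshold $2B+3$ against $B+3$ old tokens), so the light-node argument should tolerate a constant-factor loss.
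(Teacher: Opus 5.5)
\textbf{There is a genuine gap in the step where light red nodes convert.} Your overall architecture matches the paper's: epochs of length $2\LBT$, an old/new split in which the old tokens run the randomized-rounding process, a case split on the number of conversions, and Chernoff over $O(\log n)$ epoch pairs. Your labelled-token coupling is a sound way to justify the paper's ``w.l.o.g.''\ separation of old and new tokens. The problem is the core step showing that many still-red nodes convert at their next interaction.

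\begin{itemize}
\item $B+3+\tfrac43B=\tfrac73B+3>2B+3$, so the bound you state is false.
\item More fundamentally, ``at most $2B+3$ at time $t_{i+1}$'' is the wrong target. Line~\ref{ln:balanceblue} balances \emph{before} the check in line~\ref{ln:explosion-condition}. Against a partner holding up to $3B+3$ tokens (the invariant you invoke), a node with $x$ tokens can end with $\lceil (x+3B+3)/2\rceil$, which is at most $2B+3$ only if $x\le B+3$. This is why the paper calls a red node good only when it holds at most $B$ blue tokens. So ``the average stays in range'' is false.
\item With your bound $M\le nB$, the old tokens alone may already account for $B+3$ tokens at a node. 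Then even a single new token can block the conversion, and shrinking the constants $\tfrac23,\tfrac43$ does not help.
\item Your count is also unjustified. When at most $\tfrac23R(i)$ tokens convert, you only know $R(i+1)\ge\tfrac13R(i)$, while up to $\tfrac12R(i)$ nodes may be heavy. So ``at least $\tfrac12R(i)$ light red nodes'' needs a case split on $R(i+1)$.
\end{itemize}

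\textbf{The missing ingredient is that the total blue load is a small fraction of $nB$.} By \cref{obs:number-of-tokens}, $R\le\frac{4^c}{128}n$; this is what the $-12$ in $r(x)$ provides. For example, $R\le n/32$ for $c=1$, so once the old tokens are balanced every node holds at most $B/32+3$ old tokens. The argument then runs as follows.
\begin{itemize}
\item Either $R(i+1)\le(1-\gamma)R(i)$ and you are done, or fewer than $\gamma R(i)$ tokens converted in epoch $i$.
\item In the latter case fewer than $\gamma R(i)B$ new tokens exist at $t_{i+1}$. By averaging, fewer than $\tfrac{32}{31}\gamma R(i)$ nodes hold more than $\tfrac{31}{32}B$ new tokens.
\item Every other red node holds at most $B+3$ blue tokens, so it converts at its next interaction whatever its partner's load.
\item For $\gamma=\tfrac14$ there are at least $\tfrac14R(i)$ such nodes.
\end{itemize}
The paper's own Case~3 implicitly needs the same sharpening: a node with more than $B$ tokens is guaranteed to carry nearly $B$ new tokens only if the old load per node is well below $B$.
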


\begin{proof}
Since at time $T_1$ each node holds at most one red token. For ease of presentation we call the nodes holding a red token \emph{red nodes}.
On a high level, the proof proceeds by dividing the execution into non-overlapping epochs and showing that in every second epoch the number of red nodes decreases by a constant factor with constant probability. Recall that a red node will be able to convert its red token into blue tokens as soon as a node interacts with a node holding a small enough number of blue tokens. Fortunately we are able to show that there is a sufficient number of nodes which can be used for the conversion since agents constantly balance their blue tokens.

\smallskip

We partition time into epochs of length~$2\cdot\LBT$.
For $i\ge 0$ let $t_i$ denote the first step of epoch~$i$, and let $R(i)$ denote the number of red nodes at time~$t_i$.
For our proof it will be necessary to distinguish between the blue
tokens a node has at the beginning of an epoch and the blue tokens it receives due to conversions in that phase.
Hence, we refer to the tokens which are on a node at the beginning of an epoch as \emph{old blue tokens}.
Respectively, all blue tokens added in epoch $i$ are new blue tokens.
In the following we refer to the set of old blue tokens which are on the nodes at the beginning of epoch $i$ as $B'(i)$.
Note that we can assume w.l.o.g.\ that our protocol balances the old blue tokens and the \emph{new} blue tokens independently.
By \cite{sodaPaper}, the old blue tokens from epoch $i$ are balanced at the beginning of epoch $i+1$ with constant probability,
and the discrepancy is at most~$3$ .
Since the total number of blue tokens generated by \cref{alg:balance} ($\Balance$) is bounded by $\BN$
there are at most $\BN$ old blue tokens. Hence, each node holds at most $\RedToBlue+3$ old blue tokens at the end of each epoch.

Now consider an arbitrary red node $u$ at time~$t_{i+1}$, i.e., the beginning of epoch $(i+1)$.
In the following we call a red node \emph{good} in step $t$ if its number of blue tokens is small enough
such that the node is able to convert its red token into blue ones during its next interaction.
Formally, node $u$ is \emph{good} in step $t$ if it holds $\RedToBlue{}$ or fewer blue tokens in step $t$.
Otherwise we call $u$ \emph{blocked} in step $t$.

Our goal is to show that $R(i+2)\le \tfrac{2}{3} R(i)$ for all even $i$.
We now distinguish three cases depending on the number of red nodes $R(i+1)$.
\begin{enumerate}
\item If $R(i+1) \le \tfrac{2}{3} R(i)$ we are done as the number of red nodes is monotonically decreasing.

\item Suppose $R(i+1) \ge \tfrac{2}{3} R(i)$ and at least half of the red nodes in $R(i+1)$ are good at time $t_{i+1}$.
We want to show that these nodes convert their token on their next interaction.
For this, they must (still) have less than $\PoofBound{}$ after their next interaction.

By the definition, good red nodes have ${\RedToBlue{}}$ or fewer blue token.
Further, any other node has at most $\PoofBound + \RedToBlue \leq 3(\RedToBlue +1)$ blue token in any step.
This can be easily verified by contradiction:
Let $v$ be the first node with more than $\PoofBound + \RedToBlue$ blue tokens.
Suppose that the node received the tokens in \Cref{ln:explosion} of \Cref{alg:balance}.
Then, it would have executed the line with more than $\PoofBound$ blue token.
This contradicts \Cref{ln:explosion-condition} of \Cref{alg:balance}.
Otherwise, if received the tokens in \Cref{ln:balanceblue}, either $v$ or its interaction partner must have had $\PoofBound + \RedToBlue$ blue tokens before.
Thus, $v$ is \emph{not} the first node with that condition.
This contradicts our assumption.
Thus, no matter with which node they interact, they are able to convert their red token into blue ones during the next interaction as
\begin{align*}
\frac{1}{2}\left(\RedToBlue + 3(\RedToBlue +1)\right) \leq \PoofBound.
\end{align*}
Thus, the node meets the condition of \Cref{ln:explosion-condition} and converts the red token.

As within \LBT\ steps each node interacts w.h.p.\ at least once, again w.h.p.\ at least $\tfrac{1}{3}R(i)$ red tokens are eliminated.

\item Suppose $R(i+1) \ge \tfrac{2}{3} R(i)$ and fewer than half of the red nodes in $R(i+1)$ are good at time $t_{i+1}$.
In the following we show that this case cannot happen.

From the definition of goodness it follows that $\ge \tfrac{1}{3}R(i)$ of the red nodes are blocked at time $t_{i+1}$ and
each of them has $\PoofBound$ blue tokens.
Since at most $\RedToBlue + 3$ of these tokens are old blue tokens at time $t_{i+1}$
and each blocked node has at least $\PoofBound$ blue tokens in step $t_{i+1}$, it holds there are
\[
\tfrac{1}{3}R(i)\cdot \left(\PoofBound - (\RedToBlue + 3) \right) = \tfrac{1}{3}R(i) \cdot \RedToBlue
\]
``new'' blue tokens in step $t_{i+1}$ due to conversions in epoch $i$.
Since each red token is converted to at most $\RedToBlue$ ``new'' blue tokens, the number of red tokens eliminated during epoch $i$
is at least
\[
\frac{\tfrac{1}{3}R(i)\cdot \RedToBlue}{\RedToBlue}
\;\ge\;
\tfrac{1}{3}R(i).
\]
This contradicts the assumption that $R(i+1) \ge \tfrac{2}{3}R(i)$.
\end{enumerate}

As regards the probability, with constant probability the number of red nodes decreases every two epochs by at least a constant factor. The result follows immediately by considering $\Theta(\log n)$ epochs and applying Chernoff bounds.
\end{proof}

\subsection{Balancing Red Tokens}\label{sec:red_token_analysis}

\begin{lemma}\label{lem:red-balance-time}
Let $T_0$ be the first time when protocol $\EstLogN$ has stabilized on the correct value
and let $T_1$ be the first time when all
agents have at most one red token (i.e., $\logred{u} \leq 0$).
Then w.h.p. \[ T_1 - T_0 = O(\log(n) \cdot \loadbalancingtime(G)).\]
\end{lemma}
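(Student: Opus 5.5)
We follow the plan of the proof sketch for Part~1: we track individual red tokens and count how often each one takes part in a halving step. After $T_0$ the value $\bclevel{}$ is fixed and no resets occur. The leader then holds $R = 2^{r(\bclevel{})}$ red tokens, and by \cref{obs:number-of-tokens} we have $R \le n/c'$ for a large constant $c'$. Label the tokens $1,\dots,R$ and interpret each interaction of \Balance\ as acting on labelled tokens. In a halving step (lines \ref{alg:balance:2}--\ref{alg:balance:5}), a uniformly random half of the stack moves to the empty neighbour. In a swap step (line \ref{alg:balance:7}), the two stacks exchange nodes; this happens with probability $1/2$ over the edge orientation, which is where the randomness of the swap comes from. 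Every stack has power-of-two size, and each halving halves the stack containing a given token $i$. Hence token $i$ is alone after exactly $r(\bclevel{}) \le \log n$ halvings involving it. It therefore suffices to show that, w.h.p., every token takes part in $\log n$ halving steps within $O(\log n)$ epochs of length $\tau = \Theta(\loadbalancingtime(G))$. We then finish with a union bound over the at most $n$ tokens.

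\emph{Step 1 (marginal walk and weak dependence).} First we show that each fixed token, viewed alone, performs a lazy random walk with transition matrix essentially $\mP$. When its host interacts over edge $\{x,y\}$, the token moves to the other endpoint with probability exactly $1/2$. This is because swaps are decided by the orientation, and in a halving a random half moves. The token stays put otherwise. Next we prove the dependence statement referred to as \cref{lem:tokens_negative_dependence}. Fix the configuration at the start of an epoch and let $I$ be the stack containing $i$. Then for any token $k\notin I$ and any node $w$, the probability that $k$ sits at $w$ at a given time is at most its single-walk probability, up to a constant factor. Moreover, the joint positions of $i$ and $k$ are dominated by independent walks. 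We would prove this by a coupling in which two stacks that would collide instead swap. Swapping preserves the multiset of positions, so the walks can be realized as independent lazy walks until they meet, and exchanged labels after that.

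\emph{Step 2 (constant probability per epoch).} Fix epoch $j$ and consider the last interaction of the host of $i$ in that epoch, with partner $v$. The last interaction exists w.h.p., since $\tau \gg |E|/\delta$. We split into three cases, as in the sketch. If $v$ is empty, a halving happens. If $v$ holds tokens of $I$, a halving already happened in the epoch. If $v$ holds only tokens not in $I$, we need to bound this probability. Since $\tau \ge \loadbalancingtime(G)$, which upper-bounds the mixing time of $\mP$, the position of each token $k\notin I$ at the end of the epoch is close to uniform. Using Step 1, the position of $i$'s partner is also close to stationary, and the two are asymptotically independent of $k$. Hence $\E{\#\{k\notin I \text{ on } v\}} \le (1+o(1))\, R/n \cdot O(\Delta/d)$. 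We handle the degree weighting by noting that $\mP$ is doubly stochastic, so its stationary distribution is uniform. This gives at most $1/2$ for $c'$ large, and by Markov's inequality a halving involving $i$ occurs in epoch $j$ with probability at least $1/2$, conditioned on the history.

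\emph{Step 3 (concentration).} Conditioned on the past, each epoch gives a halving with probability at least $1/2$. The number of successes over $C\log n$ epochs therefore stochastically dominates $\mathrm{Bin}(C\log n,1/2)$. A Chernoff bound shows this is at least $\log n$ with probability $1-n^{-3}$, and a union bound over tokens completes the proof. The main obstacle is Step 1, the rigorous weak-dependence lemma: stacks interact through both blocking and swaps, and it must hold uniformly in the conditioning on earlier epochs. I would first try the collision-as-swap coupling. If that fails, I would fall back to a second-moment bound on pairwise meeting probabilities using $\mP^t$.
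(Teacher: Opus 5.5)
Your architecture matches the paper's. You use epochs of length $\Theta(\loadbalancingtime(G))$, look at the interaction of $i$'s host at the end of the epoch, split into the same three cases (partner empty, partner holding tokens of $I$, partner holding only foreign tokens), and apply Markov's inequality to the foreign tokens, followed by concentration and a union bound.

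\textbf{Step 1.} The gap is exactly the step you flag, and the tool you propose for it does not work. Tokens from different stacks never share a node, while independent walks can. So no coupling realises $(W_i,W_k)$ as independent lazy walks up to relabelling. In the ordered, pointwise sense the domination is actually false. Suppose $i$ and $k$ sit on the two endpoints $a,b$ of the chosen edge, with stacks of size at least $2$ so nothing is converted. Then the pair stays or swaps with probability $1/2$ each, so $\Pr[W_i=a,W_k=b]=1/2$. Both the product of the marginals and the independent-walk probability equal $1/4$.

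What is true, and suffices, is the symmetrised inequality of \cref{lem:tokens_negative_dependence}: $\Pr[\{W_i,W_k\}\subseteq\{u,v\}]\le\Pr[W_i\in\{u,v\}]\cdot\Pr[W_k\in\{u,v\}]$. The paper proves it with the edge sequence fixed. Let $f_\tau(z)$ be the probability that a single lazy walk at $z$ at time $\tau$ ends in $\{u,v\}$; the paper shows that $f_\tau(W_i)\,f_\tau(W_k)$ is a supermartingale. The only nontrivial step is when the chosen edge is exactly $\{W_i,W_k\}=\{x,y\}$. Staying or swapping each has probability at most $1/2$ (the rest is conversion), so the expectation is at most $f(x)f(y)\le\bigl((f(x)+f(y))/2\bigr)^2$, which is the value before the step. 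This comparison relies on exactly one fixed edge acting per step. Hence the mixing input must be the w.h.p.\ entrywise bound $\bigl|\matMix{t_0+1}{t_0+\loadbalancingtime(G)}_{u,v}-1/n\bigr|\le 1/(5n)$ from \cite{sodaPaper} for a fixed random edge sequence, not mixing of the averaged chain $\mP$.

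\textbf{Step 2.} Your bookkeeping here also needs repair. The partner of $i$ is not close to stationary on irregular graphs: in a star-like region it is the centre with probability close to $1$. Your factor $O(\Delta/d)$ is not a constant. You also do not need, and cannot easily get, asymptotic independence; a constant-factor bound suffices. Suppose that, as in the paper, you look at the first interaction of $i$'s host after $\loadbalancingtime(G)$ steps. With the edges fixed, each possible host $x$ determines the partner $U(x)$ and the interaction time $T_x$. Node $x$ does not interact strictly between $t_0+\loadbalancingtime(G)$ and $T_x$, so $i$ is at $x$ at time $T_x-1$ exactly when it is there at time $t_0+\loadbalancingtime(G)$. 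All marginals stay below $1.2/n$ after mixing, because right-multiplying by doubly stochastic matrices cannot increase the largest entry of a row. The symmetrised inequality at time $T_x-1$ then gives $\Pr[W_k=U]\le\sum_x(2.4/n)^2=O(1/n)$ per foreign token. So the expected number of foreign tokens at $U$ is $O(R/n)<1/2$ for $R\le n/16$, with no degree dependence.

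The paper reaches the same conclusion differently. It conditions on the host, which requires dividing by $\Pr[W_i=x]$ and hence a dichotomy. Either $i$'s probability is below $0.75/n$ on more than $n/2$ nodes, in which case $i$ was converted (implying an earlier halving) with probability at least $1/40$. Or $i$ lands with probability at least $3/8$ on nodes where a given foreign token sits at a given other node with conditional probability below $8/n$.
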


First we present an overview of our proof.
The main idea of the proof is to track the positions of individual red tokens and to bound the time until each token is the sole red token on its host node.

We number the red tokens as integers $1$ through $k = \sum_{v} 2^{\logred{v}(T_0)}$.
Recall that, if a node $u$ has $2^{\logred{u}} \ge 2$ tokens and its interaction partner $v$ none at all,
the tokens are evenly split between $u$ and $v$ (see lines \ref{alg:balance:2} to \ref{alg:balance:5} in \cref{alg:balance} $\Balance$).
We call such an operation a \emph{halving step},
and model the locations by moving a uniformly random subset of $u$'s tokens of size $2^{\logred{u} - 1}$ from $u$ to $v$.
If, in line \ref{alg:balance:7} of $\Balance$,
$u$ and $v$ swap their $\logred{}$ values,
we say that all tokens exchange places, except if $\logred{u} = \logred{v}$, then we only do so with probability $1/2$.

Under this assignment of locations,
every red token, when considered by itself, performs a lazy random walk on the graph until it is converted
(see \cref{obs:random_walk}).
Note that initially all red tokens are on the same node,
and that tokens which are on the same node travel together until they are separated due to a halving step.
However, once tokens are on separate nodes,
they can never be on the same node again.

To prove the theorem we fix a red token $i$ and consider disjoint epochs of length $O(\loadbalancingtime(G))$.
Now we consider, for each epoch $j$, the last interaction of token $i$ in epoch $j$. Our goal is show that there is a constant probability of $i$ being involved in a halving step in exactly that interaction (unless, of course, it has already halved earlier in epoch $j$).
Since in the beginning, if $\EstLogN{}$ stabilized on a correct value,
there are at most $n/c$ red tokens (for a constant $c$)
$i$ must be alone after $\log_2 n$ halving steps.

It remains to show that for token $i$ the probability of a halving step is constant per epoch.
A halving step occurs exactly when the node holding token $i$
interacts with a node not having any red tokens.
Since there is only a constant fraction of $n/c$ red tokens (see \cref{obs:number-of-tokens}) there must be many such nodes without red tokens.
If halving steps only occurred at the end of an epoch, the stacks of red tokens would be distributed (more or less) uniformly among the nodes of the graph by the end of the epoch (see \cref{lem:tokens_negative_dependence}). Hence, it would be straightforward to show that interacting with an empty node would happen with constant probability, and we would be done.
The usual method to deal with such situations
is to simply assume that, since there are at most $n/c$ red tokens, the worst-case probability to interact during the last step of an epoch with an empty node is still $1-1/c$. However, stacks of red tokens can be halved at any time during an epoch, and we cannot focus on the last step due to significant dependencies between red tokens which started out on the same node at the beginning of the epoch.

\smallskip

We address the problem as follows (see \cref{lem:bound_halving}).
Let $I$ be the set of red tokens which are, at the start of epoch $j$, in the same stack of red tokens (at the same node) as token $i$.
Tokens in $I$ stay together until they undergo a halving step. Tokens in $I$ and tokens not in $I$ do not move completely independently at random, since they can never be be at the same node, but they move independently otherwise (we make this precise in \cref{lem:tokens_negative_dependence}).
Now let $v$ be the node
interacting with the node holding $i$ in the final interaction of the epoch.
There are three disjoint cases:
$v$ has no tokens at all, $v$ has tokens from $I$,
or $v$ has tokens, but not from $I$.
The first case is easy:
there will be a halving step then and there (unless, of course, $i$ is already alone, in which case there must have been a halving step before).
In the second case it is easy to see that there has been a halving step before,
since the tokens in $I$ are no longer all on the same node.
Hence, only the third case is problematic.
In this case we bound the probability of a halving step by using the fact that tokens starting at different nodes behave almost independently (again \cref{lem:tokens_negative_dependence})
This allows us to bound the expected number of tokens at $v$ not from $I$ by a constant less than $1$.

One complication we have omitted so far is that there are also blue tokens on the nodes and red tokens can only be converted into blue tokens if there are not too many of them at the node (lines \ref{ln:explosion-condition} and \ref{ln:explosion} in \cref{alg:balance} $\Balance$). This is somewhat annoying to deal with formally, but otherwise unproblematic. Apart from this, the red tokens behave independently of blue tokens, so that it is fine to consider the red tokens by themselves.

\newcommand{\AloneTime}[1]{T^{\mathrm{alone}}_{#1}}

\begin{proof}[Proof of \cref{lem:red-balance-time}]
For each token $i$,
let $\AloneTime{i} \geq T_0$ be the first time where $i$ is alone (or converted into blue tokens, if those happen in the same step).
As we have $\logred{v}(T_0 + \tau) \leq 0$ for all $v$
exactly when all tokens are either alone or have been converted to blue tokens,
and tokens must be alone before being converted,
we have $T_1 = \max_{i \in [k]} \AloneTime{i}$.
To bound $\AloneTime{i}$ for a fixed token $i$,
we divide time into disjoint epochs as follows:
Let $T_{0,i} = T_0$,
and let, for $j \geq 1$,
\[T_{j,i}=\begin{cases}T_{j,i-1} + \loadbalancingtime(G),\quad&\textup{if $\TokenPos{i}{T_{j,i-1}+\loadbalancingtime(G)} = \bot$},\\
\min\{t \geq T_{j,i+1} + \loadbalancingtime(G) \mid \TokenPos{i}{T_{j,i+1} + \loadbalancingtime(G)} \in e(t)\},\quad&\textup{otherwise.} \end{cases}\]
Then epoch $j$ is the time interval $[T_{j-1,i}+1,T_{j,i}]$.
Call epoch $j \geq 1$
successful
if token $i$ is alone at the beginning of the epoch,
if it has been converted into blue tokens by the end of the epoch,
or if it has undergone a halving step during the epoch.

Let $E_i$ be the index of the $\lfloor \log n \rfloor$th successful epoch for $i$.
Now at time $T_{E_i,i}$,
token $i$ is either alone or has been converted into blue tokens.
This is because
$i$ becomes alone exactly when it participates in its $\logred{v}(T_0)$th halving step:
There are $2^{\logred{v}(T_0)} \leq n$ tokens at $i$'s position at time $T_0$ (including $i$),
and in general $2^{\logred{v}(T_0+\tau_0)-j}$ tokens at $i$'s position after the $j$th halving step.
Hence $\AloneTime{i} \leq T_{E_i,i}$.

To bound the latter,
by \cref{lem:bound_halving,lem:bound_concrete_probs} (which we state and prove below),
the probability that an epoch is successful
is at least $\frac{1}{40} - n^{-6} = \Omega(1)$.
And so we have $E_i = O(\log n)$ w.h.p.
And $T_{E_i,i} - T_{E_i,i} - T_0$ is $E_i \cdot \loadbalancingtime(G)$ plus the time until a node interacts $E_i$ times.
Since $E_i \geq \lfloor \log n \rfloor$,
the time until a node interacts $E_i$ times is in $O(n E_i)$ w.h.p.,
which is in $O(n \log(n)) \subseteq O(\loadbalancingtime(G))$ w.h.p.
And so $T_{E_i,i} - T_{0,i} = O(\log(n) \loadbalancingtime(G))$ with probability at least $1 - n^{-2}$.
The claim then follows by union bound over all $i$.
\end{proof}

So it remains to state and prove the lemmas referred to in the preceding proof.
Throughout, we write $\TokenPos{i}{\tau}$ for the position of token $i$ at time $T_0 + \tau$,
or $\bot$ if token $i$ has been converted into blue tokens at or before time $T_0 + \tau$.
The following lemma gives two ways to bound from below the probability of a token $i$ being involved in a halving step in some time interval.
The idea is to observe the first interaction of $i$'s location after some fixed time.
Then the probability of a halving step occurring at or that interaction can be bounded from below both by the probability of the node holding $i$ interacting with a node having a token
which started at a different location from $i$,
or alternatively by the probability of the token having been converted into blue tokens already.

\begin{lemma}\label{lem:bound_halving}
Fix a sequence of (undirected) edges $(e(\tau))_{\tau=\tau_0}^\infty$
chosen by the scheduler (but not their orientations).
Let $\vec{w}$ be a location vector,
and let token $i$ be a token that is not alone in $\vec{w}$.
Let $T > \tau_0 + \loadbalancingtime(G)$
be the first step in which token $i$ participates in an interaction, and let $U$ be the random variable denoting the interaction partner (or $\bot$ if no such interaction exists).

Define the following events:\begin{itemize}
\item $\Ev{H}$: token $i$ participates in a halving step in the time interval $[\tau_0, T]$
\item $\Ev{O}$: there is a token $j$ with $w_j \neq w_i$ and $\TokenPos{j}{T-1} = U \neq \bot$
\end{itemize}
Then
\begin{equation}\label{eq:bound_halving}
\Pr[\Ev H \mid \vTokenPos{\tau_0} = \vec{w}] \geq \max\{1 - \Pr[\Ev{O} \mid \vTokenPos{\tau_0} = \vec{w}], \Pr[\TokenPos{i}{\tau_0 + \loadbalancingtime(G)} = \bot \mid \vTokenPos{\tau_0} = \vec{w}]\}.\end{equation}
\end{lemma}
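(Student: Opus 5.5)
We prove the two lower bounds in \eqref{eq:bound_halving} separately. Both follow from a deterministic case analysis of what happens to token $i$ at its interaction in step $T$; the only probabilistic content is the final complement step. Throughout we condition on $\vTokenPos{\tau_0} = \vec{w}$ and on the fixed undirected edge sequence, so that the remaining randomness is only in the edge orientations and in the choice of which tokens move during halving steps.

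\emph{Second bound.} Suppose $\TokenPos{i}{\tau_0 + \loadbalancingtime(G)} = \bot$. Then token $i$ was converted into blue tokens at some time in $[\tau_0, \tau_0+\loadbalancingtime(G)]$.
\begin{itemize}
\item Conversion requires $\logred{} = 0$, so $i$ was alone at its host just before conversion.
\item At $\tau_0$ it was not alone, by the assumption on $\vec{w}$.
\item The number of tokens at $i$'s location can only decrease through a halving step involving $i$. Swaps move the whole stack, and two stacks never merge.
\end{itemize}
Hence $i$ underwent a halving step in $[\tau_0, \tau_0+\loadbalancingtime(G)] \subseteq [\tau_0, T]$. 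This gives $\Ev{H}$, so $\Pr[\Ev H] \geq \Pr[\TokenPos{i}{\tau_0+\loadbalancingtime(G)}=\bot]$.

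\emph{First bound.} We show $\neg\Ev{H} \Rightarrow \Ev{O}$, which gives $\Pr[\Ev H] \geq 1-\Pr[\Ev O]$. Assume no halving involving $i$ occurs in $[\tau_0,T]$. Then on $[\tau_0, T-1]$ the stack containing $i$ is exactly the set $I=\{j : w_j = w_i\}$. Since $|I|\geq 2$, token $i$ is never alone, so it is not converted and $T$ exists (we work on the event of finitely many... otherwise $U=\bot$; this corner case needs a separate check). Consider the interaction at step $T$, with partner $U$, and split on the contents of $U$ at time $T-1$.
\begin{itemize}
\item \emph{$U$ holds no red token.} Since $i$'s host has $|I|\geq 2$ tokens, i.e.\ $\logred{}\geq 1$, lines \ref{alg:balance:2}--\ref{alg:balance:5} of \Balance\ perform a halving step in which $i$ participates at time $T$. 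This contradicts $\neg\Ev{H}$.
\item \emph{$U$ holds a token from $I$.} All of $I$ sits at $i$'s host, and $U$ is a different node, so this is impossible.
\item \emph{$U$ holds some token $j\notin I$.} Then $w_j\neq w_i$ and $\TokenPos{j}{T-1}=U\neq\bot$, which is exactly $\Ev{O}$.
\end{itemize}

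\emph{Expected obstacle.} The main care point is the bookkeeping of the token-location model.
\begin{itemize}
\item One must check that the swap rule in line \ref{alg:balance:7} never merges stacks, and that only halving changes stack membership.
\item One must check that the "last interaction at or after $\tau_0+\loadbalancingtime(G)$" indexing matches the events' time windows; in particular that positions at $T-1$ are the pre-interaction state.
\item The case $U=\bot$ (no interaction ever happens) has probability zero for a connected graph, or else can be absorbed under the fixed-edge-sequence convention.
\end{itemize}
Taking the maximum of the two bounds yields \eqref{eq:bound_halving}.
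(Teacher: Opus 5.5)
Your proof is correct and is essentially the paper's argument. The paper splits the outcome space, up to a null event, into the disjoint events $\Ev{\bot}$, $\Ev{I}$, $\Ev{empty}$, $\Ev{O}$ and shows the first three lie in $\Ev{H}$; you prove the equivalent contrapositive $\neg\Ev{H} \Rightarrow \Ev{O}$ by the same case split on the contents of $U$ at time $T-1$, and you use the same ``conversion requires being alone'' argument for $\Ev{\bot}\subseteq\Ev{H}$ (your handling of the $U=\bot$ case also matches the paper's probability-zero remark).
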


\begin{proof}
We condition on $\vTokenPos{\tau_0} = \vec{w}$ and the edge sequence throughout.
We write $I = \{j \mid w_j = w_i\}$
for the set of tokens sharing a location with $i$ in $\vec{w}$ (including $i$ itself),
and $O = \{j \mid w_j \neq w_i\}$ for the set of tokens \emph{not} sharing a location with $i$ in $\vec{w}$.

We define event $\Ev{I}$
as the event that there is a token $j$ with $w_j = w_i$ and $\TokenPos{j}{T_1} = U \neq \bot$.
Furthermore we let $\Ev{empty}$ be the event that $U$ is empty at time $T-1$ (i.e., immediately before the interaction at time $T$ happens).
Lastly, we let $\Ev{\bot}$ be the event that $\TokenPos{i}{\tau_0 + \loadbalancingtime(G)} = \bot$.

As tokens from $I$ and tokens from $O$ start at different positions at time $\tau_0 \leq T$,
no node $v$ will ever have a token in $I$ and a token in $O$ at time $T$.
So $\Ev{I}$ and $\Ev{O}$ are mutually exclusive,
as are all pairings with $\Ev{\bot}$ and $\Ev{empty}$ by definition of the events.
And the only possibility not covered by these four events
is that token $i$ is placed at some node,
but that node never interacts anymore.
But this occurs only with probability $0$ for the random scheduler, so we can safely ignore this case.
So we have \begin{equation}\label{eq:not_quite_partition}\Pr[\Ev{\bot}] + \Pr[\Ev{I}] + \Pr[\Ev{O}] + \Pr[\Ev{empty}] = 1.\end{equation}

Now first, as $i$ was not alone at time $\tau_0$,
we must have $\Ev{\bot} \subseteq \Ev{H}$.
This is because
a token can only be deleted at time $\tau_0 + \loadbalancingtime(G)$
if it was alone at a prior time,
implying that it must have participated in a halving step in the time interval $[\tau_0, \tau_0 + \loadbalancingtime(G)]$.

Next, consider $\Ev{empty}$ holding.
If $i$ is alone at time $T$,
then it must have participated in a halving step since time $\tau_0$ (as it was not alone then),
so that $\Ev H$ holds.
If $i$ is not alone at time $T$,
then it participates in a halving step at time $T$ (since $U$ is empty), so that $\Ev H$ also holds.
So \(\Ev{empty} \subseteq \Ev H.\)

Lastly, if $\Ev I$ holds,
then at time $T$ there are tokens in $I$ which are on a different node than $i \in I$.
This necessarily means that $i$ had participated in a halving step between time $\tau_0$ and $T$ (since all tokens in $I$ were at the same node at time $\tau_0$),
so that \(\Ev I \subseteq \Ev H.\)

This all together implies
\(\Ev \bot \cup \Ev I \cup \Ev{empty} \subseteq \Ev H,\)
and so combined with \cref{eq:not_quite_partition} and disjointness of $\Ev \bot$, $\Ev I$, and $\Ev{empty}$ we get
\[\Pr[\Ev H] \geq \Pr[\Ev{\bot}] + \Pr[\Ev{I}] + \Pr[\Ev{empty}] = 1 - \Pr[\Ev{O}]\]
which directly implies the claim.
\end{proof}

In order to derive bounds for the right-hand-side of \cref{eq:bound_halving},
we consider the random walks
which are performed by not-yet-converted tokens:
When an edge at the node where a token is placed is chosen by the scheduler,
the token moves to the other side with probability $1/2$,
and stays in place otherwise (and perhaps is converted afterwards).
More formally,
fix the sequence $(e(\tau))_{\tau}$ of edges chosen by the scheduler (but not their orientations!),
and let $\mathbf{M}^{(\tau)} \in [0,1]^{V \times V}$ be the stochastic matrix given by
\[\mathbf{M}_{u,v}^{(\tau)} = \begin{cases}
\frac{1}{2},\quad\textup{if $\{u, v\} \cap e(\tau) \neq \emptyset$,} \\
1,\quad\textup{if $u = v \not\in e(\tau)$,} \\
0,\quad\textup{otherwise.}
\end{cases}\]
Furthermore, let
$\mathbf{M}^{[\tau_1, \tau_2]} = \mathbf{M}^{(\tau_1)} \cdot \mathbf{M}^{(\tau_1 + 1)} \cdots \mathbf{M}^{(\tau_2)}$.

\begin{observation}\label{obs:random_walk}
Fix an edge sequence $(e(\tau))_{\tau=\tau_0}^\infty$, fix a token $j$ and two nodes $u, v$.
Then for any $\tau$,
\[\Pr[\TokenPos{j}{\tau} = v \mid \TokenPos{j}{\tau-1} = u] \leq \mathbf{M}_{u,v},\]
and more generally for any $\tau_1 < \tau_2$
\[\Pr[\TokenPos{j}{\tau_2} = v \mid \TokenPos{j}{\tau_1-1} = u] \leq \mMix{u}{v}{\tau_1}{\tau_2}.\]
\end{observation}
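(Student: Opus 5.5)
The plan is to prove the one-step inequality by a case analysis of \Balance{} for a single scheduled edge, and then obtain the multi-step bound by induction over time. Throughout, the undirected edge sequence $(e(\tau))_\tau$ is fixed, and the only randomness is the orientations, which are independent fair coins. For the token bookkeeping, the only further randomness is the choice of which half of a stack moves in a halving step, and the coin deciding a swap between stacks of equal size. I read $\mathbf{M}^{(\tau)}$ as the lazy walk matrix: entry $\tfrac12$ for $u\neq v$ with $\{u,v\}=e(\tau)$ and for $u=v\in e(\tau)$, entry $1$ for $u=v\notin e(\tau)$, and $0$ otherwise. The key point is that I will prove the one-step bound conditionally on the entire history up to time $\tau-1$, that is, the full configuration and all token positions, and not only on $\TokenPos{j}{\tau-1}=u$. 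This conditional version is what makes the multi-step statement go through.

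For the one-step bound, first suppose $u\notin e(\tau)$. Then token $j$ cannot move. It either stays at $u$ or is converted, in which case its position becomes $\bot$. So the probability of being at $v$ is at most $\1[v=u]=\mathbf{M}^{(\tau)}_{u,v}$.

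Now suppose $e(\tau)=\{u,w\}$. It suffices to show that, conditionally on the history, token $j$ ends at $w$ with probability at most $\tfrac12$ and stays at $u$ with probability at most $\tfrac12$. Conversion only moves mass to $\bot$, so it can only decrease these probabilities, which is why the statement has ``$\le$''. I go through the lines of \Balance{}:
\begin{itemize}
\item Halving ($\logred{}\ge1$ on $u$, $\bot$ on $w$, in either orientation): by the modelling convention a uniformly random half of $u$'s stack moves, so $j$ moves with probability exactly $\tfrac12$.
\item Both nodes hold red tokens with different $\logred{}$ values: the swap in line~\ref{alg:balance:7} happens for exactly one of the two orientations, so it happens with probability $\tfrac12$.
\item Both nodes hold red tokens with equal $\logred{}$ values: the swap happens in both orientations, but by convention the tokens exchange places only with probability $\tfrac12$.
\item $\logred{u}=0$ and $w$ holds no red token: line~\ref{alg:balance:7} compares $0$ with $\bot$. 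Treating $\bot$ as smaller than every integer, the swap happens exactly when $w$ is the initiator, again probability $\tfrac12$. I would check that the paper's intended reading of $\le$ with $\bot$ gives this; under any consistent convention exactly one orientation triggers the swap.
\end{itemize}
In every case the orientation coin, or the extra coin, is independent of the history, which gives the conditional bound.

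For the multi-step bound I will induct on $\tau_2$, using the Markov property of the joint process consisting of the configuration and the token positions. Condition on the full history at time $\tau_2-1$ and apply the one-step bound. Then sum over the intermediate position $x$:
\[
\Pr\bigl[\TokenPos{j}{\tau_2}=v \,\big|\, \TokenPos{j}{\tau_1-1}=u\bigr]
\le \sum_{x} \Pr\bigl[\TokenPos{j}{\tau_2-1}=x \,\big|\, \TokenPos{j}{\tau_1-1}=u\bigr]\,\mathbf{M}^{(\tau_2)}_{x,v}
\le \sum_x \mMix{u}{x}{\tau_1}{\tau_2-1}\,\mathbf{M}^{(\tau_2)}_{x,v}.
\]
The last expression equals $\mMix{u}{v}{\tau_1}{\tau_2}$. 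The state $\bot$ is absorbing and simply drops out of the sum, and all entries are nonnegative, so the inequalities are preserved. The base case $\tau_2=\tau_1$ is the one-step bound.

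I expect the main obstacle to be making the conditioning rigorous. The token's motion is not Markov on its own, because it depends on other tokens and on blue loads. So the one-step bound must hold uniformly over every history consistent with $\TokenPos{j}{\tau-1}=u$, and the $\bot$-versus-$0$ comparison must be handled carefully so that no rule moves a token deterministically.
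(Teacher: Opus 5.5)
Your proposal is correct and follows the paper's own justification, which is the remark just before the observation: whenever an incident edge is scheduled, a not-yet-converted red token moves across it with probability $1/2$ and otherwise stays, and conversion only shifts mass to $\bot$, which is why the bounds are inequalities. Your case analysis of \Balance{} (including the $0$-versus-$\bot$ comparison in the swap rule, which the paper leaves implicit), your one-step bound holding uniformly over histories, and your induction over $\tau_2$ simply make that one-line argument rigorous.
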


\noindent Note that the inequalities in \cref{obs:random_walk} would hold with equality, if tokens could not be deleted.
Of course, these random walks of different tokens are not independent.

The next lemma now bounds the probabilities on the right-hand-side of \cref{eq:bound_halving}.

\begin{lemma}\label{lem:bound_concrete_probs}
In the setting of \cref{lem:bound_halving}, we have
\[\max\{1 - \Pr[\Ev{O} \mid \vTokenPos{\tau_0} = \vec{w}], \Pr[\TokenPos{i}{\tau_1} = \bot \mid \vTokenPos{\tau_0} = \vec{w}]\} \geq \frac{1}{40} - \frac{1}{n^{-6}}.\]
\end{lemma}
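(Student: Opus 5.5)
The plan is a dichotomy on whether token $i$ is likely to have been converted by time $\tau_1 = \tau_0 + \loadbalancingtime(G)$. If $\Pr[\TokenPos{i}{\tau_1} = \bot \mid \vTokenPos{\tau_0} = \vec{w}] \geq \frac{1}{40}$, we are done. Otherwise we show $\Pr[\Ev{O} \mid \vTokenPos{\tau_0} = \vec{w}] \leq \frac{39}{40} + n^{-6}$.

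For this we bound $\Pr[\Ev O]$ by a union bound over pairs consisting of an origin node $x \neq w_i$ (carrying tokens of $O$) and the partner $U$. Fix an arbitrary token $j_x$ initially at each such $x$; tokens starting on the same node stay together until a halving step. The event $\Ev O$ therefore implies that, for some $x$, some token originating at $x$ sits at $U$ at time $T-1$. We bound this by $\sum_x \Pr[\exists j \text{ with } w_j = x,\ \TokenPos{j}{T-1}=U]$ and aim to show this sum is at most about $(\#\text{red tokens})/n \cdot (1+o(1))$. Since by \cref{obs:number-of-tokens} the number $R$ of red tokens is at most $\frac{4^c}{128}n$, the ratio $R/n$ is a small constant.

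The key intermediate step is a negative-dependence/independence statement, presumably \cref{lem:tokens_negative_dependence}. Conditioned on the edge sequence, the walk of token $i$ (which determines $T$ and $U$, since $U$ is the other endpoint of the first edge after $\tau_1$ incident to $i$'s location) and the walk of any token outside $I$ are independent except that they cannot collide. Collisions only \emph{reduce} the probability that an $O$-token sits at $U$. Using \cref{obs:random_walk}, for each token $j\in O$ we get
\[
\Pr[\TokenPos{j}{T-1}=U] \le \sum_{u}\Pr[U=u]\,\mMix{w_j}{u}{\tau_0}{T-1}.
\]
Because $T-1 \geq \tau_1 = \tau_0+\loadbalancingtime(G)$, the mixing property underlying the definition of $\loadbalancingtime(G)$ (the spectral bound from \cite{sodaPaper}) gives $\mMix{w_j}{u}{\tau_0}{T-1} \le \frac{1}{n} + n^{-7}$ with probability at least $1-n^{-6}$ over the edge sequence. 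For this step we need a random-edge-sequence version, or a w.h.p. statement about the product of random matrices; this failure probability is the source of the $n^{-6}$ term.

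Summing over at most $R$ tokens (or stacks) gives $\Pr[\Ev O] \le R(\frac1n + n^{-7}) + n^{-6} \le \frac{4^c}{128}+o(1)$. This is at most $\frac{39}{40}$ for the constant $c$ used, so $1-\Pr[\Ev O] \geq \frac1{40}-n^{-6}$.

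The main obstacle is the $T$-dependence: $T$ and $U$ are determined by $i$'s trajectory, which depends on the other tokens through halving steps and conversions. We therefore cannot simply evaluate the marginal of $j$ at a fixed time. I would try to handle this by conditioning on $i$'s full trajectory up to $\tau_1$ together with the edge sequence. After $\tau_1$, $T$ depends only on the edge sequence and $i$'s location, and the correlation with $j$ reduces to the "cannot share a node" constraint, which only pushes $j$ away from $i$'s vicinity. Making this coupling rigorous, i.e. showing that conditioning on $i$'s path does not increase $j$'s occupation probability of any node beyond the mixed value, is the delicate part and is where \cref{lem:tokens_negative_dependence} must be invoked carefully.
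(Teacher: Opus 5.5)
Your overall shape (a dichotomy with the deletion event, Markov over $O$-tokens, mixing after $\loadbalancingtime(G)$ steps, and an appeal to \cref{lem:tokens_negative_dependence}) matches the paper. The gap is the per-token estimate $\Pr[\TokenPos{j}{T-1}=U]\le\sum_u\Pr[U=u]\,\mMix{w_j}{u}{\tau_0}{T-1}$. With the edge sequence fixed, token $i$ does not move between $\tau_1$ and $T$, so $T$ and $U$ are deterministic functions of $\TokenPos{i}{\tau_1}$. Your estimate therefore needs $\Pr[\TokenPos{j}{T-1}=v\mid\TokenPos{i}{\tau_1}=u]\lesssim\frac1n$ where $v$ is a \emph{neighbour} of $u$, i.e.\ pointwise negative correlation between $i$ and $j$.

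The justification you offer (collisions only lower the probability, exclusion ``only pushes $j$ away from $i$'s vicinity'') is false. Exclusion pushes $j$ onto the other endpoints of edges at $i$'s node, which is exactly where $U$ lives. Suppose at time $s-1$ tokens $i,j$ occupy the two endpoints $a,b$ of $e(s)$. They either swap or stay, so conditionally $\Pr[\TokenPos{i}{s}=a,\ \TokenPos{j}{s}=b]=\frac12$, while the product of the marginals is $\frac14$. Conditioning on $i$'s whole trajectory, as you propose, makes this worse: a step where $i$ swapped with the stack at $x$ reveals that this stack now sits at a neighbour of $i$'s node. \Cref{lem:tokens_negative_dependence} only gives the set version $\Pr[\{\TokenPos{i}{t},\TokenPos{j}{t}\}\subseteq\{u,v\}]\le\Pr[\TokenPos{i}{t}\in\{u,v\}]\cdot\Pr[\TokenPos{j}{t}\in\{u,v\}]$, not a pointwise one.

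The missing idea is how to turn the set version into a conditional bound. You divide by $\Pr[\TokenPos{i}{\tau_1}=u]$, which requires a pointwise \emph{lower} bound on $i$'s law at $\tau_1$; this is what the dichotomy is really for. By \cref{obs:random_walk}, $\Pr[\TokenPos{i}{\tau_1}=v]\le\mMix{w_i}{v}{\tau_0+1}{\tau_1}$, and with probability $1-n^{-6}$ these entries lie in $[\frac{0.8}{n},\frac{1.2}{n}]$. So there are two cases:
\begin{itemize}
\item More than $n/2$ nodes have $\Pr[\TokenPos{i}{\tau_1}=v]<\frac{0.75}{n}$. Then the total deficit gives $\Pr[\TokenPos{i}{\tau_1}=\bot]\ge\frac1{40}-n^{-6}$.
\item Otherwise, the set $V'$ of nodes with probability at least $\frac{0.75}{n}$ satisfies $\Pr[\TokenPos{i}{\tau_1}\in V']\ge\frac38$. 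For $u\in V'$, $v\neq u$ and $j\in O$, the set version yields $\Pr[\TokenPos{j}{\tau_1}=v\mid\TokenPos{i}{\tau_1}=u]\le\frac{4n}{3}\bigl(\frac{2.4}{n}\bigr)^2<\frac8n$. This persists to time $T-1$, because $u$ is untouched in between and the maximum over $v\ne u$ cannot increase. With at most $n/16$ red tokens, Markov bounds by $\frac12$ the probability that $U$ holds an $O$-token, so $\Pr[\Ev{O}]\le 1-\frac12\cdot\frac38=\frac{13}{16}$.
\end{itemize}

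Your second branch never uses the hypothesis $\Pr[\TokenPos{i}{\tau_1}=\bot]<\frac1{40}$, which is a symptom of this gap. The constant-factor loss ($\frac8n$ instead of $\frac1n$) also means you need fewer than $n/16$ red tokens, not merely fewer than $\frac{39}{40}n$. Your bound $\frac{4^c}{128}$ exceeds $1$ once $c\ge4$ in any case. Minor: the entrywise mixing error is $n^{-3.5}$, not $n^{-7}$.
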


\begin{proof}
Let $u = w_i$ be the initial position of token $i$,
and note that $\Pr[\TokenPos{i}{t_1} = v \mid \vTokenPos{t_0} = \vec{w}] = \Pr[\TokenPos{i}{t_1} = v \mid \TokenPos{i}{t_0} = u] \leq \matMix{t_0}{t_1}$ for any $t_1 > t_0$ (see \cref{obs:random_walk}).

First,
note that by the second display equation in the proof of Corollary 8.3 in the full version of \cite{sodaPaper} and definition of $\loadbalancingtime(G)$,
for any $t_0$,
we have \[\Pr\left[\max_{u} \left\|\matMix{t_0+1}{t_0+\loadbalancingtime(G)}_{u,\cdot} - \frac{\vec{1}}{n}\right\|_2^2 \leq \frac{1}{n^7}\right] \geq 1 - n^{-6},\]
and consequently $\Pr[\max_{u,v} |\matMix{t_0+1}{t_0 + \loadbalancingtime(G)}_{u,v} - \frac{1}{n}| \leq n^{-3.5}] \geq 1-n^{-6}$.
As $n \geq 2$,
we have $n^{-3.5} \leq 2^{-2.5}/n < 5/n$,
so that
\begin{equation}\label{eq:mix_bound}\Pr\left[\max_{u,v} \left|\matMix{t_0+1}{t_0 + \loadbalancingtime(G)}_{u,v} - \frac{1}{n}\right| \leq \frac{1}{5n}\right] \geq 1 - n^{-6}.\end{equation}

Now assume that there more than $n/2$ nodes $v$ for which $\Pr[\TokenPos{i}{t_1} = v \mid \TokenPos{i}{t_0} = u] < 0.75/n$, and let $V'$ be the set of such nodes.
Then
\begin{align*}
\Pr[\TokenPos{i}{t_1} = \bot \mid \TokenPos{i}{t_0} = u]
&= 1 - \sum_{v\in V} \Pr[\TokenPos{i}{t_1} = v \mid \TokenPos{i}{t_0} = u]
\\ &= \sum_{v\in V} \mathbf{M}_{u,v} - \sum_{v\in V} \Pr[\TokenPos{i}{t_1} = v \mid \TokenPos{i}{t_0} = u]
\\ &\geq \sum_{v\in V'} (\mathbf{M}_{u,v} - \Pr[\TokenPos{i}{t_1} = v \mid \TokenPos{i}{t_0} = u])
\\ &> \sum_{v\in V'} (\mathbf{M}_{u,v} - (\mathbf{M}_{u,v} - 0.05/n)) - n^{-6}
> \frac{n}{2} \cdot \frac{0.05}{n} - n^{-6}
\\ &= \frac{1}{40} - n^{-6},
\end{align*}
where the first strict inequality uses \cref{eq:mix_bound}.

Otherwise, there are at most $n/2$ nodes $v$ where $\Pr[\TokenPos{i}{t_1} = v \mid \TokenPos{i}{t_0} = u] \geq 0.75/n$;
let $V'$ be the set of such $v$.
Then
\[\Pr[\TokenPos{i}{t_1} \in V' \mid \TokenPos{i}{t_0} = u] \geq \frac{n}{2} \cdot \frac{0.75}{n} = \frac{3}{8}.\]
And for all $u \in V'$, all $v \neq u$, and $j \in O$:
\begin{align*}
\Pr[\TokenPos{j}{t_1} = v \mid \TokenPos{i}{t_1}=u]
&= \frac{\Pr[\TokenPos{i}{t_1}=u \wedge \TokenPos{j}{t_1} = v]}{\Pr[\TokenPos{i}{t_1}=u ]}
\\ &\leq \frac{\Pr[\{\TokenPos{i}{t_1}, \TokenPos{j}{t_1}\}\subseteq \{u, v\} ]}{0.75/n}
\\ &\leq \frac{4n}{3} \cdot \Pr[\TokenPos{i}{t_1} \in \{u, v\} ] \cdot \Pr[\TokenPos{j}{t_1} \in \{u, v\} ]
\\ &\leq \frac{4n}{3} \cdot \frac{2.4}{n} \cdot \frac{2.4}{n}
< \frac{8}{n},
\end{align*}
where the first inequality is by implication and the fact that $u \in V'$,
and the second inequality is by \cref{lem:tokens_negative_dependence} (using that $j \in O$), which we state and prove further below due to its technical nature.
And so in particular
\[\max_{v \neq u} \Pr[\TokenPos{j}{t_1} = v \mid \TokenPos{i}{t_1}=u] < \frac{8}{n}.\]
Now if we continue until the time $T$ of the next interaction involving node $u$,
the position of token $j$ is a Markov chain on $V \setminus \{u\} \cup\{\bot\}$,
and so \(\max_{v \neq u} \Pr[\TokenPos{j}{t+\tau} = v \mid \TokenPos{i}{t_1}=u]\)
is non-increasing in $\tau$ for $\tau < T$.
And hence, for all $v \neq u$,
\[\Pr[\TokenPos{j}{T-1} = v \mid \TokenPos{i}{t_1}=u] < \frac{8}{n}.\]
So by Markov's inequality and linearity of expectation,
as there are at most $n/16$ tokens by assumption,
\[\Pr[\{j \in O \mid \TokenPos{j}{T-1}=v \} \geq 1 \mid \TokenPos{i}{t_1}=u] < \frac{1}{2}.\]
By the law of total probability over all possible values of $U$, and since $U \neq u$
we then get
\[\Pr[\{j \in O \mid \TokenPos{j}{T-1}=U \} \geq 1 \mid \TokenPos{i}{t_1}=u] < \frac{1}{2}.\]
So then combining the above we get,
\begin{align*}
\MoveEqLeft \Pr[\{j \in O \mid \TokenPos{j}{T-1}=U\} \geq 1]
\\&= \Pr[\{j \in O \mid \TokenPos{j}{T-1}=U\} \geq 1 \mid \TokenPos{i}{t_1} \in V'] \cdot \Pr[\TokenPos{i}{t_1} \in V']
\\ &+ \Pr[\{j \in O \mid \TokenPos{j}{T-1}=U\} \geq 1 \mid \TokenPos{i}{t_1} \not\in V'] \cdot \Pr[\TokenPos{i}{t_1} \not\in V']
\\ &< \frac{1}{2} \cdot \Pr[\TokenPos{i}{t_1} \in V']
+ 1 \cdot \Pr[\TokenPos{i}{t_1} \not\in V']
\\ &= \frac{1}{2} + \frac{1}{2} \cdot (1 - \Pr[\TokenPos{i}{t_1} \in V'])
\\ &\leq \frac{1}{2} + \frac{1}{2} \cdot \frac{5}{8}
= \frac{13}{16}.
\end{align*}
This yields the claim as $\max\{1 - \frac{13}{16}, \frac{1}{40} - n^{-6}\} = \frac{1}{40} - n^{-6}.$
\end{proof}

The following and final lemma for this section, which we deferred due to its technical nature, is an adaptation of Lemma 5.1 from the full version of \cite{sodaPaper}.
\begin{lemma}\label{lem:tokens_negative_dependence}
Fix the sequence of edges chosen by the scheduler (but not their orientations).
Let $i \neq j$ be tokens,
let $\vec{w}$ be a location vector where $w_i \neq w_j$.
Then for all $u \neq v$ and $t > t_0$:
\begin{align*}\MoveEqLeft\Pr[\{\TokenPos{i}{t}, \TokenPos{j}{t}\}\subseteq \{u, v\} \mid \vTokenPos{t_0} = \vec{w}]
\\& \leq \Pr[\TokenPos{i}{t} \in \{u, v\} \mid \vTokenPos{t_0} = \vec{w}]
\cdot \Pr[\TokenPos{j}{t} \in \{u, v\} \mid \vTokenPos{t_0} = \vec{w}].\end{align*}
\end{lemma}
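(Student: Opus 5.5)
The plan is to prove a stronger statement by backward induction on the number of remaining steps. Throughout, fix $A=\{u,v\}$, the target time $t$, and the edge sequence; the only remaining randomness is the edge orientations and the random choice of the halving subsets. For a time $s$ and a location vector $\vec x$ with $x_i\neq x_j$, define
\[
h_s(\vec x)=\Pr[\TokenPos{i}{t}\in A \wedge \TokenPos{j}{t}\in A\mid \vTokenPos{s}=\vec x],
\qquad
f_s^{(k)}(\vec x)=\Pr[\TokenPos{k}{t}\in A\mid \vTokenPos{s}=\vec x]
\]
for $k\in\{i,j\}$. I will show $h_s(\vec x)\le f_s^{(i)}(\vec x)f_s^{(j)}(\vec x)$ for every $s\le t$, by induction on $t-s$. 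The base case $s=t$ is an identity, since every quantity is an indicator.

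First I would make the single-step dynamics explicit, following the modelling fixed before \cref{obs:random_walk}. Suppose $e(s+1)=\{a,b\}$.
\begin{itemize}
\item If neither token is at $a$ or $b$, neither token moves, and the induction step is immediate.
\item If exactly one token, say $i$, is at $a$, then $i$ ends at $a$ or $b$ with probability $1/2$ each. In a halving step the subset is uniform, so $i$ lands on either side with probability $1/2$. In a swap, line~\ref{alg:balance:7} sends the larger stack to the initiator, whose identity is a fair coin. Token $j$ stays where it is.
\item If $i$ is at $a$ and $j$ is at $b$, then after the step they are at $(a,b)$ or $(b,a)$ with probability $1/2$ each. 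They are never merged.
\end{itemize}
The key structural fact I want is that the conditional marginal of token $k$ from time $s+1$ on depends only on its own position, and equals the same function $g_{s+1}(x)=\sum_{y\in A}\mMix{x}{y}{s+2}{t}$ for both $k=i$ and $k=j$. This is exactly \cref{obs:random_walk} with equality, which holds as long as tokens are not deleted.

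With this in hand, the two nontrivial cases go through.

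\emph{One token on the edge.} The joint law at $s+1$ is a mixture over $i$'s two possible positions. In both branches $j$'s marginal is the same value $g(x_j)$, so the product of the inductive bounds averages to $\tfrac12\bigl(g(a)+g(b)\bigr)g(x_j)=f^{(i)}f^{(j)}$.

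\emph{Both tokens on the edge.} Here the inductive bound gives
\[
h_s\le \tfrac12\bigl(g(a)g(b)+g(b)g(a)\bigr)=g(a)g(b).
\]
The desired right-hand side is $\tfrac14\bigl(g(a)+g(b)\bigr)^2$. The inequality $g(a)g(b)\le \tfrac14\bigl(g(a)+g(b)\bigr)^2$ is AM--GM, i.e.\ $\bigl(g(a)-g(b)\bigr)^2\ge 0$. This is where it matters that both tokens are driven by the same kernel: for two different functions $f,g$ one would need $\bigl(f(a)-f(b)\bigr)\bigl(g(a)-g(b)\bigr)\ge0$, which need not hold.

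The main obstacle is conversion (deletion) of tokens. Deletion makes the marginals only upper-bounded by the mixing kernel, as stated in \cref{obs:random_walk}. It also makes them depend on blue loads and on the other token, for instance through whether a token's stack was halved. This breaks the claim that both marginals are the same function of position, and it also prevents treating a token's future as independent of the other token's position.

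To handle this, I would first try to make deletion independent of the $i$--$j$ interaction. The idea is to enrich the state to include the full configuration (stack sizes and blue loads), and to show that, conditional on the configuration, the killing of token $k$ at time $s+1$ is a function of $k$'s own location and stack. A token sharing a node with neither $i$ nor $j$ is unaffected by where the other one sits, because tokens from different starting nodes never share a node.

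I would then redo the induction with $g$ replaced by survival-weighted kernels $g^{(k)}$. The symmetric case now requires $\bigl(g^{(i)}(a)-g^{(i)}(b)\bigr)\bigl(g^{(j)}(a)-g^{(j)}(b)\bigr)\ge0$. I would try to obtain this by coupling the two killed walks monotonically through a shared orientation coin, so that the ordering of the values at $a$ and $b$ is the same for both tokens.

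If this fails, the fallback is to restrict to the event that neither token is deleted before $t$. This is the only case used in \cref{lem:bound_concrete_probs}, since deletion is counted there as success. On that event the clean argument above applies verbatim.
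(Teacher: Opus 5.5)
Your deletion-free argument is the paper's argument. The paper proves the same one-step comparison, written forward in time as $\E{\iprob{\tau}{\tau+1}\jprob{\tau}{\tau+1}\mid\filter{\tau-1}}\le\iprob{\tau-1}{\tau}\jprob{\tau-1}{\tau}$; your backward induction on $t-s$ is the same statement read backwards. As in your version, the step is trivial when at most one token lies on $e(\tau)$, and when both do it is closed by $xy\le\bigl(\tfrac{x+y}{2}\bigr)^2$, which also needs a single kernel driving both tokens.

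\textbf{The gap is conversions, which your proposal leaves open.}

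The enriched-state route fails earlier than you expect. Once the value functions depend on blue loads and stack sizes, even the steps you call immediate no longer factorise. The orientation coin of $e(s+1)$ moves blue load and other red stacks between its endpoints. It can therefore affect both whether $i$ and whether $j$ is converted later. Every step then produces a covariance term of unknown sign, and a monotone coupling at the steps where both tokens sit on the edge does not touch this.

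The stated form is also not robust in the way your plan needs. Nothing in your argument excludes the following: given only $\vec w$ (which averages over blue-load configurations), the two tokens are converted jointly or not at all, each with probability $\tfrac12$, and otherwise move nearly independently. Let $p_k$ be the probability that $k$ ends in $S=\{u,v\}$ given that it survives. Then the left side is about $p_ip_j/2$ while the right side is about $p_ip_j/4$.

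\textbf{Your fallback has the right instinct but, as phrased, is not a valid reduction.} Conditioning on ``neither token is converted before $t$'' changes the law of the walks, so the one-step rules you verified need not hold under it. What works is a phantom coupling. Once a tracked token is converted, let it continue from its conversion node, moving to either endpoint by a fresh fair coin when its node lies on $e(\tau)$. When both tracked tokens lie on $e(\tau)$, a phantom takes the endpoint the other token leaves free, and two phantoms swap by a single fair coin. Given the full past, $(\tilde W_i,\tilde W_j)$ then obeys exactly your deletion-free rules, never shares a node, and coincides with the true positions while both tokens are alive. Your induction therefore yields
\[
\Pr\bigl[\{\TokenPos{i}{t},\TokenPos{j}{t}\}\subseteq S \bigm| \vTokenPos{t_0}=\vec w\bigr]
\le\Bigl(\sum_{y\in S}\mathbf{M}^{[t_0+1,t]}_{w_i,y}\Bigr)\Bigl(\sum_{y\in S}\mathbf{M}^{[t_0+1,t]}_{w_j,y}\Bigr).
\]
This is weaker than the lemma as stated, since by \cref{obs:random_walk} the kernel sums dominate the true marginals. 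It is, however, exactly what \cref{lem:bound_concrete_probs} uses: there both factors are immediately bounded by $2.4/n$ through the kernel. That, not deletion counting as success, is why the weaker form suffices.

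The paper itself dismisses conversions in one sentence (deleted branches contribute $0$), which controls only the left side of its one-step inequality. Its closing identity $\bigl(\tfrac12(\iprob{\tau-1}{\tau+1}+\jprob{\tau-1}{\tau+1})\bigr)^2=\iprob{\tau-1}{\tau}\jprob{\tau-1}{\tau}$ presupposes the common deletion-free kernel you flagged. So the phantom version is what that argument establishes as well.
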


\begin{proof}
We condition on $\vTokenPos{t_0} = \vec{w}$ throughout and omit writing out this conditioning.

Let $S = \{u,v\}$. 
To reduce clutter,
we write, for $b > a$
\[\iprob{a}{b} = \Pr[\TokenPos{i}{t} \in S \mid \TokenPos{i}{b-1}=\TokenPos{i}{a}]\]
and
\[\jprob{a}{b} = \Pr[\TokenPos{j}{t} \in S \mid \TokenPos{j}{b-1}=\TokenPos{j}{a}]\]
Let $\filter{\tau}$ be the filtration
revealing the random choices up to time $\tau$,
and hence in particular the locations $\vTokenPos{\tau}$ at time $\tau$.
We show below for all $\tau \in [t_0, t]$ that
\begin{align}\label{eq:Z_supermartingale}
\E{\iprob{\tau}{\tau+1} \cdot \jprob{\tau}{\tau+1} \mid \filter{\tau-1}}
\leq \iprob{\tau-1}{\tau} \cdot \jprob{\tau-1}{\tau}.
\end{align}
Now $\iprob{t}{t+1} = \1_{\TokenPos{i}{t} \in S}$ and $\jprob{t}{t+1} = \1_{\TokenPos{j}{t} \in S}$ since here,
the conditioning is trivial.
So the claim follows by inductively applying \cref{eq:Z_supermartingale}:
\begin{align*}
\Pr[\{\TokenPos{i}{t}, \TokenPos{j}{t}\}\subseteq \{u, v\}]
&= \E{\iprob{t}{t+1} \cdot \jprob{t}{t+1}}
\leq \iprob{t_0}{t_0+1} \cdot \jprob{t_0}{t_0+1}\\
&= \Pr[\TokenPos{i}{t} \in S] \cdot \Pr[\TokenPos{j}{t} \in S],
\end{align*}
where the last equality crucially uses the implicit conditioning on $\vec{W}^{t_0} = \vec{w}$.

So it remains to show \cref{eq:Z_supermartingale}.
Note that $\E{\iprob{\tau}{\tau+1} \mid \filter{\tau-1}} =\iprob{\tau-1}{\tau}$
and $\E{\jprob{\tau}{\tau+1} \mid \filter{\tau-1}} =\jprob{\tau-1}{\tau}$ by definition,
and that $\E{\iprob{\tau-1}{a} \mid \filter{\tau-1}} =\iprob{\tau-1}{a}$ and $\E{\jprob{\tau-1}{a} \mid \filter{\tau-1}} =\jprob{\tau-1}{a}$
for any $a$ since the locations of $i$ and $j$ at time $\tau-1$ are determined under $\filter{\tau-1}$
so that $\iprob{\tau-1}{a}$ and $\jprob{\tau-1}{a}$ are deterministic.
We write $e(\tau)$ for the edge chosen by the scheduler at time $\tau$.

If $\TokenPos{j}{\tau-1} \not\in e(\tau)$, then
$j$ cannot move nor be deleted at time $\tau$,
so we have $\jprob{\tau-1}{\tau} = \jprob{\tau}{\tau+1}$, and hence
\begin{align*}
\E{\iprob{\tau}{\tau+1} \cdot \jprob{\tau}{\tau+1} \mid \filter{\tau-1}}
&= \E{\iprob{\tau}{\tau+1} \cdot\jprob{\tau-1}{\tau} \mid \filter{\tau-1}}\\
&= \E{\iprob{\tau}{\tau+1} \mid \filter{\tau - 1}} \cdot\jprob{\tau-1}{\tau}
=\iprob{\tau-1}{\tau} \cdot\jprob{\tau-1}{\tau}.
\end{align*}
And by symmetry, the same holds when $\TokenPos{i}{\tau-1} \not\in e(t)$.

Otherwise we have $\{\TokenPos{i}{\tau-1}, \TokenPos{j}{\tau-1}\} = e(\tau)$.
In that case, we have
\[(\TokenPos{i}{\tau}, \TokenPos{j}{\tau}) = \begin{cases}
(\TokenPos{j}{\tau-1}, \TokenPos{i}{\tau-1}) \quad\text{with probability at most $1/2$,}\\
(\TokenPos{i}{\tau-1}, \TokenPos{j}{\tau-1}) \quad\text{with probability at most $1/2$,}\\
\end{cases}\]
with the remaining probability taken up by cases where at least one of the tokens $i$ and $j$ is deleted, so that they remain deleted until time $t$ and thus the associated random variables are $0$.
And so
\begin{align*}
\MoveEqLeft\E{\iprob{\tau}{\tau+1} \cdot \jprob{\tau}{\tau+1} \mid \filter{\tau-1}}
\\&\leq \frac{1}{2} \cdot \E{\jprob{\tau-1}{\tau+1} \cdot \iprob{\tau-1}{\tau+1} \mid \filter{\tau-1}}
+ \frac{1}{2} \cdot \E{\iprob{\tau-1}{\tau+1} \cdot \jprob{\tau-1}{\tau+1} \mid \filter{\tau-1}}
\\ &= \E{\iprob{\tau-1}{\tau+1} \cdot \jprob{\tau-1}{\tau+1} \mid \filter{\tau-1}}
\\ &= \iprob{\tau-1}{\tau+1} \cdot \jprob{\tau-1}{\tau+1}
\\ &\leq \left(\frac{\iprob{\tau-1}{\tau+1} + \jprob{\tau-1}{\tau+1}}{2}\right)^2
= \iprob{\tau-1}{\tau} \cdot \jprob{\tau-1}{\tau},
\end{align*}
where the final equality is because $e(\tau) = \{\TokenPos{i}{\tau-1}, \TokenPos{j}{\tau-1}\}$.
\end{proof}

\section{Impossibility of Termination}\label{sect:impossibility}

\ImpossibilityIntuition

\restateImpossibleThm*

We actually show something more general:
we are not just concerned with protocols computing the exact size of an interaction graph,
but even broader properties of the size (like parity, computing a constant-factor approximation of the size of the graph, etc.).

Formally, let $\mathcal{S}$ be a family of ``size properties'', i.e., a family of subsets of $\N$.
We say that a protocol computes $\mathcal{S}$
if its output function is of the form\footnote{Of course one may consider protocols where not all states have a corresponding output. But since we are concerned only with the output of agents which have terminated, and surely a missing output from a terminated agent would be incorrect, we assume each state having a corresponding output for simplicity.} $\rho \colon Q \to \mathcal{S}$,
where $Q$ is the state space of the protocol.
A configuration of the protocol is called \emph{terminated correctly} if
there exists a terminated agent,
and if for all agents having $\terminated{v} = \ttrue$ we have $|V| \in \rho(q_v)$ (where $q_v$ is the state of $v$).
Examples for families of size properties are $\mathcal{S}_{\mathrm{count}} = \{\{n\} \mid n \in \N\}$ or $\mathrm{S}_{\mathrm{parity}} = \{\{2k \mid k \in \N_0\}, \{2k + 1 \mid k \in \N_0\}\}$.
A protocol computing $\mathcal{S}_{\mathrm{count}}$ is equivalent to one computing the exact number of nodes,
and a protocol computing $\mathcal{S}_{\mathrm{parity}}$ is equivalent to one determining parity.

We say that a family of size properties is \emph{complete} if every $n \in \N$ is contained in some set in said family.
Note that for correct termination to be possible for all graphs, $\mathcal{S}$ must be complete.
We say a family $\mathcal{S}$ of size properties is \emph{strongly increasing} if for all $n$ there are infinitely many $n' > n$
so that all $S \in \mathcal{S}$ with $n \in S$ do not contain $n'$.
The sets in $\mathcal{S}$ may be overlapping:
For example, consider the strongly increasing family consisting of the intervals $[\max\{1, 2^{k-10}\}, 2^{k+10}]$ for $k \in \N$.
In this example, we may consider the output as an approximation of $\log(|V|)$ up to $10$.
In a correctly configuration with a terminated node, there thus may be different outputs present as long as all are correct.
The families $\mathcal{S}_{\mathrm{count}}$ and $\mathcal{S}_{\mathrm{parity}}$ we gave above are also strongly increasing families.

We say that $P$ is given (only) an arbitrary initial leader
when its initial set of configurations on a graph $G$
is of the form $I_G = \{\vec{i}^{(v)} \mid v \in V\} \subseteq \{q_0, q_1\}^V$,
where $\vec{i}^{(v)}_v = q_1$, and $\vec{i}^{(v)}_u = q_0$ for all $u \neq v$.
So $I_G$ consists of the set of all configurations where exactly one agent (the leader) is in $q_1$ and all other agents are in $q_0$.

The following lemmas imply the two parts of \cref{thm:impossible} by taking $\mathcal{S} = \mathcal{S}_{\mathrm{count}}$.

\begin{lemma}\label{prop:termination_impossible_with_leader}
Let $P$ be a graph-uniform population protocol given (only) an arbitrary initial leader.
Assume $P$ computes a strongly increasing complete family of size properties $\mathcal{S}$.

Assume there is a graph $H$ and some initial configuration such that $P$ terminates correctly with positive probability in bounded time.
Then there is an infinite sequence of graphs $(G_i)_{i \in \N}$ with $|V(G_i)| < |V(G_{i+1})|$
so that the probability of $P$ terminating incorrectly on $G_i$ is in $\Omega(1)$.
\end{lemma}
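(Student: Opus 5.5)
Let $H$ be the given graph with $h=|V(H)|$. Fix an initial configuration $\vec{i}^{(v_0)}\in I_H$, with leader at $v_0$, such that $P$ terminates correctly on $H$ with probability $p>0$ within $t_H$ steps. Let $\mathcal{T}$ be the event that, within the first $t_H$ interactions, some agent sets $\terminated{}=\ttrue$ and every terminated agent has output containing $h$.

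Since $\mathcal{S}$ is strongly increasing, there are infinitely many $n'>h$ such that no $S\in\mathcal{S}$ with $h\in S$ contains $n'$. Choose an increasing sequence $n_1<n_2<\dots$ of such values. For each $i$ let $G_i$ consist of $H$, a clique $K_i$ on $n_i-h$ fresh nodes, and a single bridge edge from some $x\in V(H)$ to some $y\in V(K_i)$. Put the leader at $v_0\in V(H)$. Any output set containing $h$ then excludes $|V(G_i)|=n_i$, so a terminated agent of $H$ reporting such a set has terminated incorrectly.

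The coupling is the core of the argument. Consider the sequence of interactions in $G_i$ restricted to edges of $E(H)\cup\{\{x,y\}\}$. Each step of $G_i$ picks such an edge with probability $(|E(H)|+1)/|E(G_i)|$. Conditioned on that, it is uniform over these $|E(H)|+1$ edges with uniform orientation. Conditioned further on avoiding the bridge, it is uniform over the oriented edges of $H$, i.e.\ exactly $H$'s scheduler. Interactions on clique edges do not touch $H$'s agents, and $H$'s agents see only interactions on $E(H)$ and the bridge. Hence, on the event $\mathcal{B}$ that none of the first $t_H$ $H$-or-bridge interactions uses the bridge, the states of $V(H)$ after those interactions are distributed exactly as a run of $P$ on $H$ for $t_H$ steps. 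This relies on graph-uniformity: the transition function does not depend on the graph.

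The $(t_H+1)$ choices, each being either the bridge or an edge of $H$, are i.i.d. So $\Pr[\mathcal{B}]=(1-1/(|E(H)|+1))^{t_H}=:q>0$, which does not depend on $i$. Moreover, conditioned on $\mathcal{B}$, the induced $H$-interactions are i.i.d.\ uniform on $H$. Therefore $\Pr[\mathcal{B}\cap\mathcal{T}']=q\,p$, where $\mathcal{T}'$ is the copy of $\mathcal{T}$ for $H$'s agents inside $G_i$. On $\mathcal{B}\cap\mathcal{T}'$, some agent of $H$ has terminated with an output set containing $h$, hence not containing $n_i$. Termination is irrevocable in the Doty–Eftekhari convention, so this agent stays terminated with a wrong output, and the run is incorrectly terminated. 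This gives probability at least $qp=\Omega(1)$, uniformly in $i$.

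The main obstacle is making this coupling rigorous. One has to check that conditioning on ``no bridge'' among the $H$-relevant steps leaves those steps exactly distributed as $H$'s scheduler, including orientations, which are independent of the edge choice. One also has to check that clique agents never influence $H$ before a bridge use, which holds because the only cross edge is the bridge. A smaller point is the definition of ``terminating incorrectly'': if it is meant to hold at some time, the witness above suffices. If instead it requires a terminated agent whose output is wrong at that moment, it also suffices, because we exhibit such an agent at the time it terminates.
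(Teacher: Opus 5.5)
Your proposal is correct and follows the paper's proof. Both embed $H$ in a larger graph $G_i$ of forbidden size $n_i$ joined by a single cut edge, observe that the first $t_H$ interactions touching $V(H)$ avoid that edge with probability $\bigl(1-(|E(H)|+1)^{-1}\bigr)^{t_H}$ independent of $i$, and conclude that $H$'s agents then terminate with an output containing $|V(H)|$ but not $n_i$. Your clique-plus-bridge $G_i$ is a concrete instance of the paper's generic construction, and your check that conditioning on avoiding the bridge leaves the $H$-interactions distributed exactly as $H$'s scheduler (orientations included) spells out what the paper states in one line; the phrase ``$(t_H+1)$ choices'' is a slip for $t_H$ choices among $|E(H)|+1$ edges.
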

\begin{proof}
By assumption $P$ terminates on $H$ in at most $t_H$ interactions, for some value $t_H$, with some positive probability $p_H > 0$
on some initial configuration $\vec{i}_H$ in which a leader is given.
Note that $|E(H)|$, $t_H$, and $p_H$ are all constants.
Let $|V(H)| < n_1, n_2, \ldots$ be an infinite sequence such that no $S \in \mathcal{S}$ contains both $n$ and $n_i$ for any $i$.
Such a sequence exists since $\mathcal{S}$ is strongly increasing.

Now let $G_i \neq H$ be a (connected) graph of size $n_i$ containing $H$ as an induced subgraph
(i.e., there is a set $V_H \subseteq V(G_i)$ such that $G[V_H] \simeq H$),\footnote{To spell it out, there is an injective function $\phi \colon V(H) \to V(G)$ such that $\{\phi(u), \phi(v)\} \in E(G)$ if and only if $\{u, v\} \in E(H)$.}
such that the cut between $H$ and $V(G) \setminus H$ contains just a single edge.
Let $\vec{i}_{G_i}$ be the initial configuration which is equal to $\vec{i}_H$ when restricted to $H$ (so in particular the leader is in the copy of $H$ contained in $G_i$).

Now consider the subsequence of the first $t_H$ interactions of the population protocol on $G_i$
involving at least one node in $V_H$,
without regard to who is initiator or responder.
This subsequence does not contain a node in $V(G_i) \setminus V_H$
with probability
\(
\left(
1 -
(
\abs{E(H)} + 1
)^{-1}
\right)^{t_H},
\)
which is a positive constant independent of $n_i$
as $|E(H)|$ and $t_H$ are constants independent of $n_i$.
Conditioned on this event, from the perspective of the nodes in $V_H$,
the execution of the protocol on $G_i$ behaves identically to an execution of the protocol on $H$,
so that $v$ will have terminated with an output $\rho(q_v) \ni |V(H)|$ with probability $\geq p_H$.
So the overall probability of this termination is in $\Omega(1)$.
But since $G_i$ is of size $n_i$, we must have $|V(G_i)| = n_i \not\in \rho(v_q)$,
so the termination was incorrect.
\end{proof}

\begin{lemma}\label{prop:termination_impossible_without_leader}
Let $P$ be a graph-uniform population protocol with all agents initialized to a fixed initial state $q_0$
computing a strongly increasing complete family of size properties $\mathcal{S}$.

Assume there there is a graph $H$ on which $P$ terminates correctly with positive probability in bounded time.
Then there is an infinite sequence of graphs $(G_i)_{i \in \N}$ with $|V(G_i)| < |V(G_{i+1})|$
so that the probability of $P$ terminating correctly on $G_i$ is $\exp(-\Theta(|V(G_i)|))$.
\end{lemma}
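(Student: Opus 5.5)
We adapt the proof of \cref{prop:termination_impossible_with_leader}, now attaching many disjoint copies of $H$ to a common core. Since all agents start in the same state $q_0$, every copy of $H$ starts in exactly the configuration in which $P$ terminates correctly on $H$ with probability $p_H>0$ within $t_H$ interactions.

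\textbf{Choice of sizes.} Let $h=|V(H)|$. Because $\mathcal{S}$ is strongly increasing, there are infinitely many $n'$ such that no $S\in\mathcal{S}$ containing $h$ also contains $n'$. We pick such values $n_1<n_2<\dots$, all large compared to $h$.

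\textbf{Construction of $G_i$.} Set $k_i=\lfloor (n_i-1)/(h+1)\rfloor=\Theta(n_i)$. The graph $G_i$ consists of a central clique $K$ together with $k_i$ disjoint copies $H_1,\dots,H_{k_i}$ of $H$. Each copy $H_\ell$ is joined to $K$ by a single edge, and these edges go to distinct clique nodes $c_\ell$. The size of $K$ is chosen so that $|V(G_i)|=n_i$ exactly; since $|K|\ge k_i$, this is possible, possibly after adjusting $k_i$ by a constant.

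\textbf{Isolation event for one copy.} Fix a copy $H_\ell$. Look at the subsequence of scheduler steps that choose an edge incident to $V(H_\ell)$. Each such step uses the connecting edge with probability exactly $1/(|E(H)|+1)$. Hence the first $t_H$ steps of this subsequence avoid the connecting edge with probability $q=(1-(|E(H)|+1)^{-1})^{t_H}>0$, a constant. Conditioned on that, the execution restricted to $H_\ell$ is distributed exactly as an execution of $P$ on $H$ from the all-$q_0$ configuration. So with probability at least $qp_H=:\alpha>0$, some agent of $H_\ell$ reaches a terminated state whose output contains $h$ but not $n_i$, which is an incorrect termination.

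\textbf{Independence across copies (main obstacle).} The copies share one global scheduler, so these events are not literally independent. I would fix the global sequence of undirected edges only through the order in which the edge sets $E(H_\ell)\cup\{\text{connector}_\ell\}$ are hit. Conditioned on which group each step belongs to, the choices within different groups are i.i.d. uniform over that group's edges, and the orientations are independent. The local execution of $H_\ell$ up to its $t_H$-th local step depends only on its own group's choices and on the state of $c_\ell$. The state of $c_\ell$ is irrelevant on the isolation event, because the connector is never used. Hence the events $A_\ell$ (``$H_\ell$ isolated for $t_H$ local steps and terminates incorrectly'') are mutually independent, each with probability at least $\alpha$.

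Every copy is almost surely hit at least $t_H$ times eventually. Correct termination requires that no $A_\ell$ occurs, since an incorrectly terminated agent stays wrong in the sense of the definition. Therefore the probability of correct termination is at most $(1-\alpha)^{k_i}=\exp(-\Theta(n_i))$. For the matching lower bound implicit in $\Theta$, it suffices to note that with probability $\exp(-O(n_i))$ every copy's connector is used first, and then no agent of any copy is isolated. Alternatively, one reads $\Theta$ as the upper bound that was needed.

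This gives part 2 of \cref{thm:impossible}. The delicate point is the independence decoupling: the global scheduler has to be split into per-copy local schedules so that the isolation-and-wrong-termination events of different copies are independent. One also has to check that the local execution on $H_\ell$ is distributed as on $H$, which holds because within a group the selected edge is uniform over $E(H)\cup\{\text{connector}\}$.
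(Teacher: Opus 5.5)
Your proposal follows the paper's proof: $k_i=\Theta(n_i)$ disjoint copies of $H$ are each attached by a single edge to a central clique. Each copy has a constant probability that its connector stays unused for $t_H$ local steps and that the copy terminates with an output containing $|V(H)|$ but not $n_i$, so correct termination has probability at most $(1-\alpha)^{k_i}$. Your conditioning on the group-label sequence is a more careful version of the paper's one-line independence claim.

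The one slip is your suggested matching lower bound. Connectors being used early does not make $P$ terminate correctly, and it may never do so on $G_i$. So only the upper bound $\exp(-\Omega(n_i))$ is provable, which is what the paper shows and the reading you fall back on.
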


\begin{proof}
We let $H$, $t_H$, $p_H$, $\vec{i}_H$, and the $n_1, n_2, \ldots$ be defined as in the proof of \cref{prop:termination_impossible_with_leader}.
We now define $G_i$ as follows:
Let $k_i = \lfloor (n_i - 1) / |V(H)| \rfloor$.
Then to a clique of $n_i - k_i \cdot |V(H)|$ nodes,
connect $k_i$ disjoint copies $H_1, \ldots, H_{k_i}$ of $H$ by a single edge each (in an arbitrary way) so that indeed $G_i$ has $n_i$ nodes.
We consider an initial configuration $\vec{i}_{G_i}$ on $G_i$
so that it is equal to $\vec{i}_H$ when restricted to any of the copies $H_1, \ldots, H_{k_i}$.

The $k$ subsequences of interactions involving at least one node in the copies $H_1, \ldots, H_k$ are disjoint, so the events whether the prefixes of length $t_H$ contain the edge to the central clique are independent.
The probability of this edge not being present in a single given subsequence is
\(\left(1 - \left(|E(H)| + 1\right)^{-1}\right)^{t_H},\)
which is a constant independent of $n_i$ since $|E(H)|$ and $t_H$ are constants.
As in the proof of \cref{prop:termination_impossible_with_leader},
conditioned on this event, from the perspective of the nodes in the respective copy of $H$,
the execution of the protocol on $G_i$ behaves identically to an execution of the protocol on $H$,
so that $v$ will have terminated with an output $\rho(q_v) \ni |V(H)|$ with probability $\geq p_H$.
So for a fixed copy of $H$, the overall probability of such a termination is bounded from below by a positive constant $c_H$.
But since $G_i$ is of size $n_i$, we must have $|V(G_i)| = n_i \not\in \rho(v_q)$,
so that such a termination is incorrect in $G_i$.

So for the protocol to terminate correctly on $G_i$,
it must be the case that \emph{none} of the $k_i$ copies of $H$ in $G_i$ have such a termination,
which happens with probability at most
\(\left(1 - c_H\right)^{k_i}.\)
Now $k_i = \Theta(n_i/|V(H)|) = \Theta(n_i)$ as $|V(H)|$ is independent of $n_i$.
that this probability is in $\exp(\Theta(n_i))$, as claimed.
\end{proof}

\section{Generating Independent Random Bits}
\label{sec:randombits}

In this section we present the full analysis of our protocol for generating random bits. For sake of readability we restate the definitions from \cref{sec:random-bits-overview} before we give the proof.

\medskip

\descriptionGeneratingRandomBits

\enlargethispage{-5\baselineskip}

{
\def\dcmGobble#1#2{}
\begin{lstlisting}[title={\Cref{alg:random-bits}.\hspace{.5em}\GenerateRandomBit($u$, $v$)}]
if $(\Color u,\Color v) = (\black,\white)$ then
    $(\Rand u, \Rand v) \gets (1, \bot)$ /*The black node $u$ is initiator, assign bit $1$ */
else if $(\Color u,\Color v) = (\white,\black)$ then
    $(\Rand u, \Rand v) \gets (\bot, 0)$  /*The black node $v$ is responder, assign bit $0$ */
else 
    $(\Rand u, \Rand v) \gets (\bot, \bot)$
$ ( \Color u, \Color v ) \gets (\black, \white) \label{alg:random-bits-ln-7}$
\end{lstlisting}
}

We now prove the following theorem, stating that the protocol produces unbiased, independent random bits and within a moderate time each node receives a large number of random bits.

\restateBitSampling*

\noindent The remainder of this section contains the proof of \cref{thm:bitsampling} and various necessary technical lemmas, claims and observations.

Before we present the formal proofs, we make the underlying probability
space explicit.
The scheduler determines a sequence of \emph{directed edges}
$\Vec{M}(t) \in \bigl\{ (a,b) \mid \{a,b\} \in E \bigr\}^t$.
Here, $(a,b)$ denotes an interaction in which node $a$ is the initiator
and node $b$ is the responder.
Each element $M(i) = (a_i,b_i)$, for $i \in [t]$ is drawn uniformly at random from the set of all
directed edges obtained by replacing each undirected edge
$\{a,b\} \in E$ with the two arcs $(a,b)$ and $(b,a)$.

\newcommand{\prmt}[1]{\underset{\Vec{m}(t)}{\mathbf{Pr}}\left[#1\right]}
\newcommand{\prmtt}[1]{\underset{\Vec{m}(t-1)}{\mathbf{Pr}}\left[#1\right]}
\newcommand{\Emt}[1]{\underset{\Vec{m}}{\mathbf{E}}\left[#1\right]}
For an easier notation when we condition on the sequence $\Vec{m}(t)$ of all random choices until step $t$, write
\begin{align}
\prmt{\cdot} = \pr{\cdot \mid \Vec{M}(t) = \Vec{m}(t)}.
\end{align}
Recall that \emph{per assumption}, all choices of the scheduler are independent.
Therefore, for each step $t$, directed pair $(a,b) \in E$, and sequence $\Vec{m}(t-1) \in E^{t-1}$, it holds
\begin{align}
\prmtt{M(t) = (a,b)} = \pr{M(t) = (a,b)} = \frac{1}{2|E|}.
\end{align}
For the analysis of our theorem, we furthermore define the following random variables: $B_u(t) \in \{0,1\}$ is the indicator that $u$ is \emph{black} at the beginning of step $t$.
$W_u(t) \in \{0,1\}$ id the indicator that $u$ is \emph{white} at the beginning of step $t$.
$S_u(t) \in \{0,1\}$ is the indicator that $u$ is \emph{black} and interacts with a white node in step $t$.

Having these definitions out of the way, we begin with the analysis.
We divide the analysis into three parts, one for each of the three properties of \Cref{thm:bitsampling}.
In \Cref{sec:uniform}, we show that the sampled bits are uniformly distributed in $\{0,1\}$, i.e., we show the first property.
Then, in \Cref{sec:independent}, we prove the second property and show that the sampled bits are independent.
Finally, we show the third property in \Cref{sec:frequently} and furthermore prove that this property implies that each node samples many bits within $O(|E(G)|/\delta(G) \log n)$ steps.

\subsection{Bits are Uniform}
\label{sec:uniform}

In this section we show the first statement of \cref{thm:bitsampling}.
In fact we show a slightly stronger statement that we will reuse for the other proofs.
We show that, \emph{given the entire history of the protocol up to step $t$}, the following holds: If a random bit is chosen in step $t$, then the probability that the random bit is $0$ or $1$ is uniform.
Formally:
\begin{lemma}
\label{lemma:unbiasedbit}
Fix a node $u$ and a step $t \geq 1$.
Let $\Vec{m}(t-1)$ be a sequence of $t-1$ directed edges, s.t., it holds
$\prmtt{S_u(t)=1 }>0.$
Then,
\begin{align*}
\prmtt{\Rand u(t)=1 \mid S_u(t)=1} =
\prmtt{\Rand u(t)=0 \mid S_u(t)=1} = \tfrac{1}{2}.
\end{align*}
\end{lemma}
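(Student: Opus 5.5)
Once we condition on $\Vec{M}(t-1)=\Vec{m}(t-1)$, the colors of all nodes at the beginning of step $t$ are fixed. Every color update in \cref{alg:random-bits} is a deterministic function of the directed edges chosen so far. In particular $B_u(t)$ and $W_v(t)$ are deterministic for every node $v$. The only randomness left in step $t$ is the scheduler's choice $M(t)$. By assumption, $M(t)$ is uniform over the $2|E|$ arcs and independent of $\Vec{m}(t-1)$. The plan is to write the event $S_u(t)=1$ as a union of arcs and show it splits evenly by orientation.

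First, since $\prmtt{S_u(t)=1}>0$ and $B_u(t)$ is deterministic, $u$ must be black at the beginning of step $t$. Let $N_W$ be the set of neighbors $w$ of $u$ that are white at the beginning of step $t$; this set is also deterministic given $\Vec{m}(t-1)$. Then
\[
\{S_u(t)=1\} = \bigcup_{w\in N_W}\bigl(\{M(t)=(u,w)\}\cup\{M(t)=(w,u)\}\bigr),
\]
and positivity of the probability forces $N_W\neq\emptyset$.

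Second, read off the bit from the protocol. If $M(t)=(u,w)$ with $w\in N_W$, the first branch applies and $\Rand u(t)=1$. If $M(t)=(w,u)$, the second branch applies (the initiator $w$ is white and the responder $u$ is black) and $\Rand u(t)=0$. Each arc has probability $1/(2|E|)$. Hence $\prmtt{\Rand u(t)=1,\ S_u(t)=1}=|N_W|/(2|E|)$, which equals $\prmtt{\Rand u(t)=0,\ S_u(t)=1}$. Each is exactly half of $\prmtt{S_u(t)=1}=2|N_W|/(2|E|)$. Dividing by this positive quantity gives both conditional probabilities equal to $1/2$.

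I do not expect a real obstacle here. The only point that needs care is justifying that colors at time $t$ are $\Vec{m}(t-1)$-measurable, so that conditioning leaves $M(t)$ uniform. This holds because the protocol's color update in Line~\ref{alg:random-bits-ln-7} uses no randomness other than the oriented edge, and the scheduler's choices are independent across steps.
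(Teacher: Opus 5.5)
Your proof is correct and follows the paper's argument. In both, the colors at step $t$ are fixed by $\Vec{m}(t-1)$, and $\{S_u(t)=1\}$ is the disjoint union of the arcs $(u,w)$ and $(w,u)$ over the white neighbors $w$ of the black node $u$, so orientation splits $|N_W|/|E|$ into two halves of $|N_W|/(2|E|)$. Computing the joint probability of $\{\Rand u(t)=1\}\cap\{S_u(t)=1\}$ directly, as you do, is slightly cleaner than the paper's use of the unconditional $\Pr[\Rand u(t)=1]$, which tacitly treats $\Rand u(t)$ as the value produced in step $t$.
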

\begin{proof}
Recall that $S_u(t)$ indicates that $u$ is black, and we know that at the beginning of step $t$
it has at least one white neighbor.
On a high level, we want to show that conditioning on $u$ being black and any neighbor being white does not affect the probability of drawing a certain bit.
For each $v \in V$, the variables $B_v(t)$ and $W_v(t)$ can be expressed as functions of $\Vec{m}(t-1)$.
Formally, there is a function:
\begin{align*}
f_{B,u}(\vec{m}(t-1)) = \begin{cases}
1 & \text{if } B_u(t)=1,\\
0 & \text{else}.
\end{cases}
\end{align*}
We only prove this for $B_u(t)$, as if a node is not black, it is white.
Hence, if $B_u(t)$ is a function of $M(t-1)$, so is $W_u(t)$.
Node $u$ is black exactly when $u$ was the initiator in its last interaction
before step $t$ (if such an interaction exists).
Let
\begin{align*}
\tau_u(\Vec{m}(t-1)) =
\begin{cases}
\max\{ i < t \mid m(i)=(a,b),\; u\in\{a,b\}\}
& \text{if such $i$ exists},\\
\bot & \text{else}.
\end{cases}
\end{align*}
For readability, denote $\tau_u(\Vec{m}(t-1))$ simply as $\tau$.
With definition , it easy to verify that
\begin{align*}
B_u(t) = f_{B,u}(\Vec{m}(t-1)) =
\begin{cases}
\1\{m(\tau)=(u,a)\}
& \text{if } \tau\neq\bot,\\
0 & \text{else}.
\end{cases}
\end{align*}
Thus, $B_u(t)$ is completely decided by $\Vec{m}(t-1)$ as claimed.

We now define the set of white neighbors of $u$ at the beginning of step $t$ as
\begin{align*}
N_u^{\mathrm{w}}(\Vec{m}(t-1))
=
\{ v\in N_u \mid f_{B,v}(\Vec{m}(t-1))=0 \}.
\end{align*}
Note that, just as $B_v(t)$ and $W_v(t)$, this set is fixed given $\Vec{m}(t-1)$.
Further, note that $|N_u^{\mathrm{w}}(\Vec{m}(t-1))| \geq 1$ as we conditions on an $\Vec{m}(t-1)$ where $\prmtt{S_u(t)=1} \geq 0$ and, therefore, $u$ has at least one white neighbor.

We now describe the events $\{\Rand u(t)=1\}$ and $\{S_u(t)=1\}$ with respect to
the random choice in step $t$.
Conditioned on $\Vec{m}(t-1)$, event $S_u(t)=1$ occurs precisely if
$M(t)=(a_t,b_t)$ with either $a_t=u$ and $b_t\in N_u^{\mathrm{w}}(\Vec{m}(t-1))$,
or $b_t=u$ and $a_t\in N_u^{\mathrm{w}}(\Vec{m}(t-1))$.
Hence,
\begin{align*}
\prmtt{S_u(t)=1 }
&= \sum_{v\in N_u^{\mathrm{w}}(m)}
\left(\prmtt{M(t)=(u,v)} + \prmtt{M(t)=(v,u))}\right).
\end{align*}
Since the edge sampled in step $t$ is independent of $\Vec{m}(t-1)$ and each
orientation of each edge is chosen with probability $1/(2|E|)$, we obtain
\begin{align*}
\prmtt{S_u(t)=1} &= \sum_{v\in N_u^{\mathrm{w}}(m)}
\left(\pr{M(t)=(u,v)} + \pr{M(t)=(v,u))}\right)\\
&= \sum_{v\in N_u^{\mathrm{w}}(m)}
\left(\frac{1}{2|E|}+\frac{1}{2|E|}\right)
= \frac{|N_u^{\mathrm{w}}(\Vec{m}(t-1))|}{|E|}.
\end{align*}
On the other hand, $\Rand u(t)=1$ occurs precisely when
$M(t)=(u,v)$ for some $v\in N_u^{\mathrm{w}}(m)$.
Thus,
\begin{align*}
\prmtt{\Rand u(t)=1 }
&= \sum_{v\in N_u^{\mathrm{w}}(\Vec{m}(t-1))} \prmtt{M(t)=(u,v)}\\
&= \sum_{v\in N_u^{\mathrm{w}}(\Vec{m}(t-1))} \pr{M(t)=(u,v)}
= \frac{|N_u^{\mathrm{w}}(m)|}{2|E|}.
\end{align*}
Since the event $\{\Rand u(t)=1\}$ is included in the event $\{S_u(t)=1\}$, we have $\{\Rand u(t)=1\} \cap \{S_u(t)=1\} = \{\Rand u(t)=1\} $.
By the definition of conditional
probability we conclude
\begin{align*}
\prmtt{\Rand u(t)=1 \mid S_u(t)=1}
&= \frac{\prmtt{\{\Rand u(t)=1\} \cap \{S_u(t)=1\}}}
{\prmtt{S_u(t)=1}}\\
&= \frac{\prmtt{\Rand u(t)=1}}
{\prmtt{S_u(t)=1}}
= \frac{\frac{|N_u^{\mathrm{w}}(\Vec{m}(t-1))|}{2|E|}}
{\frac{N_u^{\mathrm{w}}(\Vec{m}(t-1))}{|E|}}
= \frac{1}{2}. \qedhere
\end{align*}
\end{proof}
By using the law of total probability, this proves the first statement of the theorem and shows that the distribution of each individual bit uniform.

\subsection{Bits are Independent}
\label{sec:independent}

In this section we show the second statement of \cref{thm:bitsampling}.
More precisely, we formally show that all the generated bits are independent of each other.
Intuitively, this follows because the value is determined by orientation of the edge in step $t$ and this orientation is used only for \emph{one} random bit (with other node receiving $\bot$ in the step).
Given that the orientations of edges in different steps are independently chosen by the scheduler, the independence of our bit follows.
To make it more formal, we will show that in any possible order which the bits are picked, they are always independent of past and
future bits.
Intuitively, we will show that, given that in step $t$ where are all previous random bits match a fixed bitstrings, the probability that the next bit matches is $\frac{1}{2}$.
This is true because the distribution of each new bit is uniform independent of the entire processes' history including all previously sampled bits.
This follows directly from \cref{lemma:unbiasedbit}.

Given this intuition, we now want show it formally in the following lemma
\begin{lemma}
\label{lemma:independentbits}
For each $u \in V$ and $\ell \geq 1$, let $R_u(\ell) \in \{0,1\}$ be the
$\ell$-th bit sampled by $u$. Then all random variables
$(R_u(\ell))_{u \in V, \ell \geq 1}$ are independent and uniformly distributed.
\end{lemma}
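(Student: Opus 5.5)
We reduce the claim to showing that, for every finite family of indices $(u_1,\ell_1),\ldots,(u_k,\ell_k)$ (pairwise distinct) and every bit string $(x_1,\ldots,x_k)\in\{0,1\}^k$,
\[
\pr{R_{u_1}(\ell_1)=x_1,\ldots,R_{u_k}(\ell_k)=x_k \text{ and all these bits get sampled}} = 2^{-k}\cdot \pr{\text{all these bits get sampled}},
\]
which gives mutual independence and uniformity. This is made rigorous by a martingale-style product argument over time steps. The key structural fact we use is the one established in the proof of \cref{lemma:unbiasedbit}. Given $\Vec{m}(t-1)$, the event $S_u(t)=1$ depends only on the \emph{undirected} edge chosen in step $t$ together with the colors, which are functions of $\Vec{m}(t-1)$. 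The value of the bit, by contrast, is determined solely by the \emph{orientation} in step $t$, which is a fresh fair coin. In each step at most one node (the black endpoint) samples a bit, so each orientation feeds at most one bit.

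We proceed by induction on the step $t$. Define $Z_t$ as the indicator that every bit among the family sampled in steps $\le t$ matches its prescribed value. Let $K_t$ be the number of family bits sampled by step $t$; both are functions of $\Vec{m}(t)$. We show that $X_t = 2^{K_t} Z_t$ has constant expectation in $t$. Condition on $\Vec{m}(t-1)$. If no family bit is created in step $t$, then $X_t=X_{t-1}$. Otherwise, exactly one node $u_j$ creates its $\ell_j$-th bit, and whether this happens is determined by $\Vec{m}(t-1)$ and the undirected edge in step $t$. We further condition on that undirected edge. Within this conditioning, by the argument of \cref{lemma:unbiasedbit} (the bit equals $1$ iff the black endpoint is the initiator, an event of conditional probability $1/2$), the new bit matches $x_j$ with probability exactly $1/2$. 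Hence $\E{X_t \mid \Vec{m}(t-1), \text{edge}} = X_{t-1}$. It follows that $\E{X_t}=\E{X_0}=1$.

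To conclude, let $A$ be the event that all $k$ bits are eventually sampled. On $A$, $X_t \to 2^k Z_\infty$. We need the expectation restricted to $A$ rather than overall, so we refine the martingale to $X_t\cdot \1_{B}$, where $B$ is an event measurable with respect to the undirected edge sequence alone. Since $A$ and the ``which step samples which bit'' structure are determined by undirected edges plus earlier orientations, the cleanest route is to condition on the full undirected edge sequence together with the orientations of all steps that do not produce family bits. Given this conditioning, the family bits become functions of $k$ distinct orientation coins. We must check that the identity of the step producing bit $(u_j,\ell_j)$ may depend on earlier orientations, hence possibly on earlier family bits. This adaptivity is exactly what the sequential argument above handles. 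Taking the expectation with the indicator of $A$ (a stopping-time-type argument, with the $t\to\infty$ limit justified by boundedness, $X_t\le 2^k$, and dominated convergence) yields the product formula, and thus uniformity and independence.

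The main obstacle is this adaptivity. Which orientation becomes the $\ell$-th bit of $u$ depends on past orientations, since colors depend on them. Consequently, one cannot simply say the bits are $k$ fixed independent coins; the sequential conditional-probability argument from \cref{lemma:unbiasedbit} is what resolves it.
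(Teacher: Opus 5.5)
\textbf{Verdict: the core martingale step is correct and matches the paper, but the final step has a genuine gap.}

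Your product martingale $X_t=2^{K_t}Z_t$ is exactly the paper's $Z'(t)$. Its one-step analysis is also the paper's induction via \cref{lemma:unbiasedbit}: condition on the history and the undirected edge, and the new family bit is then a fresh fair orientation coin.

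The gap is in your last paragraph. Constant expectation plus bounded convergence only gives $\E{2^{K_\infty}Z_\infty}=1$. To reach your target $\pr{\text{match}\cap A}=2^{-k}\pr{A}$ you would also need $\E{X_\infty\1_{A^c}}=\pr{A^c}$, and nothing you wrote provides it. Your proposed repair does not work, for two reasons:
\begin{itemize}
\item $A$ is not measurable with respect to the undirected edge sequence, because sampling times depend on colors and hence on orientations.
\item Conditioning on ``the orientations of all steps that do not produce family bits'' leaks the family bits themselves. A node that has just sampled is black afterwards iff its bit was $1$, and that changes which later steps are family steps.
\end{itemize}
Concretely, take a single edge $\{u,v\}$. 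Then $u$ samples at step $t\ge 2$ iff it was the initiator at step $t-1$. So the event that $u$ has two bits by step $3$ forces its first bit to be $1$. Bits conditioned on sampling events of probability less than one need not be fair. The restriction to $A$ therefore cannot be handled by a generic conditioning or martingale trick.

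The missing ingredient is simply $\pr{A}=1$. By \cref{lem:timewhatistime}, whatever the history, each node samples a bit within the next $4\mathcal{T}$ steps with probability at least $1/96$. Hence almost surely every node samples infinitely many bits. Then $X_\infty=2^kZ_\infty$ almost surely, and bounded convergence gives $\pr{R_{u_1}(\ell_1)=x_1,\ldots,R_{u_k}(\ell_k)=x_k}=2^{-k}$ directly, with no conditioning on $A$.

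The paper uses the same fact tacitly when it evaluates $\E{Z'(T)}=1$ at the random time $T$. That step needs $T<\infty$ almost surely and optional stopping for the bounded martingale. Once you state $\pr{A}=1$ explicitly, your $t\to\infty$ limit argument is a clean way to make it rigorous.
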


\begin{proof}
We prove that every finite collection of sampled bits is independent and
uniformly distributed. For each agent $u \in V$, let $s_u \geq 0$ be an
integer, and fix an arbitrary bitstring
$b_u(1),\ldots,b_u(s_u) \in \{0,1\}$ of length $s_u$.
For each agent $u$ and step $t$, let $S_u(t)\in\{0,1\}$ indicate whether
$u$ samples a bit in step $t$, and let
$\mathcal{S}_u(t)=\sum_{\tau=1}^{t} S_u(\tau)$ be the number of bits sampled
by $u$ up to step $t$.
For each $u \in V$ and step $t$, define the indicator
\begin{align*}
Y_u(t)
=
\begin{cases}
1
& \text{if $\mathcal{S}_u(t)>s_u$,}\\
1
& \text{if $S_u(t)=0$,}\\
1
& \text{if $S_u(t)=1$ and
$R_u(\mathcal{S}_u(t))=b_u(\mathcal{S}_u(t))$,}\\
0
& \text{otherwise.}
\end{cases}
\end{align*}
Thus, $Y_u(t)=0$ only if $u$ samples one of the first $s_u$ bits in step $t$
and this bit does not match the prescribed bitstring. Once $u$ has already
sampled more than $s_u$ bits, we always set $Y_u(t)=1$.
Furthermore, define
$
Z(t)
=
\prod_{u \in V}
\prod_{\tau=1}^{t} Y_u(\tau)$.
Then $Z(t)=1$ if and only if all relevant bits sampled up to step $t$ match
their prescribed bitstrings, that is, $Z(t)=1$ if and only if for all
$u \in V$ and all $j \leq \min\{\mathcal{S}_u(t),s_u\}$, it holds that
$R_u(j)=b_u(j)$.

Let $T$ be the first step at which every agent $u$ has sampled at least
$s_u$ bits. It is enough to show that $\pr{Z(T)=1}=2^{-\sum_{u \in V}s_u}$.
Indeed, since the integers $s_u$ and the target bits $b_u(j)$ were chosen
arbitrarily, this implies that every finite collection of sampled bits is
independent and uniformly distributed.

To this end, for every $t \geq 0$ define $Z'(t) = 2^{\sum_{u \in V}\min\{\mathcal{S}_u(t),s_u\}} \cdot Z(t)$
Given this definition, we prove the stronger statement that, for every $t \geq 0$,
\begin{align}
\E{Z'(t)}= 1 \label{ineq:martingale}.
\end{align}
Applying this statement at time $T$, we have
$\min\{\mathcal{S}_u(T),s_u\}=s_u$ for every $u \in V$. Therefore,
\begin{align*}
1=\E{Z'(T)}=\E{2^{\sum_{u \in V}s_u}Z(T)}=2^{\sum_{u \in V}s_u} \cdot
\E{Z(T)} = 2^{\sum_{u \in V}s_u} \cdot
\pr{Z(T)=1}.
\end{align*}
Note that the last equality follows since $Z(T) \in \{0,1\}$ is an indicator. Thus, solving the previous for $\pr{Z(T)=1}$ gives
\begin{align*}
\pr{Z(T)=1}
=
2^{-\sum_{u \in V}s_u}.
\end{align*}
This proves that every finite collection of sampled bits is independent and
uniformly distributed. 

It remains to prove \cref{ineq:martingale}.
We will the statement by induction on $t$.
For $t=0$, this is immediate, since $Z(0)=1$ and
$\mathcal{S}_u(0)=0$ for all $u \in V$.
Now assume that the statement holds for $t-1$. Let
$V^{\star}(t)=\{u \in V \colon \mathcal{S}_u(t-1)<s_u\}$ be the set of agents
that have not yet sampled all their prescribed bits before step $t$.
Then, it easy to verify that it holds
\begin{align*}
Z'(t)
=
Z'(t-1)
\cdot
\prod_{u \in V^{\star}(t)} 2^{S_u(t)}Y_u(t).
\end{align*}
In the following, we condition on the full configuration $\Vec{M}(t-1)=\vec{m}$ before step
$t$. By the law of total expectation,
\begin{align*}
\E{
Z(t)
}
&=
\sum_{\vec{m}}
\pr{\Vec{M}(t-1)=\vec{m}}
\cdot
\Emt{
Z(t-1)
\cdot
\prod_{u \in V^{\star}(t)} 2^{S_u(t)}Y_u(t)
}\\
&=
\sum_{\vec{m}}
\pr{\Vec{M}(t-1)=\vec{m}}
\cdot
\Emt{
Z(t-1)
}
\cdot
\Emt{
\prod_{u \in V^{\star}(t)} 2^{S_u(t)}Y_u(t)
}.
\end{align*}
The second equality follows as $Z'(t-1)$ is completely determined by $\Vec{m}$ and therefore $\Emt{Z(t-1)}$ is a constant and not a random variable.
It remains to evaluate the last conditional expectation. By assumption, in
each step at most one agent samples a bit, so
$\sum_{u \in V^{\star}(t)} S_u(t) \in \{0,1\}$. If no agent in
$V^{\star}(t)$ samples a bit in step $t$, then $S_u(t)=0$ and $Y_u(t)=1$
for all $u \in V^{\star}(t)$, and therefore
$\prod_{u \in V^{\star}(t)} 2^{S_u(t)}Y_u(t)=1$.
Otherwise, there is a unique agent $u^\star \in V^{\star}(t)$ with
$S_{u^\star}(t)=1$. Conditioned on $\Vec{M}(t-1)=\vec{m}$ and on the event
that $u^\star$ samples a bit in step $t$, the sampled bit is uniform and
independent of all previously sampled bits. Thus, by \cref{lemma:unbiasedbit}, it holds
\begin{align*}
\prmt{
R_{u^\star}(\mathcal{S}_{u^\star}(t))
=
b_{u^\star}(\mathcal{S}_{u^\star}(t))
\mid
S_{u^\star}(t) = 1
}
=
\frac{1}{2}.
\end{align*}
Consequently,
\begin{align*}
\Emt{
2^{S_{u^\star}(t)}Y_{u^\star}(t)
\mid
S_{u^\star}(t) = 1
}
=
2 \cdot \frac{1}{2}
=
1.
\end{align*}
All other factors in the product are equal to $1$. Hence, in both cases,
\[\Emt{\prod_{u \in V^{\star}(t)} 2^{S_u(t)}Y_u(t)}=1.\]
Plugging this back into the previous display gives
\begin{align*}
\E{
Z(t)
}
&=
\sum_{\vec{m}}
\pr{\Vec{M}(t-1)=\vec{m}}
\cdot
\Emt{
Z(t-1)
}
\cdot
\Emt{
\prod_{u \in V^{\star}(t)} 2^{S_u(t)}Y_u(t)
}\\
&=
\sum_{\vec{m}}
\pr{\Vec{M}(t-1)=\vec{m}}
\cdot
\Emt{
Z(t-1)
} = \E{Z'(t-1)} = 1.
\end{align*}
where the last equality follows from the induction hypothesis.
\end{proof}

\subsection{Bits are Sampled Frequently}
\label{sec:frequently}
In this section, we prove the third statement of \cref{thm:bitsampling} and show that each node will regularly generate new random bits.
Recall that this property is crucial as, up until now, we only argued that all bits a node gets are unbiased but never argued that it will actually get bits.
To conclude our analysis of \cref{thm:bitsampling}, we therefore show the following lemma.
\begin{lemma}
\label{lem:timewhatistime}
For each node, let $\mathcal{T}_u$ be expected time in which the node interacts once on expectation, i.e., $\mathcal{T}_u = \frac{|E|}{|N_u|}$. Further, let $\mathcal{T} = \max_{u \in V} \mathcal{T}_u = \tfrac{|E(G)|}{\delta(G)}$.
Fix a step $t$. Then, for any $m \in E^t$, it holds:
\begin{align*}
\pr{\sum_{t'=t}^{t+4\mathcal{T}} S_v(t') > 0
\mid \Vec{M}(t)=\Vec{m}(t)}
\geq \frac{1}{96}.
\end{align*}
\end{lemma}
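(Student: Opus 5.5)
We fix the history $\Vec{m}(t)$ and a node $u$, and write $d_u=|N_u|\ge\delta(G)$. From now on the only randomness is the scheduler's choices in the window $[t+1,t+4\mathcal{T}]$. The undirected edge chosen in each step is independent of its orientation, so we split the randomness into two independent parts: the sequence of undirected edges, which we call the \emph{pattern}, and the i.i.d.\ fair orientation bits. The plan is to find a pattern event of constant probability that guarantees a step in which $u$ meets a neighbour $w$, where $u$ and $w$ have both already interacted earlier in the window. Conditioned on such a pattern, the colours of $u$ and $w$ at that step depend only on orientations inside the window. Then $u$ is black and $w$ is white with probability at least $1/4$.

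\emph{Step 1 (pattern).} Split the window into two halves of length $2\mathcal{T}$. In the first half, $u$ is incident to the chosen edge with probability $d_u/|E|$ per step. So $u$ interacts at least once with probability $1-(1-d_u/|E|)^{2\mathcal{T}}\ge 1-e^{-2}$. In the second half we want a step whose edge $\{u,w\}$ has an endpoint $w$ that has already interacted earlier in the window. We take the \emph{first} interaction of $u$ in the second half; it occurs with probability at least $1-e^{-2}$. Its partner $w$ is uniform over $N_u$, so we need to lower-bound the probability that this random neighbour $w$ has interacted since $t+1$.

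This is the step I expect to be the main obstacle, because $w$ may have degree $1$. I would handle it as follows. If $u$'s first-half interaction was with $w'$, then $w'$ has certainly interacted in the window. Consider the event that $u$ interacts in the second half with a neighbour that has already interacted in the window; for instance, $u$ interacting twice with the same edge counts. A cleaner route is to drop the first/second split and argue directly. Let $A$ be the event that $u$ interacts at least twice in the window and the edge of its second interaction joins $u$ to a node $w$ that interacted earlier in the window. It suffices that $u$ has at least two interactions and that the second one uses the same edge as the first, or a neighbour that has already moved. A crude bound: every neighbour $w$ gets at least one interaction (namely its interaction with $u$) whenever $u$ picks it. So $\Pr[A]\ge\Pr[u \text{ interacts}\ge 2 \text{ times}]\cdot\Pr[\text{second partner already active}\mid\cdot]$. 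The second factor needs an averaging argument: by symmetry over the partners, with constant probability some neighbour of $u$ is re-chosen or was active. If this is too lossy, I would instead use that in $4\mathcal{T}$ steps $u$ interacts $\Theta(1)$ times in expectation. Then I would pick the last interaction of $u$ in the window whose partner was previously active, and bound the probability via a second-moment or Poisson estimate. The constant $1/96$ suggests a target like $(1/2)\cdot(1/2)\cdot(1/6)\cdot(1/4)$, the last factor coming from the colours.

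\emph{Step 2 (colours).} Condition on a pattern in $A$ and let $t^*$ be the step where $u$ meets $w$, both having interacted before $t^*$ inside the window. By the argument in the proof of \cref{lemma:unbiasedbit}, the colour of a node at $t^*$ is a function of the orientation of its last interaction. There are two cases. If the last interactions of $u$ and $w$ were the same step, they have opposite colours, and $u$ is black with probability $1/2$. Otherwise the two orientations are distinct independent fair bits, giving probability $1/4$ that $u$ is black and $w$ is white. In either case $S_u(t^*)=1$ with probability at least $1/4$, independently of the pattern. This gives $\Pr[\sum_{t'}S_u(t')>0\mid\Vec{m}(t)]\ge \Pr[A]/4\ge 1/96$ once $\Pr[A]\ge1/24$. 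Independence over non-overlapping intervals follows because the bound holds conditioned on any history $\Vec{m}(t)$.
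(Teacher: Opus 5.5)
Your Step~2 is fine. It is the colour argument of the paper's \cref{lemma:bw}: once the undirected pattern is fixed and both endpoints have interacted inside the window, $u$ is black and $w$ is white with probability at least $1/4$.

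\textbf{The gap is in Step~1.} You never prove $\Pr[A]\ge 1/24$, or any constant lower bound. The one concrete event you define does not have constant probability. Take $G$ a star and $u$ its centre. Then $u$ interacts in every step, so its second interaction is the very next step, with a uniformly random leaf. That leaf has interacted earlier in the window only if it equals the first leaf, which happens with probability $1/(n-1)$. Your other two remarks do not repair this:
\begin{itemize}
\item The ``averaging/symmetry'' sentence is not an argument.
\item The remark that $w$ gets an interaction ``whenever $u$ picks it'' is circular, because that interaction is the one at $t^*$ itself.
\end{itemize}
What is needed is a burn-in period of length $\Theta(\mathcal{T})$ before the meeting step, not a count of $u$'s interactions.

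Your abandoned first idea (split the window into two halves) already works. Your worry about degree-one neighbours is unfounded. Since $\mathcal{T}=|E|/\delta(G)\ge |E|/|N_y|$ for every node $y$, any fixed node misses all of $2\mathcal{T}$ steps with probability $(1-|N_y|/|E|)^{2\mathcal{T}}\le e^{-2}$, whatever its degree. So let $s$ be the first step of the second half in which $u$ interacts, and let $w$ be its partner there; this happens with probability at least $1-e^{-2}$. Both $s$ and $w$ are functions of the second-half pattern only, which is independent of the first-half pattern. Conditioned on the second-half pattern, $u$ and $w$ both interact in the first half with probability at least $1-2e^{-2}$. Hence $\Pr[A]\ge(1-e^{-2})(1-2e^{-2})>1/2$, and your Step~2 then gives a bound above $1/8$.

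The paper avoids singling out a step. By linearity over steps $t'\ge 2\mathcal{T}$ and neighbours $x$, it uses that the edge chosen at $t'$ is independent of the colours at $t'$, together with \cref{lemma:bw} ($\Pr[B_v(t')W_x(t')=1]\ge 1/8$), to get $\E{X}\ge |N_v|/(4\delta)$. It then bounds $\E{X^2}$ by the second moment of the binomial interaction count of $v$ and applies Paley--Zygmund to reach $1/96$. Your route, once repaired as above, is more elementary and yields a better constant.
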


\begin{proof}
For easier notation, we assume w.l.o.g. that the process starts in step $0$
and omit the conditioning on $\{\Vec{M}(t)=\Vec{m}(t)\}$.
For each $x \in N_v$ and each step $t \geq 0$, let $W_x(t)$ and $B_v(t)$
be the indicators that $x$ is white and $v$ is black in step $t$, respectively.
Further, let $A_x(t) \in \{0,1\}$ be the indicator that $x$ interacts in step
$t$. Then
\begin{align}
\E{\sum_{t=0}^{4\mathcal{T}} S_v(t)}
&=
\E{\sum_{t=0}^{4\mathcal{T}}
\sum_{x \in N_v}
(A_v(t) \cdot A_x(t)) \cdot (B_v(t) \cdot W_x(t))} \label{ineq:def} \\
&=
\sum_{t=0}^{4\mathcal{T}}
\sum_{x \in N_v}
\E{A_v(t) \cdot A_x(t)}
\cdot
\E{B_v(t) \cdot W_x(t)} \label{ineq:indep} \\
&=
\frac{1}{|E|}
\sum_{t=0}^{4\mathcal{T}}
\sum_{x \in N_v}
\E{B_v(t) \cdot W_x(t)} . \label{ineq:uni}
\end{align}
Here, \cref{ineq:def} follows from the definition of $S_v(t)$,
\cref{ineq:indep} follows because the scheduler chooses the interacting pair
independently of the colors, and \cref{ineq:uni} follows since every edge is
picked uniformly at random.
By \cref{lemma:bw} (that we will show below), for all $t \geq 2\mathcal{T}$ and all $x \in N_v$,
we have $\pr{B_v(t) \cdot W_x(t) = 1} \geq \frac{1}{8}$.
Hence, using only the interval $\{2\mathcal{T},\ldots,4\mathcal{T}-1\}$,
we obtain
\begin{align}
\E{\sum_{t=0}^{4\mathcal{T}} S_v(t)}
&\geq
\frac{1}{|E|}
\sum_{t=2\mathcal{T}}^{4\mathcal{T}-1}
\sum_{x \in N_v}
\pr{B_v(t) \cdot W_x(t)=1}
\geq
\frac{1}{|E|} \cdot 2\mathcal{T} \cdot |N_v| \cdot \frac{1}{8}
=
\frac{|N_v|}{4\delta},
\label{ineq:firstmoment}
\end{align}
where we used $\mathcal{T}=|E|/\delta$.
It remains to lower-bound the probability that at least one such successful
interaction occurs. Define
\begin{align*}
X =
\sum_{t=2\mathcal{T}}^{4\mathcal{T}-1} S_v(t) && \text{and} && Y =
\sum_{t=2\mathcal{T}}^{4\mathcal{T}-1} A_v(t)
\end{align*}
as the number of steps in this interval in which $v$ samples a bit and interacts respectively. Since $v$ independently
interacts in any fixed step with probability $|N_v|/|E|$, this means that $Y$ is binomially distributed.
Therefore,
\begin{align*}
\E{Y} &= 2\mathcal{T}\frac{|N_v|}{|E|}=2\frac{|N_v|}{\delta}
\intertext{and}
\mathbf{Var}\left[{Y}\right] &= 2\mathcal{T}\left(\frac{|N_v|}{|E|}\right)\left(1-\frac{|N_v|}{|E|}\right)=2\frac{|N_v|}{\delta}\left(1-\frac{|N_v|}{|E|}\right) < \E{Y}.
\end{align*}
Using the definition of variance, we get
\begin{align*}
\E{Y^2}
=
\mathbf{Var}[Y] + \E{Y}^2
\leq
\E{Y}+\E{Y}^2
=
2\frac{|N_v|}{\delta}
+
4\frac{|N_v|^2}{\delta^2}.
\end{align*}
Recall $X \leq Y$ because every successful sampling requires an interaction. Furthermore, both $X$ and $Y$ are non-negative. This implies $\E{X^2} \leq \E{Y^2}$ and therefore
\begin{align}
\E{X^2}
\leq
\E{Y^2}
\leq
2\frac{|N_v|}{\delta}
+
4\frac{|N_v|^2}{\delta^2}.
\label{ineq:secondmoment}
\end{align}
Finally, note that $X \geq 0$ and $X$ has finite variance.
Then, by the Paley-Zygmund inequality (see \cref{paley}) , applied with threshold $0$,
\begin{align*}
\pr{\sum_{t=0}^{4\mathcal{T}} S_v(t) > 0} &\geq \pr{\sum_{t=2\mathcal{T}}^{4\mathcal{T}} S_v(t) > 0} =
\pr{X>0}\\
&\geq
\frac{\E{X}^2}{\E{X^2}}
\geq
\frac{
\left(\frac{|N_v|}{4\delta}\right)^2
}{
2\frac{|N_v|}{\delta}
+
4\frac{|N_v|^2}{\delta^2}
}
=
\frac{
\frac{|N_v|}{\delta}
}{
32 + 64\frac{|N_v|}{\delta}
} \geq \frac{1}{96}.
\end{align*}
This proves the lemma, assuming \cref{lemma:bw}.
\end{proof}

\begin{lemma}
\label{lemma:bw}
For all $t \geq 2\mathcal{T}$ and all $x \in N_v$, it holds that
\begin{align*}
\pr{W_x(t) \cdot B_v(t)=1} \geq \frac{1}{8}.
\end{align*}
\end{lemma}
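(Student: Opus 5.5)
The plan is to separate the randomness of the scheduler into two independent parts: the sequence of \emph{undirected} edges $e(0), e(1), \ldots$ and the \emph{orientations} of these edges. Given the undirected sequence, the orientations are i.i.d.\ fair coins. As in the proof of \cref{lemma:unbiasedbit}, the colour of a node at the beginning of step $t$ depends only on its last interaction before $t$. If that interaction exists, the node is black exactly when it was the initiator there. As in \cref{lem:timewhatistime}, we take step $0$ as the start of the window and condition on the history before it, which fixes the initial colouring arbitrarily.

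Let $\Ev{both}$ be the event that both $v$ and $x$ take part in at least one interaction during the steps $0, \ldots, t-1$. This event depends only on the undirected edge sequence. We condition on an undirected sequence in $\Ev{both}$ and let $\tau_v$ and $\tau_x$ be the last interaction steps of $v$ and $x$ before $t$. There are two cases.
\begin{itemize}
\item If $\tau_v = \tau_x$, then $e(\tau_v) = \{v, x\}$. After this step the two nodes have opposite colours, and $v$ is black exactly when it was the initiator. This happens with probability $1/2$, so $\pr{B_v(t) W_x(t) = 1} = 1/2$.
\item If $\tau_v \neq \tau_x$, then the colours of $v$ and $x$ are decided by the orientations of two different steps. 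These orientations are independent fair coins, so $\pr{B_v(t) W_x(t) = 1} = 1/4$.
\end{itemize}
In both cases the conditional probability is at least $1/4$. Neither case refers to the colouring at time $0$, because both nodes have been recoloured since then.

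It remains to show that $\pr{\Ev{both}} \geq 1/2$ for $t \geq 2\mathcal{T}$. In each step, node $v$ takes part in an interaction independently with probability $|N_v|/|E| \geq \delta/|E| = 1/\mathcal{T}$, and the same holds for $x$. Hence
\[
\pr{v \text{ does not interact in } [0, t-1]} \leq \left(1 - \tfrac{1}{\mathcal{T}}\right)^{2\mathcal{T}} \leq e^{-2},
\]
and likewise for $x$. By a union bound, $\pr{\Ev{both}} \geq 1 - 2e^{-2} > 1/2$.

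By the law of total probability over the undirected sequences,
\[
\pr{B_v(t) W_x(t) = 1} \geq \pr{\Ev{both}} \cdot \tfrac{1}{4} \geq \tfrac{1}{8}.
\]
The only step that needs care is the independence claim: we must make sure that conditioning on the undirected sequence, and hence on $\Ev{both}$, $\tau_v$ and $\tau_x$, does not bias the orientations. This holds because each directed edge is chosen uniformly, so the directed edge decomposes into a uniform undirected edge together with an independent fair orientation.
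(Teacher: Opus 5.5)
Your proposal is correct and follows essentially the same route as the paper. You condition on both $v$ and $x$ having interacted before step $t$, which has probability at least $1-2e^{-2}\geq 1/2$ by the same $(1-1/\mathcal{T})^{2\mathcal{T}}\leq e^{-2}$ bound, and then split on whether their last interactions coincide to get a conditional probability of at least $1/4$. Your explicit conditioning on the undirected edge sequence justifies the independence of the orientations more cleanly than the paper's factorization into $\Pr[B_v(t)=1\mid\cdot]\cdot\Pr[W_x(t)=1\mid B_v(t)=1,\cdot]$.
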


\begin{proof}
For each agent $y \in V$ and each step $t>0$, define $t_y^{\star}$ as
the last step before $t$ in which $y$ was active, or $\infty$ if $y$
was inactive in all steps before $t$:
\begin{align*}
t_y^{\star}
=
\begin{cases}
\max\{t' \in \{0,\ldots,t-1\}: A_y(t')=1\}
& \text{if such a step exists},\\
\infty
& \text{otherwise}.
\end{cases}
\end{align*}
Let $\mathfrak{T}
=
\{t_v^{\star}\neq \infty,\ t_x^{\star}\neq \infty\}$
be the event that both $v$ and $x$ have interacted at least once before step
$t$.
First, we lower-bound the probability of $\mathfrak{T}$. For a fixed node
$u$, the probability that it does not interact during the first
$2\mathcal{T}$ steps is
\begin{align*}
\left(1-\frac{|N_u|}{|E|}\right)^{2\mathcal{T}}
\leq
\left(1-\frac{\delta}{|E|}\right)^{2\mathcal{T}}
=
\left(1-\frac{1}{\mathcal{T}}\right)^{2\mathcal{T}}
\leq
e^{-2}.
\end{align*}
Thus, by the union bound, for every $t \geq 2\mathcal{T}$,
\begin{align}
\pr{\mathfrak{T}}
&\geq
1
-
\pr{t_v^{\star}=\infty}
-
\pr{t_x^{\star}=\infty}\geq
1-2e^{-2}
\geq
\frac{1}{2}.
\label{ineq:tbound}
\end{align}

Conditioned on $\mathfrak{T}$, the last interaction of $v$ before step
$t$ determines its current color. In this last interaction, $v$ is the
initiator with probability $1/2$. Hence
\begin{align}
\pr{B_v(t)=1 \mid \mathfrak{T}}
=
\frac{1}{2}.
\label{ineq:black}
\end{align}
Here, we used $\mathfrak{T}$ and therefore $t^\star_x$ and $t^\star_v$ are only determined by the interaction pairs until step $t$ but not the initiators and responders.

It remains to lower-bound the probability that $x$ is white, conditioned on
$v$ being black and on $\mathfrak{T}$. If $t_x^{\star}=t_v^{\star}$, then
the last interaction of both nodes was their mutual interaction. Since $v$ is
black in that interaction, $x$ is white with probability $1$. Otherwise,
if $t_x^{\star}\neq t_v^{\star}$, then the last interaction of $x$ is
different from the last interaction of $v$, and $x$ is white with
probability $1/2$. Therefore
\begin{align}
\pr{W_x(t)=1 \mid B_v(t)=1,\mathfrak{T}}
&=
\pr{t_x^{\star}=t_v^{\star} \mid B_v(t)=1,\mathfrak{T}}
\cdot 1 \nonumber \\
&\phantom{=}
+
\pr{t_x^{\star}\neq t_v^{\star} \mid B_v(t)=1,\mathfrak{T}}
\cdot \frac{1}{2}
\geq
\frac{1}{2}.
\label{ineq:white}
\end{align}

Combining \cref{ineq:tbound}, \cref{ineq:black}, and \cref{ineq:white}, we get
\begin{align*}
\pr{W_x(t)\cdot B_v(t)=1}
&\geq
\pr{\mathfrak{T}}
\cdot
\pr{B_v(t)=1 \mid \mathfrak{T}}
\cdot
\pr{W_x(t)=1 \mid B_v(t)=1,\mathfrak{T}} \\
&\geq
\frac{1}{2}\cdot \frac{1}{2}\cdot \frac{1}{2}
=
\frac{1}{8}.
\end{align*}
This proves the lemma.
\end{proof}
Finally, we show that the third property of \Cref{thm:bitsampling} implies that, w.h.p., all nodes will nodes will repeatedly sample unbiased and independent random bits.
Formally, we show the following corollary.

\restatebitsamplingRepeat*
\begin{proof}
First, we note that the second statement follows directly from the first.
Assume that the first statement holds. Choosing $\ell = (c+1)\log n$ gives
each node $u$ a probability of at least $1-e^{-\ell}=1-n^{-(c+1)}$ to sample
$\ell$ bits. Thus, with probability at most $n^{-(c+1)}$, node $u$ samples
less than $\ell$ bits. Denote this event by $F_u$.
Note that the events $(F_u)_{u \in V}$ are not independent. However, by the
union bound, the probability that any of these $n$ events happens is at most $\frac{1}{n^c}$.
Therefore, it is sufficient to prove the first statement.

For the proof, fix an agent $u$.
We divide the time into non-overlapping phases
of length $4{|E(G)|}/{\delta(G)}$. More precisely, for each $i \geq 1$, phase
$i$ consists of the steps $\left[
(i-1)\cdot 4\frac{|E(G)|}{\delta(G)},
i\cdot 4\frac{|E(G)|}{\delta(G)}-1
\right]$.
We show that within $L = 96 C \ell$
phases, where $C \geq 4$ is a sufficiently large constant, agent $u$ samples
at least $\ell$ bits with probability at least $1-e^{-\ell}$.

For each phase $i \in \{1,\ldots,L\}$, let $X_i \in \{0,1\}$ be the random
variable that indicates whether $u$ successfully samples a bit in phase $i$.
If $X_i=1$, we call phase $i$ successful. Since each successful phase produces
at least one random bit, the number of successful phases is a lower bound for
the number of sampled bits.
It remains to show that $\mathcal{X}= \sum_{i=1}^{L} X_i \geq \ell$ with probability at least
$1-e^{-\ell}$. As each phase has length $4{|E(G)|}/{\delta(G)}$, the total number of steps is in $O(\ell \frac{|E(G)|}{\delta(G)})$, which proves the statement.

By property $(3)$ of
\Cref{thm:bitsampling}, for every possible configuration
$\Vec{m}$ at the beginning of phase $i$ the probability of success is at least $\frac{1}{96}$.
Thus, we have for every $i \geq 1$ and every vector $(x_{i-1}, \ldots, x_1) \in \{0, 1\}^{i-1}$ that
\begin{align*}
\E{X_i \mid X_{i-1} = x_{i-1}, \ldots, X_{1} = x_{1} }
\geq
\frac{1}{96}.
\end{align*}
Moreover, this implies
\begin{align*}
\E{\prod_{i=1}^L X_i } \geq
(\frac{1}{96})^L &&\text{and}&& \E{\mathcal{X}}=\sum_{i=1}^{L} \E{X_i}\geq\frac{L}{96}=C\ell.
\end{align*}
Although the random variables $X_1,\ldots,X_L$ are not independent,
the conditional lower bound above is sufficient to apply the multiplicative
Chernoff bound with $\mu = C\ell$ as our variables dominate Bernoulli distributed values with parameter $p = \frac{1}{96}$. Hence, for
every $\alpha \in (0,1)$,
\begin{align*}
\pr{\mathcal{X}\leq (1-\alpha)C\ell}
\leq
\exp\left(-\frac{\alpha^2 C \ell}{2}\right).
\end{align*}
If we choose $\alpha = 1-\frac{1}{C}$, we have
$ (1-\alpha)\frac{L}{96}
=
\frac{1}{C}\cdot C\ell
=
\ell.$
Thus,
\begin{align*}
\pr{\mathcal{X}\leq \ell}
&\leq
\exp\left(
-\frac{\left(1-\frac{1}{C}\right)^2 C\ell}{2}
\right).
\end{align*}
For every $C \geq 4$, we have
$
\frac{\left(1-\frac{1}{C}\right)^2 C}{2}
\geq
1$,
it holds,
$
\pr{\mathcal{X}\leq \ell}
\leq
e^{-\ell}.
$
as desired.
Since the number of successful phases is a lower bound for the number of bits
sampled by $u$, it follows that $u$ samples at least $\ell$ bits within
$L=96C\ell$ phases with probability at least $1-e^{-\ell}$.
This proves the first statement
and the corollary.
\end{proof}

\section{Baseline: Space-Efficient Uniform Counting}
\label{sec:folklore}

In this appendix we present \Baseline, a simple uniform population protocol that counts the population within $O(\hittingtime(G) \cdot n \log n)$ interactions. Recall that $\hittingtime(G)$ denotes the maximal hitting time of a random walk on $G$. The protocol is a direct extension of the leader-election protocol of Beauquier, Blanchard, and Burman \cite{DBLP:conf/opodis/BeauquierBB13}. For its analysis, we use results of Alistarh, Rybicki, and Voitovych \cite{DBLP:journals/dc/AlistarhRV25}.

At a high level, \Baseline{} works as follows. Each agent is either \emph{active} or \emph{inactive}, as indicated by the field $\isactive{}\in\{0,1\}$. An active agent holds one or more \emph{tokens} and stores its current token count. An inactive agent holds no tokens and stores a the highest number of tokens it observed so far. 
We represent both quantities by a positive integer field $\val{}$. The meaning of this field depends on whether the agent is active or inactive. For an active agent, $\val{}$ is its current token count. For an inactive agent, $\val{}$ is its current estimate, which is propagated using the maximum rule described below. In either case, the output of an agent $u$ is $\val{u}$. Thus, the protocol stabilizes when $\{\val{u}=n\}$ for every $u\in V$.

Initially, every agent is active and holds exactly one token. The transition rules are as follows. 
\begin{itemize}
\item Suppose that two active agents $u$ and $v$, holding $\val{u}$ and $\val{v}$ tokens, respectively, interact, and assume without loss of generality that $u$ is the initiator. Agent $v$ transfers all its tokens to $u$ and becomes inactive. After the interaction, $u$ is active with value $\val{u}+\val{v}$, while $v$ is inactive with the same value $\val{u}+\val{v}$.

\item If two inactive agents $u$ and $v$, storing values $\val{u}$ and $\val{v}$, respectively, interact, both update their stored value to $\max\{\val{u},\val{v}\}$.

\item Suppose that an active agent $u$, holding $\val{u}$ tokens, interacts with an inactive agent $v$, storing value $\val{v}$. The agents exchange roles: agent $v$ becomes active and receives the $\val{u}$ tokens held by $u$, while $u$ becomes inactive. After the interaction, $v$ stores the previous value $\val{u}$, while $u$ stores $\max\{\val{u},\val{v}\}$.
\end{itemize}

\begin{lstlisting}[caption={\Baseline($u$, $v$)}, output={$\rho(\isactive{u},\val{u})=\val{u}$}, label={alg:baseline}]
$m \gets \max\{\val v, \val u\}$

if $\isactive u = \isactive v = 1$ then /* If both are active, move all tokens to the initiator */
    $(\isactive u, \isactive v) \gets (1,0)$
    $(\val u, \val v) \gets (\val v + \val u, \val v + \val u)$
    
else if $\isactive u = \isactive v = 0$ then /* If both are inactive, propagate the largest value */
    $(\val u, \val v) \gets (m,m)$
    
else if $\isactive u =1$  /* Otherwise, swap states and propagate largest value to inactive agent */
    $(\isactive u, \isactive v) \gets (\isactive v,\isactive u)$
    $(\val u, \val v) \gets (m,\val u)$

else 
    $(\isactive u, \isactive v) \gets (\isactive v,\isactive u)$
    $(\val u, \val v) \gets (\val v, m)$
\end{lstlisting}

\subsection{Analysis of \Baseline}

We now analyze \Baseline{} and prove the following theorem.

\begin{theorem}
\label{thm:baseline-counting}
\Baseline{} is a uniform population protocol using $O(n)$ states on populations of size $n$ such that, on every connected interaction graph $G=(V,E)$ with $|V|=n$, all agents eventually stabilize to output $n$. Moreover, with high probability, stabilization occurs within $O(\hittingtime(G) \cdot n \log n)$ interactions.
\end{theorem}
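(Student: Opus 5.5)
The proof splits into three parts: the state bound, correctness (the protocol stabilizes exactly when all agents output $n$), and the running time. For the states, $\val{}$ is a positive integer. The total number of tokens held by active agents is invariant and equals $n$; I would verify this rule by rule in \Baseline, noting that tokens are only merged or moved. The value stored by an inactive agent is always some past token count of an active agent, so it is at most $n$. Hence $\val{}\in[n]$ and $\isactive{}\in\{0,1\}$, giving $O(n)$ states, and the protocol is clearly uniform.

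\textbf{Correctness.} The number of active agents never increases and drops by one exactly when two active agents meet. Active agents move like random walks on $G$: in an active--inactive interaction the active role moves across the edge. Any two walks on a connected graph meet with probability $1$, so eventually exactly one active agent remains, holding all $n$ tokens, and its value is $n$. From then on the maximum rule spreads $n$ to every inactive agent, since values never exceed $n$ and the active agent always keeps the value $n$. Once all agents hold $n$, no transition can change any value, so the configuration is stable. It also cannot stabilize earlier, because while two active agents exist they will merge, which creates a larger value.

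\textbf{Running time.} This splits into the coalescence time of the active tokens plus one broadcast. I would view each active agent as the position of a token that performs a lazy random walk in the edge-scheduler model: the walk moves whenever an incident edge is chosen and the partner is inactive, and merges when the partner is active. This is exactly the coalescing random walk process analyzed for the leader-election protocol of \cite{DBLP:conf/opodis/BeauquierBB13}. I would invoke the result of \cite{DBLP:journals/dc/AlistarhRV25} (their Theorem~3), which bounds the time until one token remains by $O(\hittingtime(G)\cdot n\log n)$ interactions w.h.p. The argument behind it goes as follows. Any two walks meet within $O(\hittingtime(G)\cdot n)$ scheduler steps in expectation, where the factor of order $m/\deg$ converts walk steps to interactions and is absorbed as in that paper. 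A Markov-plus-repetition argument then shows that with constant probability a constant fraction of the tokens pair up and merge per epoch, and $O(\log n)$ epochs suffice.

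After coalescence the value $n$ spreads as a one-way epidemic, which takes $O(\broadcasttime(G)\log n)$ w.h.p. For the time units to match, I still need $\broadcasttime(G)=O(\hittingtime(G)\cdot n)$ in interaction steps; I would justify this either by citing the same reference or by noting that broadcast is dominated by $n$ hitting times along a spanning structure.

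\textbf{Main obstacle.} The main obstacle is matching the coalescence bound exactly: the walks here are lazy and driven by the edge scheduler, with swaps on active--inactive edges. I would first try to couple the process directly to the model of \cite{DBLP:journals/dc/AlistarhRV25}, so that their bound transfers verbatim, and only fall back to a meeting-time argument if the coupling fails.
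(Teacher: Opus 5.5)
\paragraph{Overall approach}
Your outline is the same as the paper's. It uses the token invariant for the state bound, shows that a unique active agent eventually holds all $n$ tokens, lets the value $n$ spread as an epidemic, and handles the coalescence phase via Theorem~3 of \cite{DBLP:journals/dc/AlistarhRV25}. The coupling you list as the main obstacle is in fact exact and elementary, and it is what the paper does. Identify active/inactive agents with black/white agents of \cite{DBLP:conf/opodis/BeauquierBB13}. Then active--active, active--inactive and inactive--inactive interactions change the activity bits exactly as black--black, black--white and white--white interactions change colors. So the number of active agents equals the number of black agents at every step, and the $O(\hittingtime(G)\cdot n\log n)$ bound transfers verbatim with no meeting-time argument.

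\paragraph{The gap: the propagation phase}
This is the step that does not go through as written. Your $O(\broadcasttime(G)\log n)$ w.h.p.\ bound is fine, but you then need $\broadcasttime(G)=O(n\cdot\hittingtime(G))$.
\begin{itemize}
\item Your fallback (broadcast dominated by sequential crossings along a spanning structure) only yields $\broadcasttime(G)=O(nm)$. On $K_n$ this is $\Theta(n^3)$, against $n\cdot\hittingtime(K_n)=\Theta(n^2)$.
\item Matthews' bound gives $\broadcasttime(G)=O(n\,\hittingtime(G)\log n)$, which together with your extra $\log n$ overshoots by a logarithm.
\item The paper dispatches this step with $\broadcasttime(G)\le\hittingtime(G)$. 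That already fails on $K_n$, where $\broadcasttime(K_n)=(n-1)\sum_{s=1}^{n-1}1/s\approx n\ln n$ but $\hittingtime(K_n)=n-1$. So your caution was justified.
\end{itemize}

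\paragraph{A clean fix}
Avoid $\broadcasttime(G)$ altogether. After coalescence, the unique active agent crosses every chosen incident edge. It therefore performs the walk with $P(x,y)=1/m$ for $\{x,y\}\in E$, and every node it enters or leaves holds value $n$ from then on.

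This chain is symmetric, so its stationary distribution is uniform. The commute-time identity then gives commute times exactly $n/2$ times those of the simple random walk. Hence every hitting time of this walk is at most $n\cdot\hittingtime(G)$ interactions.

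Apply Markov's inequality with $O(\log n)$ restarts and a union bound over target nodes. This shows that all nodes are visited within $O(n\,\hittingtime(G)\log n)$ interactions w.h.p.

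The same identity is also the rigorous form of your ``factor of order $m/\deg(v)$'' conversion from walk steps to interactions. As you stated it, that factor could be as large as $m$ rather than $n$.
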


\begin{proof}
We first establish correctness. The protocol maintains the invariant that the token counts of all active agents sum to $n$. This invariant holds initially because every agent holds one token. An interaction between two active agents transfers tokens between them but neither creates nor destroys any tokens. An interaction between an active and an inactive agent only moves an existing collection of tokens from one agent to another. Finally, an interaction between two inactive agents does not affect any tokens. Hence, every transition preserves the invariant.

An interaction between two active agents decreases the number of active agents by one. Every other interaction leaves the number of active agents unchanged. Thus, the number of active agents never increases. It can never become zero, because the active agents collectively hold all $n$ tokens. Consequently, once only one active agent $s$ remains, the invariant implies that $s$ holds all $n$ tokens and therefore stores the value $n$.

From that point onward, the value $n$ spreads as an epidemic. Call an agent \emph{informed} if its stored value is $n$. If two inactive agents interact and one of them is informed, the maximum rule makes both agents informed. If the active agent interacts with an inactive agent, then the active agent transfers all $n$ tokens to the inactive agent and stores $n$ upon becoming inactive. Thus, both agents are informed after the interaction. In particular, once only one active agent remains, an informed agent can never become uninformed. Because $G$ is connected and every edge is selected infinitely often with probability one, every agent is eventually informed. Hence, all agents eventually stabilize to output $n$.

We next bound the time complexity. We use Claim \ref{clm:single-active}, which is proved immediately after this theorem. By Claim \ref{clm:single-active}, with high probability a unique active agent $s$ remains within $O(\hittingtime(G) \cdot n \log n)$ interactions. As argued above, this agent stores the value $n$. The epidemic then informs every agent within an additional $O(\broadcasttime(G))$ interactions, with high probability. Once all agents store $n$, every subsequent transition preserves their outputs, so the protocol has stabilized. Using the standard inequality $\broadcasttime(G)\leq\hittingtime(G)$, the dissemination time is dominated by $O(\hittingtime(G) \cdot n \log n)$.

Finally, we establish the state complexity. Each agent stores the bit $\isactive{}$, indicating whether it is active or inactive, and a positive integer $\val{}$. No value reachable in an execution on $n$ agents can exceed $n$. Indeed, by the token invariant, an active agent never holds more than $n$ tokens. Moreover, every value stored by an inactive agent is equal to a token count held by an active agent at the same or an earlier interaction. This property follows by induction over the interactions. An active--active interaction assigns the newly inactive agent the new token count of the surviving active agent, while the other two rules either copy an active token count or retain the maximum of two previously stored estimates. Therefore, the value of an inactive agent cannot exceed $n$ either.

It follows that, on a population of size $n$, every reachable state consists of one bit and one value in $[1,n]$. Thus, at most $2n$ states are reachable. The transition rules themselves do not depend on $n$, so the protocol is uniform. Together with Claim \ref{clm:single-active}, this completes the proof.
\end{proof}

We now prove the claim used in the time analysis.

\begin{claim}
\label{clm:single-active}
With high probability, exactly one active agent remains after $O(\hittingtime(G) \cdot n \log n)$ interactions.
\end{claim}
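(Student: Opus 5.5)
We view the active agents as tokens performing random walks that coalesce, and reduce the claim to a coalescing-random-walk bound.

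\emph{Token dynamics.} We call the position of each active agent a \emph{token}. In an active--inactive interaction the active role moves across the chosen edge. So, ignoring other active agents, each token performs a random walk on $G$ driven by the scheduler: whenever an edge incident to its current node is selected, it moves to the other endpoint. This is the continuous-time-like simple random walk embedded in the scheduler sequence. When two tokens are adjacent and their edge is selected, the two active agents merge into one. The process is therefore a system of coalescing random walks, with the slight difference that two tokens merge on crossing an edge rather than on meeting at a node. This difference only speeds up coalescence, since adjacent tokens would otherwise swap.

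\emph{Pairwise meeting time.} We use the standard bound (as in \cite{DBLP:journals/dc/AlistarhRV25}, where this protocol is analysed for leader election) that two such walks started anywhere meet within expected $O(\hittingtime(G)\cdot n)$ scheduler steps. The factor $n$ accounts for the scheduler moving a given walk only at rate $\deg/|E|$ per step, so $\hittingtime(G)$ random-walk steps correspond to $O(\hittingtime(G)\cdot |E|/\delta)$ interactions. The meeting time of two walks is bounded by a constant times the hitting time in the product chain.

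\emph{From pairs to full coalescence.} Markov's inequality gives the following: within $2\,C\,\hittingtime(G)\, n$ interactions, any fixed pair of surviving tokens has met with probability at least $1/2$, regardless of the starting configuration. We then apply the standard halving argument for coalescence, as in \cite{DBLP:journals/talg/KanadeMS23}. We pair up the current tokens arbitrarily. In one phase of length $\Theta(\hittingtime(G) n)$, each pair coalesces (possibly with a third party) with probability at least $1/2$, so the expected number of tokens drops by a constant factor per phase. A Chernoff/drift argument over $O(\log n)$ phases then reduces the count to $1$ with probability $1-n^{-c}$.

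\emph{Main obstacle.} The main difficulty is justifying the pairwise meeting bound when the other tokens interfere. Interference can only cause additional merges, so it never delays reaching a single token. We handle this by a coupling argument: the number of distinct tokens is stochastically dominated by the count in a system where each pair is tracked until its first meeting. A second difficulty is the conversion between interaction counts and random-walk steps for low-degree nodes. Here we will absorb factors into $\hittingtime(G)\cdot n$ exactly as \cite{DBLP:journals/dc/AlistarhRV25} does in their $O(\hittingtime(G)\, n\log n)$ leader-election bound, and cite their theorem directly if it applies verbatim.
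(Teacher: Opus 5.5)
Your fallback, citing the leader-election analysis of \cite{DBLP:journals/dc/AlistarhRV25} directly, is exactly the paper's proof. It does apply verbatim, but the step that makes it apply is the one you leave implicit. The paper runs both protocols on the same sequence of ordered interactions and identifies active with black and inactive with white in the Beauquier--Blanchard--Burman protocol:
\begin{itemize}
\item active--active behaves like black--black (the initiator keeps its status, the responder flips);
\item active--inactive behaves like black--white (the statuses swap);
\item inactive--inactive behaves like white--white (nothing changes).
\end{itemize}
So the number of active agents equals the number of black agents at every step, and the w.h.p.\ $O(\hittingtime(G)\cdot n\log n)$ bound for black agents transfers. Your token-dynamics paragraph already states essentially these rules. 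You should make this identification explicit and stop there.

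The self-contained coalescing-walk route has a real gap at its only quantitative step, the pairwise meeting bound $O(\hittingtime(G)\cdot n)$.
\begin{itemize}
\item \emph{The conversion does not give the factor $n$.} Converting $\hittingtime(G)$ random-walk steps at rate $\deg/|E|$ gives $O(\hittingtime(G)\cdot |E|/\delta)$ interactions. But $|E|/\delta$ can be $\Theta(n^2)$, for example a clique with a pendant path such as the lollipop graph. That would only yield $O(\hittingtime(G)\, n^2\log n)$.
\item \emph{The correct conversion uses the Green's function.} Started at $x$, a simple random walk visits $u$ before hitting $y$ on average $\frac{\deg(u)}{2|E|}\bigl(\hittingtime(x,y)+\hittingtime(y,u)-\hittingtime(x,u)\bigr)$ times. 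Each visit costs $|E|/\deg(u)$ expected interactions, so the degrees cancel. The hitting time measured in scheduler steps is therefore at most $n\,\hittingtime(G)$.
\item \emph{The product-chain sentence is a restatement, not a bound.} ``The meeting time is a constant times the hitting time in the product chain'' says nothing new, because meeting is hitting the (near-)diagonal of the product chain. What you need is the reversible-chain fact that two independent walks meet within expected time at most the maximal hitting time, proved via a hidden-vertex potential.
\item \emph{Merging on an edge does not speed things up.} Both walks are driven by the same scheduler, so a token can never step onto the other's node. The merge rate across a shared edge is $1/|E|$, not the $2/|E|$ of independent clocks. Handling this costs a factor $2$ via a thinning argument, where each rejected meeting attempt restarts from distinct nodes.
\end{itemize}
With these repairs your pairing/halving argument is sound. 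Each merge is the first merge of at most two paired tokens, so $\E{N-1}$ shrinks by a constant factor per phase, and Markov's inequality after $O(\log n)$ phases finishes. The paper's one-line coupling makes all of this unnecessary.
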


\begin{proof}
Consider the leader-election protocol of Beauquier, Blanchard, and Burman \cite{DBLP:conf/opodis/BeauquierBB13}, together with the analysis of Alistarh, Rybicki, and Voitovych \cite{DBLP:journals/dc/AlistarhRV25}. In that protocol, every agent is initially black. When two black agents interact, the initiator remains black and the responder becomes white. When a black agent and a white agent interact, they exchange colors. An interaction between two white agents does not change their colors.

We couple the two protocols by using the same sequence of ordered interactions and identifying active agents with black agents and inactive agents with white agents. Initially, this correspondence holds because every agent is both active in our protocol and black in the leader-election protocol. An active--active interaction has the same effect on activity statuses as a black--black interaction has on colors. Similarly, an active--inactive interaction swaps the two activity statuses, just as a black--white interaction swaps the two colors. Finally, neither an inactive--inactive interaction nor a white--white interaction changes the corresponding statuses. The correspondence therefore holds after every interaction.

It follows that the number of active agents in our protocol is always equal to the number of black agents in the leader-election protocol. The analysis in \cite{DBLP:journals/dc/AlistarhRV25} shows that, with high probability, only one black agent remains after $O(\hittingtime(G) \cdot n \log n)$ interactions. Therefore, with high probability, exactly one active agent remains after the same number of interactions.
\end{proof}

\section{Tools from Probability Theory}
\begin{theorem}[Chernoff bound ]
Let $X_1,\dots,X_n$ be independent Bernoulli random variables, and define
$X = \sum_{i=1}^n X_i$ and
$\mu = \E{X}$.
For every $0 < \delta < 1$,
\begin{align*}
\pr{X \le (1-\delta)\mu}
\le
\left(\frac{e^{-\delta}}{(1-\delta)^{1-\delta}}\right)^\mu \le
\exp\!\left(-\frac{\delta^2\mu}{2}\right).
\end{align*}
\end{theorem}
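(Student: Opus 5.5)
The statement to prove is the last one before the cut: the multiplicative Chernoff lower-tail bound. The plan is the standard exponential-moment (Bernstein--Chernoff) argument applied to the lower tail, followed by an elementary inequality to pass from the sharp form to the Gaussian-type form.

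First I treat the sharp bound. Fix $t>0$ and write $p_i=\E{X_i}$, so that $\mu=\sum_i p_i$. Since $x\mapsto e^{-tx}$ is decreasing, Markov's inequality applied to the nonnegative variable $e^{-tX}$ gives
\[\pr{X\le(1-\delta)\mu}=\pr{e^{-tX}\ge e^{-t(1-\delta)\mu}}\le e^{t(1-\delta)\mu}\,\E{e^{-tX}}.\]
By independence the expectation factors into $\prod_i\E{e^{-tX_i}}$. Each factor equals $1-p_i(1-e^{-t})$. Using $1+y\le e^{y}$, each factor is at most $\exp(p_i(e^{-t}-1))$, so the product is at most $\exp(\mu(e^{-t}-1))$.

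Next I optimize in $t$. Choosing $t=-\ln(1-\delta)>0$, which is valid since $0<\delta<1$, gives $e^{-t}=1-\delta$. The bound then becomes $\exp(-\delta\mu)\,(1-\delta)^{-(1-\delta)\mu}$, which is exactly $\bigl(e^{-\delta}/(1-\delta)^{1-\delta}\bigr)^\mu$.

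For the second inequality it suffices to show that $-\delta-(1-\delta)\ln(1-\delta)\le-\delta^2/2$ on $(0,1)$. Equivalently, I show $(1-\delta)\ln(1-\delta)\ge-\delta+\delta^2/2$. Let $f(\delta)=(1-\delta)\ln(1-\delta)+\delta-\delta^2/2$. Then $f(0)=0$, and $f'(\delta)=-\ln(1-\delta)-1+1-\delta=-\ln(1-\delta)-\delta$. Since $-\ln(1-\delta)=\sum_{k\ge1}\delta^k/k\ge\delta$, we get $f'\ge0$, so $f\ge0$ on $[0,1)$. Raising to the power $\mu$ yields $\exp(-\delta^2\mu/2)$.

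No step is a real obstacle here. The only points needing care are the direction of the inequality under the decreasing map $e^{-tx}$ and the elementary calculus inequality in the last step.
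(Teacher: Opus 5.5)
Your argument is correct and is the standard exponential-moment proof. Markov's inequality applied to $e^{-tX}$, the factorization by independence together with $1+y\le e^{y}$, the choice $t=-\ln(1-\delta)$, and the check that $(1-\delta)\ln(1-\delta)\ge-\delta+\delta^2/2$ on $[0,1)$ are all carried out soundly. The paper gives no proof of its own here---the bound is simply listed among the standard probabilistic tools and used as a black box---so your derivation supplies exactly the textbook argument it relies on.
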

\begin{theorem}[Paley--Zygmund inequality]
\label{paley}
Let \(Z \ge 0\) be a random variable with finite variance.
Then, for every $\theta \in [0,1]$,
\[
\pr{Z > \theta \cdot \mathbb{E}[Z]}
\ge
(1-\theta)^2
\cdot \frac{\E{Z}^2}{\E{Z^2}} .
\]
\end{theorem}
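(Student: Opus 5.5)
The plan is the standard split of $\E{Z}$ according to whether $Z$ lies above or below the threshold $\theta\,\E{Z}$, followed by Cauchy--Schwarz on the upper part.

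First I dispose of the degenerate cases. If $\E{Z}=0$, then $Z=0$ almost surely. The right-hand side is then read as $0$, either because the ratio $\E{Z}^2/\E{Z^2}$ is interpreted as $0$ or because the statement is taken to be vacuous, so the inequality holds. If $\theta=1$, the right-hand side is $0$ and there is nothing to prove. So from now on I assume $\E{Z}>0$, which forces $\E{Z^2}>0$, and $\theta\in[0,1)$. Finite variance guarantees that both $\E{Z}$ and $\E{Z^2}$ are finite, so every manipulation below is legitimate.

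Next, let $A$ be the event $\{Z>\theta\,\E{Z}\}$ and decompose
\[
\E{Z}=\E{Z\,\1_{\neg A}}+\E{Z\,\1_{A}}.
\]
The first term is at most $\theta\,\E{Z}$. This uses $Z\ge 0$ together with $Z\le\theta\,\E{Z}$ on $\neg A$. For the second term, Cauchy--Schwarz gives
\[
\E{Z\,\1_A}\le\sqrt{\E{Z^2}\,\E{\1_A^2}}=\sqrt{\E{Z^2}\,\pr{A}}.
\]
Combining the two bounds yields
\[
(1-\theta)\,\E{Z}\le\sqrt{\E{Z^2}\,\pr{A}}.
\]

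Finally, the left-hand side is nonnegative because $\theta\le 1$. I square both sides and divide by $\E{Z^2}>0$, which gives exactly
\[
\pr{Z>\theta\,\E{Z}}\ge(1-\theta)^2\,\frac{\E{Z}^2}{\E{Z^2}}.
\]
The only real point of care is handling these edge cases: the sign of $1-\theta$ before squaring, and the zero-mean case. There is no substantive obstacle. For the application in \cref{lem:timewhatistime} we only need $\theta=0$, where the bound becomes $\pr{Z>0}\ge\E{Z}^2/\E{Z^2}$.
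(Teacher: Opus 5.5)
Your proof is correct. It is the standard textbook argument: split $\E{Z}$ on the event $\{Z>\theta\,\E{Z}\}$, bound the lower part by $\theta\,\E{Z}$ and the upper part by Cauchy--Schwarz, then square, which is legitimate because $1-\theta\ge 0$. The paper states this inequality as a standard tool without proof, so there is nothing to compare against. Your remark that $\E{Z}=0$ turns the ratio into $0/0$ points to a small imprecision in the statement itself, not a gap in your argument.
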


\printbibliography

\end{document}